\documentclass[11pt]{article}
\usepackage[margin=1in]{geometry}
\usepackage{graphicx,tikz, color}

\usepackage{soul}
\usepackage{amsmath,amssymb,amsthm}
\usepackage{thm-restate}

\usepackage[table]{xcolor}
\definecolor{linkblue}{RGB}{0,70,140}
\definecolor{citegreen}{RGB}{0,110,90}
\definecolor{urlblue}{RGB}{0,90,160}

\usepackage[colorlinks=true,
            linkcolor=linkblue,
            citecolor=citegreen,
            urlcolor=urlblue]{hyperref}
            
\usepackage[capitalise,nameinlink]{cleveref}

\usepackage{enumerate}
\usepackage{framed}
\usepackage{ifthen}
\usepackage{pgf}

\usepackage{booktabs}
 
\usepackage{tabularx}

\usepackage[colorinlistoftodos,prependcaption,textsize=tiny]{todonotes}

\usepackage{pgfpages}

\usepackage{tcolorbox}
\tcbuselibrary{theorems}
\usepackage{thmtools}
\usepackage{thm-restate}

\usepackage{algorithm}
\usepackage{algpseudocode}
\usepackage{enumitem}
\usepackage[square,numbers]{natbib}
\usepackage{appendix}
\usepackage{soul}

\newtheorem{theorem}{Theorem}
\newtheorem*{theorem*}{Theorem}

\newtheorem{corollary}{Corollary}

\newtheorem{definition}{Definition}

\newtheorem{lemma}{Lemma}

\newcommand{\LOCAL}{\mathsf{LOCAL}}

\DeclareMathOperator{\bad}{bad}

\DeclareMathOperator{\poly}{poly}
\DeclareMathOperator{\polylog}{polylog}

\DeclareMathOperator{\dist}{dist}

\definecolor{shadecolor}{gray}{0.875}

\crefname{appendix}{appendix}{appendices}
\Crefname{appendix}{Appendix}{Appendices}

\allowdisplaybreaks[4]

\title{Introvert Clustering for Distributed Graph Algorithms}
\author{Yi-Jun Chang\footnote{National University of Singapore. ORCID: 0000-0002-0109-2432. Email: cyijun@nus.edu.sg}  \and Nima Dolatabadi\footnote{University of Copenhagen. Supported by VILLUM Foundation grant 54451, Basic Algorithms Research Copenhagen (BARC). Most of this work was completed while the author was at the National University of Singapore. ORCID: 0009-0000-0928-7499. Email: nima.dolatabadi@di.ku.dk}}
\date{}

\begin{document}

\maketitle

\begin{abstract}
We introduce a new graph decomposition primitive that we call
\emph{introvert clustering}.
It strengthens standard low-diameter clustering with an additional local
guarantee: every clustered vertex keeps at least a
$\left(\frac12-\varepsilon\right)$-fraction of its relevant neighbors inside its
own cluster---hence the name \emph{introvert}.
Repeatedly applying this primitive to the remaining edges yields a layered
introvert network decomposition with $O(\log n)$ layers and weak diameter
$O(\log n)$.

We give two applications of this decomposition in the $\LOCAL$ model.
First, for every constant $\varepsilon>0$, we obtain a
$\widetilde O(\log^2 n)$-round deterministic algorithm for list
$\left(\frac32+\varepsilon\right)\Delta$-edge coloring on graphs of maximum degree
$\Delta\geq\Delta_0(\varepsilon)$.
For bipartite graphs, the result holds for all $\Delta$.
Second, for every constant $0<\varepsilon<1/4$, we obtain a
$\widetilde O(\log^2 n)$-round deterministic algorithm for computing a
$\left(\frac14-\varepsilon\right)$-locally balanced cut, in which every vertex has
at least a $\left(\frac14-\varepsilon\right)$-fraction of its neighbors on the
opposite side.

The algorithms resulting from the decomposition are remarkably simple.
For edge coloring, we process the layers in reverse order and color each
cluster; for locally balanced cut, we process them in forward order and
compute a locally maximum cut inside each cluster.
The introvert guarantee is what makes these simple procedures work beyond
the usual greedy regime of network decomposition.

We construct the layered decomposition in $O(\log^2 n)$ randomized
rounds by combining the Miller--Peng--Xu low-diameter clustering with a
simple trimming procedure.
We also give a deterministic $\widetilde O(\log^2 n)$-round construction
through a white-box adaptation of the recursive network decomposition
algorithm of Ghaffari and Grunau [FOCS 2024].
\end{abstract}
\thispagestyle{empty}
\newpage
\bigskip
\tableofcontents
\bigskip
\thispagestyle{empty}

\newpage
\pagenumbering{arabic}

\section{Introduction}

We work in the standard $\LOCAL$ model of distributed computing~\cite{linial}.
The communication network is a simple, undirected, $n$-vertex connected graph $G=(V,E)$.
Each vertex hosts a processor with a unique identifier of
$O(\log n)$ bits.
Computation proceeds in synchronous rounds.
In each round, every vertex may perform arbitrary local computation and
exchange messages of unbounded size with each of its neighbors.
Initially, each vertex knows its own identifier and its incident edges,
together with the input associated with them; for example, in the list
edge coloring problem, the endpoints of an edge know its list of available
colors.
The round complexity of an algorithm is the number of communication
rounds until all vertices have produced their outputs.

We consider both deterministic and randomized algorithms.
In a randomized algorithm, vertices have access to private random bits,
and unless stated otherwise, the algorithm is required to succeed with
high probability, meaning with probability at least $1-1/\poly(n)$.
Throughout the paper, $n$ denotes the number of vertices of the underlying
graph, and $\widetilde O(\cdot)$ suppresses $\poly(\log\log n)$ factors.

\subsection{List Edge Coloring}
Edge coloring is one of the classic local symmetry-breaking problems in
distributed computing, alongside maximal independent set, maximal
matching, and vertex coloring.
Given a graph $G=(V,E)$ of maximum degree $\Delta$, the goal is to assign
colors to the edges so that any two edges sharing an endpoint receive
different colors.
Vizing's celebrated theorem~\cite{Vizing64} guarantees that every simple
graph admits a proper edge coloring with at most $\Delta+1$ colors.
In the distributed setting, however, the complexity of edge coloring
depends crucially on how many colors are available.

We study the more general list version of the problem.
In \emph{list $k$-edge coloring}, every edge $e$ is given a list $L(e)$
of at least $k$ available colors, and the goal is to compute a proper edge
coloring in which each edge receives a color from its own list.
Ordinary $k$-edge coloring is the special case in which all edges have
the same list of $k$ colors.

\paragraph{The greedy threshold.}
The palette size $2\Delta-1$ forms a natural threshold for distributed edge
coloring.
Indeed, in any partial edge coloring, an uncolored edge is adjacent to at
most $2\Delta-2$ already colored edges and hence can always be colored
greedily if $2\Delta-1$ colors are available.
This regime has been studied extensively
\cite{PanconesiS97,PanconesiRizzi01,EPS15,GhaffariSu17,
GhaffariHKMSU17,FGK17,GHK18,Harris19,Kuhn20,BalliuKO20,BBKO22}.
A major breakthrough was the work of Fischer, Ghaffari, and
Kuhn~\cite{FGK17}, which gave the first deterministic
$\poly(\log n)$-round algorithm using exactly $2\Delta-1$ colors;
moreover, their algorithm already works for list edge coloring.
A recent line of work progressively improved the dependence on
$\Delta$~\cite{Kuhn20,BalliuKO20,BBKO22}, culminating in the
$\poly(\log\Delta)+O(\log^*n)$-round deterministic algorithm of
Balliu, Brandt, Kuhn, and Olivetti~\cite{BBKO22}.

\paragraph{Below the greedy threshold.}
The picture changes fundamentally below $2\Delta-1$.
Chang, He, Li, Pettie, and Uitto~\cite{journals/talg/ChangHLPU20}
showed that even $(2\Delta-2)$-edge coloring requires
$\Omega(\log_\Delta n)$ deterministic rounds, even on trees.

The first deterministic polylogarithmic-round algorithms substantially
below this threshold were given by Ghaffari, Kuhn, Maus, and
Uitto~\cite{GKMU18}.
They obtained a $3\Delta/2$-edge coloring in
$\poly(\Delta,\log n)$ rounds.
The $3\Delta/2$ regime has since received considerable attention.
Brandt, Maus, Narayanan, Schager, and Uitto~\cite{BrandtMNSU25}
substantially improved the complexity using new local algorithms based on
Hall's theorem, obtaining a $3\Delta/2$-edge coloring in
$O(\Delta^2\log n)$ rounds and a
$(\frac32+\varepsilon)\Delta$-edge coloring in
$\widetilde O(\varepsilon^{-2}\log^2\Delta\log n)$ rounds.
More recently, Maus, Nolin, and Schager~\cite{MausNS26} improved the latter
bound to
$O(\varepsilon^{-1}\log^2\Delta\log n+\varepsilon^{-2}\log n)$ rounds.

There has also been substantial progress with palettes smaller than
$3\Delta/2$.
Ghaffari, Kuhn, Maus, and Uitto~\cite{GKMU18} gave an
$(1+\varepsilon)\Delta$-edge coloring in
$\poly(\varepsilon^{-1},\log n)$ rounds when
$\Delta=\Omega(\varepsilon^{-1}\log(1/\varepsilon)\log n)$.
A major step toward Vizing's bound was made by Su and Vu~\cite{SuVu19},
who gave a randomized $(\Delta+2)$-edge coloring in
$\poly(\Delta,\log n)$ rounds.
Bernshteyn~\cite{Bernshteyn22} subsequently obtained the first
deterministic $\poly(\Delta,\log n)$-round algorithm using the optimal
$\Delta+1$ colors.
The dependence on $n$ was later
improved~\cite{conf/stoc/Christiansen23,BernshteynDhawan25}.

Very recently, de Vos, Maus, and Blikstad~\cite{DeVosMB26} obtained
deterministic $\LOCAL$ algorithms for
$(1+\varepsilon)\Delta+O(\sqrt{\log n})$-edge coloring, including an
$\widetilde O(\log^2 n)$-round algorithm.

\paragraph{List edge coloring.}
The situation is different for the more general list edge coloring problem.
At the greedy threshold $2 \Delta - 1$, many deterministic algorithms already work for list edge coloring; see, e.g., \cite{FGK17,GHK18,Harris19,Kuhn20,BalliuKO20}.
Below the greedy threshold, however, much less is known.

On the existential side, Kahn~\cite{Kahn96} proved that, for every
$\varepsilon>0$, there exists a constant $\Delta_0(\varepsilon)$ such that
every graph of maximum degree $\Delta\geq\Delta_0(\varepsilon)$ admits a
list $(1+\varepsilon)\Delta$-edge coloring.
More recently, Bonamy, Delcourt, Lang, and Postle~\cite{BonamyDLP24}
proved a local strengthening: if the minimum degree is at least
$\ln^{25}\Delta$, then lists of size
$(1+\varepsilon)\max\{\deg(u),\deg(v)\}$ suffice for every edge
$\{u,v\}$.
This local form will be an important ingredient in our algorithm.

Randomized distributed list edge coloring below the greedy threshold is
also known.
For every fixed $\varepsilon>0$, Elkin, Pettie, and Su~\cite{EPS15}
gave a randomized list $(1+\varepsilon)\Delta$-edge coloring algorithm
for $\Delta\geq\Delta_0(\varepsilon)$ with round complexity
$O\left(\left\lceil\frac{\log n}{\Delta^{1-\gamma}}\right\rceil
+\log^*\Delta\right)$ for any fixed constant $\gamma>0$.
Chang, He, Li, Pettie, and Uitto~\cite{journals/talg/ChangHLPU20}
further obtained randomized $(1+\varepsilon)\Delta$-edge coloring
algorithms for non-constant $\varepsilon$, but for ordinary rather than
list edge coloring.

Our main result is a new deterministic list edge coloring algorithm below the
greedy threshold.

\begin{restatable}[List edge coloring]{theorem}{mainthm}
\label{thm:intro-main}
For every constant $\varepsilon>0$, there exists a constant
$\Delta_0=\Delta_0(\varepsilon)$ such that list
$(\frac32+\varepsilon)\Delta$-edge coloring on graphs of maximum degree
$\Delta\geq\Delta_0$ can be solved in $O(\log^2 n)$ randomized rounds
with high probability and in $\widetilde O(\log^2 n)$ deterministic
rounds in the $\LOCAL$ model.
For bipartite graphs, the result holds for all $\Delta$.
\end{restatable}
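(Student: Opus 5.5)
The plan follows the abstract: build a layered introvert network decomposition of the edges, then process the layers in reverse order and color each cluster offline with a local list edge coloring theorem. Concretely, I apply introvert clustering repeatedly to the remaining edge set, with a small parameter $\varepsilon'$ depending on $\varepsilon$. This gives layers $\mathcal{L}_1,\dots,\mathcal{L}_k$ with $k=O(\log n)$, clusters of weak diameter $O(\log n)$, and each edge assigned to the layer whose cluster contains both endpoints. The introvert guarantee is what matters: when layer $i$ is formed, every vertex $v$ in a cluster $C$ of that layer has at least a $(\frac12-\varepsilon')$-fraction of its still-remaining incident edges going inside $C$. The edges of $v$ that survive to later layers are therefore at most a $(\frac12+\varepsilon')$-fraction of $v$'s remaining degree at layer $i$. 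The construction takes $O(\log^2 n)$ randomized rounds via Miller--Peng--Xu plus trimming, and $\widetilde O(\log^2 n)$ deterministic rounds via the adapted Ghaffari--Grunau recursion. Both bounds are taken as given from the decomposition results.

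Coloring proceeds from layer $k$ down to layer $1$. In phase $i$, every cluster of layer $i$ gathers its topology and lists in $O(\log n)$ rounds and colors its internal layer-$i$ edges $E_C$. Before this, it removes from each list every color already used on incident edges of later layers. Since clusters of one layer are vertex-disjoint, they do not conflict, and the total time is $O(\log n)\cdot O(\log n)$.

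The key counting step is as follows. Fix $v$ in cluster $C$ of layer $i$, and let $d$ be its remaining degree at that time. Its internal degree is $d_{\mathrm{in}}\ge(\frac12-\varepsilon')d$, and it has at most $(\frac12+\varepsilon')d\le(\frac12+\varepsilon')\Delta$ already-colored edges from later layers. For an edge $uv\in E_C$, the surviving list therefore has size at least
\[
\left(\tfrac32+\varepsilon\right)\Delta-2\left(\tfrac12+\varepsilon'\right)\Delta=(\tfrac12+\varepsilon-2\varepsilon')\Delta .
\]
This uniform bound is not enough on its own; the argument has to use local degrees. Writing $\Delta\ge d_v$, the colored edges at $v$ number at most $d_v-d_{\mathrm{in}}(v)$. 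So the residual list has size at least $(\frac32+\varepsilon)\Delta-(d_u-d_{\mathrm{in}}(u))-(d_v-d_{\mathrm{in}}(v))$. Using $d-d_{\mathrm{in}}\le\frac{1+2\varepsilon'}{1-2\varepsilon'}d_{\mathrm{in}}$ together with $d\le\Delta$, one checks the following. If $M=\max\{d_{\mathrm{in}}(u),d_{\mathrm{in}}(v)\}$, then the residual list is at least $(1+\varepsilon/2)M$, provided $\varepsilon'\ll\varepsilon$.

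To verify this, split into two cases. If $M\ge\Delta/2$, the previous paragraph's bound of roughly $(\frac12+\varepsilon)\Delta$ falls short, so the bound must instead come from each endpoint's own external degree: each endpoint contributes at most $\Delta-d_{\mathrm{in}}$. This gives $(\frac32+\varepsilon)\Delta-2\Delta+d_{\mathrm{in}}(u)+d_{\mathrm{in}}(v)$. That is weak when $d_{\mathrm{in}}$ is small but $\Delta$ is large, which is exactly where introversion helps, because then $d\le 2.1\,d_{\mathrm{in}}$. Combining the two constraints per endpoint should yield the local bound. I expect this inequality juggling to be the main obstacle. It is exactly where $\frac32$ arises, with the tight case at $d_{\mathrm{in}}=d/2$ and $d=\Delta$.

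With the local bound in hand, a list edge coloring of $E_C$ exists offline. In the general case this follows from the local Kahn-type theorem of Bonamy--Delcourt--Lang--Postle, lists $(1+\varepsilon)\max\{\deg u,\deg v\}$, which requires $\Delta\ge\Delta_0$. The minimum-degree hypothesis $\ln^{25}\Delta$ must also be handled. I would handle it by first coloring low-internal-degree edges greedily, since those have list slack of order $\Delta$ against degree $\ll\Delta$, or by padding with dummy structure. For bipartite graphs, Galvin's theorem in its local (kernel/Borodin--Kostochka--Woodall) form, lists of size $\max\{d_u,d_v\}$, applies for every $\Delta$. Since $\LOCAL$ allows unbounded local computation, each cluster leader simply computes such a coloring.
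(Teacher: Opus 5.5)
Your architecture is the paper's: an $(\varepsilon',O(\log n),O(\log n))$-layered introvert decomposition, reverse-order processing with residual lists, Bonamy--Delcourt--Lang--Postle inside each cluster, and Borodin--Kostochka--Woodall in the bipartite case. Your target inequality $|L_i(uv)|\ge(1+\frac{\varepsilon}{2})M$ is also exactly the one the paper proves.

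\textbf{The gap.} This inequality is the heart of the theorem, and you leave it unproved. The case split you actually write down cannot prove it, because in each case you apply the same bound to both endpoints. In the case $M\ge\Delta/2$, charging both endpoints $\Delta-d_{\mathrm{in}}$ gives $(\varepsilon-\frac12)\Delta+d_{\mathrm{in}}(u)+d_{\mathrm{in}}(v)$. For $d_u=d_{\mathrm{in}}(u)=\Delta$ and $d_{\mathrm{in}}(v)=1$ this is only $(\frac12+\varepsilon)\Delta+1$. The uniform bound $(\frac12+\varepsilon-2\varepsilon')\Delta$ is no better, and both fall short of $(1+\frac{\varepsilon}{2})\Delta$.

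\textbf{The fix.} What ``combining the two constraints per endpoint'' has to mean is an asymmetric charge. If $u$ attains $M$, bound the colored edges at $u$ by its exact external degree $d_u-M\le\Delta-M$. Bound those at $v$ only by the introvert bound $(\frac12+\varepsilon')\Delta$. Then
\[
|L_i(uv)|\;\ge\;\left(\tfrac32+\varepsilon\right)\Delta-(\Delta-M)-\left(\tfrac12+\varepsilon'\right)\Delta\;=\;M+(\varepsilon-\varepsilon')\Delta\;\ge\;(1+\varepsilon-\varepsilon')M,
\]
with no case analysis at all, and $\varepsilon'\le\varepsilon/2$ suffices. This is the paper's computation. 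It shows that $\frac32=1+\frac12$ comes from the full degree budget of one endpoint plus the introvert bound at the other.

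\textbf{The minimum-degree hypothesis.} This also has to be settled, since one of your two options fails. Greedily coloring low-degree edges \emph{first} can destroy the local list condition. Take $d_u=d_v=\Delta$ and $d_{\mathrm{in}}(u)=d_{\mathrm{in}}(v)=\Delta/2$, where $u$ has no low-degree neighbors and $v$ has $\Delta/2-1$ of them in $C$. After those edges are colored, $uv$ may keep only $\varepsilon\Delta+1$ colors, below the required $(1+\gamma)\Delta/2$ for small $\varepsilon$, while $u$ still has remaining degree $\Delta/2$. Moreover, $v$ now has degree $1$, so the minimum-degree problem reappears.

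Padding is what works, and it is what the paper does in \Cref{cor:local-list-coloring-no-min-degree}. Raise each vertex of degree below $d_0=\lceil\ln^{25}\Delta\rceil$ to degree exactly $d_0$ by edges into a private $(d_0+1)$-clique. Add a disjoint $(\Delta+1)$-clique so that the padded graph has maximum degree $\Delta\ge\Delta_0$; this matters because a cluster graph may itself have small maximum degree, where the theorem does not apply as stated. The padded hypothesis then additionally requires $|L_i(e)|\ge(1+\gamma)d_0$. Your uniform bound $(\frac12+\varepsilon-2\varepsilon')\Delta$ provides exactly this for large $\Delta$, so it is a needed second ingredient rather than a dead end.

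Greedy can alternatively be rescued by reversing the order. Peel the cluster graph down to its $d_0$-core, color the core first, then color the peeled edges greedily in reverse peeling order. By the asymmetric bound, each such edge $xw$, with $w$ peeled before $x$, has at least $d_{\mathrm{in}}(x)+(\varepsilon-\varepsilon')\Delta$ colors against fewer than $d_{\mathrm{in}}(x)+d_0$ colored neighbors.
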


For comparison, combining the recent results of de Vos, Maus, and
Blikstad~\cite{DeVosMB26} and Maus, Nolin, and
Schager~\cite{MausNS26} gives a deterministic
$(\frac32+\varepsilon)\Delta$-edge coloring algorithm in
$\widetilde O(\log^2 n)$ rounds.
Their results, however, concern ordinary edge coloring, where all edges
share a common palette.
Our result achieves the same round complexity for the
strictly more general list edge coloring problem, where each edge has
its own list of available colors.

To the best of our knowledge, no previous work gives a dedicated
deterministic distributed algorithm for list edge coloring below the
greedy threshold.
One can obtain such an algorithm indirectly by derandomizing the randomized
algorithm of Elkin, Pettie, and Su~\cite{EPS15} using deterministic network
decompositions~\cite{ghaffari_grunau_focs24}.
For any fixed constant $\gamma>0$, this gives
\(
\widetilde O\left(
\left\lceil\frac{\log n}{\Delta^{1-\gamma}}\right\rceil
\log^2 n
\right)
\)
rounds.
While this is $\widetilde O(\log^2 n)$ when
$\Delta\geq(\log n)^{1/(1-\gamma)}$, it becomes
$\widetilde O(\log^3 n)$ for constant $\Delta$.

By allowing lists of size $(\frac32+\varepsilon)\Delta$, we obtain
$\widetilde O(\log^2 n)$ rounds uniformly throughout the entire range
$\Delta\geq\Delta_0(\varepsilon)$.
Thus, at this palette size, list edge coloring matches the
$\widetilde O(\log^2 n)$ deterministic complexity currently achievable
for ordinary edge coloring.
Refer to \Cref{tab:edge-coloring-comparison} for a comparison with prior
work.

\begin{table}[ht!]
\centering
\caption{Comparison with prior work on edge coloring.}
\label{tab:edge-coloring-comparison}
\small
\setlength{\tabcolsep}{5pt}
\renewcommand{\arraystretch}{1.6}
\begin{tabular}{llll}
\toprule
Colors & Type & Rounds & Reference \\
\midrule

\multicolumn{4}{@{}l}{\emph{Ordinary edge coloring}} \\[2pt]

$(1+\varepsilon)\Delta+O(\sqrt{\log n})$
& Deterministic
& $O(\log^2 n)+\widetilde O(\log^2\Delta\log n)$
& \cite{DeVosMB26}
\\

$(1+\varepsilon)\Delta+O(\sqrt{\log n})$
& Deterministic
& $\widetilde O(\log^2 n)$
& \cite{DeVosMB26}
\\

$3\Delta/2$
& Deterministic
& $O(\Delta^9\polylog n)$
& \cite{GKMU18}
\\

$3\Delta/2$
& Deterministic
& $\widetilde O(\Delta^4\log^6 n)$
& \cite{Harris19}
\\

$3\Delta/2$
& Deterministic
& $O(\Delta^2\log n)$
& \cite{BrandtMNSU25}
\\

$(\frac32+\varepsilon)\Delta$
& Deterministic
& $\widetilde O(\varepsilon^{-2}\log^2\Delta\log n)$
& \cite{BrandtMNSU25}
\\

$(\frac32+\varepsilon)\Delta$
& Deterministic
& $O(\varepsilon^{-1}\log^2\Delta\log n
+\varepsilon^{-2}\log n)$
& \cite{MausNS26}
\\

$(\frac32+\varepsilon)\Delta$
& Deterministic
& $\widetilde O(\log^2 n)$
& \cite{MausNS26} + \cite{DeVosMB26}
\\

\midrule

\multicolumn{4}{@{}l}{
\emph{List edge coloring}
($\varepsilon>0$ and $\gamma>0$ are any constants,
$\Delta\geq\Delta_0(\varepsilon)$)
} \\[2pt]

$(1+\varepsilon)\Delta$
& Randomized
& $O\left(
\left\lceil\frac{\log n}{\Delta^{1-\gamma}}\right\rceil
+\log^*\Delta
\right)$
& \cite{EPS15}
\\

$(1+\varepsilon)\Delta$
& Deterministic
& $\widetilde O\left(
\left\lceil\frac{\log n}{\Delta^{1-\gamma}}\right\rceil
\log^2 n
\right)$
& \cite{EPS15,ghaffari_grunau_focs24}
\\

\rowcolor{gray!15}
$(\frac32+\varepsilon)\Delta$
& Randomized
& $O(\log^2 n)$
& \Cref{thm:intro-main}
\\

\rowcolor{gray!15}
$(\frac32+\varepsilon)\Delta$
& Deterministic
& $\widetilde O(\log^2 n)$
& \Cref{thm:intro-main}
\\

\bottomrule
\end{tabular}
\end{table}

\subsection{Locally Balanced Cut}
Our techniques also give a new result for the locally balanced cut problem.
A cut is \emph{locally maximum} if moving any single vertex to the other
side cannot increase the number of crossing edges, or equivalently, if
every vertex has at least half of its neighbors on the opposite side.
The problem is naturally amenable to local search: starting from an
arbitrary cut, repeatedly moving any vertex that improves the cut
eventually reaches a locally maximum cut.
It has been studied extensively in local search and algorithmic game
theory~\cite{SchafferY91,AngelBPW17}.

Perhaps surprisingly for such a fundamental problem, its distributed
complexity was largely open until very recently.
Balliu, Boudier, d'Amore, Kuhn, Olivetti, Schmid, and
Suomela~\cite{Balliu26Potential} gave the first nontrivial distributed
upper bound: an $O(\Delta^2\log^6 n)$-round randomized algorithm and an
$O(\Delta^2) \cdot \widetilde{O}(\log^8 n)$-round deterministic algorithm via derandomization.
They also proved an $\Omega(\min\{\Delta,\sqrt n\})$ lower bound, even in
the quantum $\LOCAL$ model.
Thus, a polynomial dependence on $\Delta$ is inherent for finding an exact locally
maximum cut.

We show that this dependence can be avoided if we relax the local
optimality requirement.
For $0<\alpha\leq 1/2$, we call a cut \emph{$\alpha$-locally balanced}
if every vertex $v$ has at least $\alpha\deg(v)$ neighbors on the
opposite side.
Thus, a $1/2$-locally balanced cut is precisely a locally maximum cut.
Unlike a global approximation to maximum cut, this relaxation 
provides a guarantee at every individual vertex.

\begin{restatable}[Locally balanced cut]{theorem}{approxcutthm}
\label{thm:intro-cut}
For every constant $0<\varepsilon<1/4$, a
$\left(\frac14-\varepsilon\right)$-locally balanced cut can be computed in
$O(\log^2 n)$ randomized rounds with high probability and in
$\widetilde O(\log^2 n)$ deterministic rounds in the $\LOCAL$ model.
\end{restatable}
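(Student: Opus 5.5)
The approach is to invoke the layered introvert network decomposition (constructed in $O(\log^2 n)$ randomized rounds, or $\widetilde O(\log^2 n)$ deterministic rounds, with parameter $\varepsilon' = 2\varepsilon$), then process the $O(\log n)$ layers in forward order and, inside each cluster, compute a locally maximum cut relative to the already-fixed vertices. I will use the decomposition in the following form. The vertices are partitioned into layers $V_1,\dots,V_L$ with $L=O(\log n)$. Each layer $V_i$ is partitioned into clusters of weak diameter $O(\log n)$. Call the \emph{relevant} neighbors of $v\in V_i$ its neighbors in $V_i\cup\dots\cup V_L$, i.e.\ those not yet clustered when layer $i$ is formed. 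The introvert guarantee says that at least a $(\frac12-\varepsilon')$-fraction of the relevant neighbors of $v$ lie in $v$'s own cluster. If the decomposition in the paper is edge-based (each layer clusters the remaining edges), I would first restate the guarantee in this vertex form; this translation is the step I expect to need the most care.

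\textbf{Algorithm.} For $i=1,\dots,L$, each cluster $C$ of layer $i$ does the following. It gathers to a leader its own topology and the sides of all neighbors of $C$ lying in $V_1\cup\dots\cup V_{i-1}$, which are fixed already. This takes $O(\log n)$ rounds by the weak diameter bound; in $\LOCAL$, overlaps of the Steiner trees of different clusters cause no congestion problems. The leader then runs sequential local search on the sides of the vertices of $C$. The potential is the number of cut edges among edges inside $C$ plus edges from $C$ to fixed vertices. Each improving move raises this potential, so the search terminates. At termination, every $v\in C$ has at least half of its neighbors in $C\cup V_{<i}$ on the opposite side. Clusters of the same layer act independently: a neighbor of $v$ in another cluster of the same layer, or in a later layer, is simply treated as uncontrolled. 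The total cost is $L\cdot O(\log n)=O(\log^2 n)$ rounds on top of the decomposition.

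\textbf{Analysis.} Fix $v\in V_i$, let $a$ be its number of neighbors in $V_{<i}$, and let $b=\deg(v)-a$ be its number of relevant neighbors. By the introvert property, $v$ has at least $(\frac12-\varepsilon')b$ neighbors in its own cluster. Sides of $v$'s neighbors are never changed after $v$'s cluster is processed, except for neighbors in later layers or other same-layer clusters, which were never counted. So the local optimality at $v$'s stage gives at least
\[
\tfrac12\bigl(a+(\tfrac12-\varepsilon')b\bigr)\;\ge\;\tfrac12(\tfrac12-\varepsilon')\deg(v)\;=\;(\tfrac14-\varepsilon)\deg(v)
\]
neighbors on the opposite side, which is exactly the required bound.

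\textbf{Main obstacle.} The main point to verify is that the decomposition's notion of ``relevant neighbors'' really is (or can be made) the set of neighbors not clustered in strictly earlier layers. In particular, same-layer neighbors in other clusters must count as relevant, so that the introvert fraction is measured against all of $b$. If the paper's layers are defined by peeling edges, I would argue that each vertex's edges to earlier layers are already ``fixed'' when its cluster is processed, and then repeat the same averaging argument.
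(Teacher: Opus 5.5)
Your algorithm matches the paper's: forward order, a locally maximal cut inside each cluster with earlier assignments frozen, and $\varepsilon'=2\varepsilon$. Your closing arithmetic matches too. But the step you defer is exactly where the proof lives, and the vertex form you plan to translate into is not what the decomposition provides. The paper's decomposition is edge-based. $E_i$ is the set of intra-cluster edges of $\mathcal C_i$ in $G_i=(V,E_i\cup\cdots\cup E_\ell)$. A vertex may belong to clusters in several layers, and the introvert property compares $|N_{G_i}(v)\cap C|$ with $\deg_{G_i}(v)$. Suppose you let $V_i$ be the vertices first clustered at layer $i$. Then your restated guarantee $|N_G(v)\cap C\cap V_i|\ge(\frac12-\varepsilon')|N_G(v)\cap V_{\ge i}|$ does not follow. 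A cluster $C\in\mathcal C_i$ may contain vertices of $V_{<i}$, and nothing prevents all of $v$'s in-cluster neighbors from being such vertices while $v$ still has neighbors in $V_{\ge i}$. Relatedly, running your local search over all vertices of a cluster of $\mathcal C_i$ would flip vertices assigned in earlier layers and void their guarantees; only $C\cap V_i$ may move.

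The missing idea is the paper's one-line key step. If $i$ is the first layer in which $v$ is clustered, then no edge at $v$ lies in $E_1\cup\cdots\cup E_{i-1}$, because such an edge is intra-cluster in its layer and $v$ would have been clustered there. Hence $\deg_{G_i}(v)=\deg_G(v)$, and the introvert property gives $|N_G(v)\cap C|\ge(\frac12-2\varepsilon)\deg_G(v)$ directly. So your worry about a vertex's incident edges in earlier layers is vacuous. Same-layer neighbors in other clusters are also counted automatically, since the guarantee is against the full degree. Every vertex of $C$ has first clustering layer at most $i$, so $N_G(v)\cap C\subseteq (C\cap V_i)\cup V_{<i}$. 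Your local optimality therefore already yields at least $\frac12|N_G(v)\cap C|\ge(\frac14-\varepsilon)\deg_G(v)$ opposite-side neighbors, with no need to split $\deg(v)$ into $a$ and $b$. The paper maximizes crossing edges only within $G_i[C]$; your extra potential term for edges from $C$ to fixed vertices outside $C$ is harmless but unnecessary.
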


Thus, relaxing the local guarantee from $1/2$ to
$\frac14-\varepsilon$ changes the complexity substantially: the
polynomial dependence on $\Delta$ inherent for locally maximum cut
disappears, and the problem becomes solvable in polylogarithmic time
independently of $\Delta$.

At the same time, the relaxed problem remains nontrivial.
A closely related problem is \emph{$k$-partial $2$-coloring}, in which
the vertices are colored with two colors so that every vertex has at
least $k$ neighbors of the opposite color.
Balliu, Hirvonen, Lenzen, Olivetti, and Suomela~\cite{BalliuHLOS19}
showed that $2$-partial $2$-coloring requires $\Omega(\log n)$
deterministic rounds and $\Omega(\log\log n)$ randomized rounds on
$d$-regular trees, for every constant $d\geq2$.
These bounds imply the same lower bounds for every constant
$\alpha>0$: choosing a sufficiently large constant $d$ with
$\alpha d>1$, every $\alpha$-locally balanced cut on a $d$-regular
graph is also a $2$-partial $2$-coloring.

Consequently, for every constant $0<\varepsilon<1/4$, the deterministic
complexity of
$\left(\frac14-\varepsilon\right)$-locally balanced cut lies between
$\Omega(\log n)$ and $\widetilde O(\log^2 n)$.
See \Cref{tab:locally-balanced-cut} for a comparison with prior work.

\begin{table}[ht!]
\centering
\caption{Comparison of our results with prior work on locally balanced cuts.}
\label{tab:locally-balanced-cut}
\small
\setlength{\tabcolsep}{6pt}
\renewcommand{\arraystretch}{1.2}
\begin{tabular}{llll}
\toprule
Local guarantee & Type & Rounds & Reference \\
\midrule

Any constant $\alpha>0$
& Randomized lower bound
& $\Omega(\log\log n)$
& \cite{BalliuHLOS19}
\\

Any constant $\alpha>0$
& Deterministic lower bound
& $\Omega(\log n)$
& \cite{BalliuHLOS19}
\\

\midrule

$\alpha=1/2$
& Quantum lower bound
& $\Omega(\min\{\Delta,\sqrt n\})$
& \cite{Balliu26Potential}
\\

$\alpha=1/2$
& Randomized upper bound
& $O(\Delta^2\log^6 n)$
& \cite{Balliu26Potential}
\\

$\alpha=1/2$
& Deterministic upper bound
& $O(\Delta^2) \cdot \widetilde{O}(\log^8 n)$
& \cite{Balliu26Potential}
\\

\midrule

\rowcolor{gray!15}
$\alpha=1/4-\varepsilon$
& Randomized upper bound
& $O(\log^2 n)$
& \Cref{thm:intro-cut}
\\

\rowcolor{gray!15}
$\alpha=1/4-\varepsilon$
& Deterministic upper bound
& $\widetilde O(\log^2 n)$
& \Cref{thm:intro-cut}
\\

\bottomrule
\end{tabular}
\end{table}

\subsection{Our Technique: Introvert Clustering}

The common ingredient behind both of our results is a new clustering
primitive that we call \emph{introvert clustering}.
The main idea is to strengthen low-diameter clustering with a local
guarantee: every clustered vertex keeps almost half of its incident edges
inside its own cluster.
We first describe this primitive and its randomized and deterministic
constructions, and then explain its applications to list edge coloring and
locally balanced cuts.

\paragraph{Clustering terminology.}
Let $G=(V,E)$ be a graph.
For $S\subseteq V$, we write $G[S]$ for the subgraph of $G$ induced by
$S$, $N_G(v)$ for the set of neighbors of $v$ in $G$, and
$\deg_G(v)=|N_G(v)|$.

A \emph{clustering} of $G$ is a collection $\mathcal C$ of pairwise
disjoint nonempty subsets of $V$, which we call \emph{clusters}.
A vertex belonging to some cluster is \emph{clustered}, and all other
vertices are \emph{unclustered}.
An edge is \emph{intra-cluster} if its two endpoints belong to the same
cluster; all other edges are called \emph{inter-cluster}.
Thus, edges incident to unclustered vertices are also inter-cluster.

For a cluster $C\subseteq V$, its \emph{weak diameter} is
$\max_{u,v\in C}\dist_G(u,v)$, whereas its \emph{strong diameter} is
$\max_{u,v\in C}\dist_{G[C]}(u,v)$.
Thus, paths witnessing weak diameter may leave the cluster, while paths
witnessing strong diameter must stay inside the cluster.
Throughout this paper, we only require weak diameter.

\begin{definition}[Introvert clustering]
\label{def:introvert-clustering}
Let $G=(V,E)$ be a graph. Let $\varepsilon>0$, $\beta\in(0,1)$, and
 $D\geq 1$.
A clustering $\mathcal C$ of $G$ is an
\underline{$(\varepsilon,\beta,D)$-introvert clustering} if it satisfies the
following properties.
\begin{description}
    \item[Few inter-cluster edges.]
    At most $\beta|E|$ edges are inter-cluster.
    
    \item[Low diameter.]
    Every cluster $C\in\mathcal C$ has weak diameter at most $D$.

    \item[Introvert property.]
    For every $C\in\mathcal C$ and every $v\in C$,
    $|N_G(v)\cap C|\geq\left(\frac12-\varepsilon\right)\deg_G(v)$.
\end{description}
\end{definition}

If we drop the introvert property, the remaining two conditions are the
standard type of low-diameter clustering guarantee provided by the
Miller--Peng--Xu decomposition~\cite{MPX13SPAA}: the clusters have small
diameter and only a small fraction of the edges are inter-cluster.
The new requirement is local.
A standard low-diameter clustering may have few inter-cluster edges
overall while a particular vertex has almost all of its incident edges
leaving its cluster.
Introvert clustering rules out this behavior for every clustered vertex.

A single introvert clustering does not assign every edge to a cluster.
To cover all edges, we repeatedly apply introvert clustering to the edges
that remain after the previous layers.

\begin{definition}[Layered introvert network decomposition]
\label{def:layered-introvert}
An \ul{$(\varepsilon,\ell,D)$-layered introvert network decomposition} of a graph
$G=(V,E)$ consists of a partition of $E$ into $\ell$ disjoint edge sets $E = E_1 \cup \cdots \cup E_\ell$.
For each $i\in[\ell]$, define
$G_i=(V,E_i\cup E_{i+1}\cup\cdots\cup E_\ell)$.
The decomposition also specifies a clustering
$\mathcal C_i$ of $G_i$ for each $i\in[\ell]$ satisfying the following
properties.
\begin{description}
    \item[Layer assignment.]
    $E_i$ is exactly the set of intra-cluster edges of $\mathcal C_i$
    in $G_i$.

    \item[Low diameter.]
    Every cluster $C\in\mathcal C_i$ has weak diameter at most $D$
    in $G$.

    \item[Introvert property.]
    For every $C\in\mathcal C_i$ and every $v\in C$,
    $|N_{G_i}(v)\cap C|
    \geq\left(\frac12-\varepsilon\right)\deg_{G_i}(v)$.
\end{description}
\end{definition}

Intuitively, $G_i$ consists of the edges that remain at the beginning of
layer $i$.
Every edge is assigned to exactly one layer, whereas a vertex may belong
to clusters in several different layers.
The introvert property at layer $i$ is measured with respect to $G_i$,
rather than the original graph $G$.

\paragraph{Randomized construction.}
The randomized construction is simple.
We start with the low-diameter clustering of Miller, Peng, and
Xu~\cite{MPX13SPAA}, choosing its parameter so that the expected number
of inter-cluster edges is at most $2\varepsilon\beta|E|$.
We then repeatedly remove from each cluster any vertex that has fewer than
$\left(\frac12-\varepsilon\right)$ of its neighbors inside the cluster.
A counting argument shows that this trimming increases the number of
inter-cluster edges by a factor of at most $1/(2\varepsilon)$.
Hence, after trimming, at most $\beta|E|$ edges are inter-cluster in expectation.

Taking any constant $\beta \in (0,1)$ and repeating on the remaining edges gives
$O(\log n)$ layers with high probability.
Since each clustering takes $O(\log n)$ rounds and has weak diameter
$O(\log n)$, we obtain the following result.

\begin{restatable}[Randomized layered introvert decomposition] {theorem}{randintrovertdecomp} \label{thm:introvert-decomposition-rand} For every constant $\varepsilon \in (0,1/2)$, an $(\varepsilon,O(\log n),O(\log n))$-layered introvert network decomposition can be constructed in $O(\log^2 n)$ rounds with high probability. \end{restatable}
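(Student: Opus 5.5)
The construction proceeds layer by layer. In layer $i$ we run one round of ``MPX clustering plus trimming'' on the current graph $G_i$, producing a clustering $\mathcal C_i$. We let $E_i$ be its intra-cluster edges and set $G_{i+1}=G_i\setminus E_i$. The proof has three parts: (a) a single introvert clustering in $O(\log n)$ rounds, (b) a bound on the number of layers, and (c) a check of the layered properties.

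\emph{(a) One clustering.} Run the Miller--Peng--Xu decomposition~\cite{MPX13SPAA} on $G_i$ with parameter $\delta=2\varepsilon\beta$ for a fixed constant $\beta$, say $\beta=1/2$. Each vertex draws an exponential shift with rate $\delta$ and joins the center maximizing shift minus distance. This takes $O(\log n/\delta)=O(\log n)$ rounds. With high probability every cluster has radius $O(\log n)$ in $G_i$, hence weak diameter $O(\log n)$ in $G$. Each edge is cut with probability $O(\delta)$, so the expected number of inter-cluster edges is at most $2\varepsilon\beta|E_i|$ after adjusting constants.

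Next, trim. Repeatedly remove from its cluster (making it unclustered) any vertex $v$ with $|N_{G_i}(v)\cap C|<(\frac12-\varepsilon)\deg_{G_i}(v)$. Removal only shrinks clusters, so weak diameter is preserved.

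The key counting step charges the cut edges created by trimming to a potential. Let $\Phi$ be the number of inter-cluster edges. When $v$ is removed, the edges from $v$ into $C$ become cut, which adds fewer than $(\frac12-\varepsilon)d$ to $\Phi$, where $d=\deg_{G_i}(v)$. At the same time, more than $(\frac12+\varepsilon)d$ of $v$'s edges were already cut. Consider the potential $\Psi$ counting, over inter-cluster edges, the number of clustered endpoints. Removing $v$ turns its more than $(\frac12+\varepsilon)d$ already-cut edges from having a clustered endpoint $v$ to not, a loss of more than $(\frac12+\varepsilon)d$. It adds fewer than $(\frac12-\varepsilon)d$ new cut edges with one clustered endpoint. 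So $\Psi$ drops by more than $2\varepsilon d$, while $\Phi$ rises by at most $d/2$.

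Since $\Psi\le 2\Phi_0$ initially, the total increase in $\Phi$ is at most $\frac{1}{4\varepsilon}\cdot 2\Phi_0=\Phi_0/(2\varepsilon)$. The final count is therefore $O(\Phi_0/\varepsilon)$, and its expectation is at most $\beta|E_i|$ after tuning the MPX constant.

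This accounting is the delicate step: the bookkeeping must give a factor $O(1/\varepsilon)$ and must handle cascades. Tracking $\Psi$ as above handles cascades automatically.

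Trimming must also terminate in $O(\log n)$ rounds, yet a cascade might take as many steps as the cluster has vertices. My fix is to run trimming locally within each cluster. Since the weak diameter is $O(\log n)$, each cluster can be gathered at a leader in $O(\log n)$ rounds, which simulates the whole sequential trimming and broadcasts the result. The result is a valid $(\varepsilon,\beta,O(\log n))$-introvert clustering of $G_i$, and the introvert property holds by the stopping rule.

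\emph{(b) Number of layers.} $\mathbb E[|E_{i+1}|\mid E_i]\le\beta|E_i|$, so $\mathbb E|E_{\ell+1}|\le\beta^{\ell}n^2<n^{-c}$ for $\ell=O(\log n)$. By Markov's inequality, $E_{\ell+1}=\emptyset$ with high probability.

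A complication is that vertices cannot know globally when to stop. We simply run a fixed $\ell=C\log n$ layers. Any leftover edges would be a failure event, which has low probability.

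\emph{(c) Layered properties.} Layer assignment holds by definition, and the introvert property holds relative to $G_i$ because trimming uses $\deg_{G_i}$. A union bound over all layers gives the high-probability diameter bound. The total running time is $O(\log n)$ layers times $O(\log n)$ rounds, i.e.\ $O(\log^2 n)$.
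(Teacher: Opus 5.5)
Your proposal is correct and follows essentially the same route as the paper. The paper also runs MPX with parameter $2\varepsilon\beta$, trims by simulating the process at a cluster leader, and shows that trimming inflates the number of inter-cluster edges by only an $O(1/\varepsilon)$ factor. It then runs a fixed $\Theta(\log n)$ layers and applies Markov's inequality to the expected number of surviving edges.

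Your potential $\Psi$ repackages the paper's per-edge accounting $D_{\mathrm{out}}-D_{\mathrm{in}}\le 2U$. Bounding the per-step increase by $d/2$ rather than $(\frac12-\varepsilon)d$ yields $(1+\frac{1}{2\varepsilon})\Phi_0$ instead of the paper's $\Phi_0/(2\varepsilon)$. This is harmless, since retuning the MPX parameter recovers the $\beta$ bound, and even without retuning the factor $(2\varepsilon+1)\beta$ is below $1$ for $\beta=1/2$. One notational slip: in part (b), $E_i$ should denote the remaining edge set $E(G_i)$, not the layer's edge set.
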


\paragraph{Deterministic construction.}
We next explain the main idea behind our deterministic construction.
A standard network decomposition partitions the vertices into
low-diameter clusters and assigns colors to the clusters so that
adjacent clusters receive different colors.
Ghaffari and Grunau~\cite{ghaffari_grunau_focs24} recently gave a
deterministic algorithm that constructs such a decomposition with
$O(\log n)$ colors and $O(\log n)$ cluster diameter in
$\widetilde O(\log^2 n)$ rounds.

Their result does not directly give what we need.
Our decomposition assigns \emph{edges} to layers rather than vertices
to colors, and every clustered vertex must additionally satisfy the
introvert property.
We therefore adapt their recursive algorithm in a white-box manner.

At the base case, we strengthen their low-diameter clustering
procedure so that an arbitrarily small constant fraction of the edges
under consideration remain inter-cluster.
This gives enough slack to apply our trimming procedure while still
making constant-factor progress.

A second modification is needed because our introvert condition is
defined with respect to all edges that remain when a layer is formed.
In an ordinary network decomposition, once some vertices are deferred
to a later recursive call, they do not affect whether the clusters
formed by the current call are valid.
Here, however, an edge deferred to a later recursive call is still
present and can contribute to the degree of a vertex in a cluster
formed earlier.
We therefore keep track of all \emph{active edges}, namely, edges that
have not yet been assigned to any layer, and require every introvert
condition to hold with respect to the full active edge set.
Accordingly, we strengthen the invariant of the recursive calls to
also bound the number of active edges lying outside the set currently
handled by the recursion.
We show that the recursive construction can be carried out while
preserving this strengthened invariant.

Finally, since the objects handled by the recursive algorithm are
vertices whereas our layers consist of edges, we run the recursion on
the line graph.
A clustering of the line graph into pairwise non-adjacent clusters
translates naturally into a clustering of the vertices of
$G$, with essentially the same diameter, while every inter-cluster edge corresponds to an unclustered vertex in the line graph.

\begin{restatable}[Deterministic layered introvert decomposition] {theorem}{detintrovertdecomp} \label{thm:introvert-decomposition-det} For every constant $\varepsilon \in (0,1/2)$, an $(\varepsilon,O(\log n),O(\log n))$-layered introvert network decomposition can be constructed in $\widetilde O(\log^2 n)$ rounds deterministically. \end{restatable}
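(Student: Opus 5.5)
The plan is to reduce \Cref{thm:introvert-decomposition-det} to a single-layer statement, repeat it $O(\log n)$ times, and obtain the single-layer statement by adapting the recursive algorithm of Ghaffari and Grunau~\cite{ghaffari_grunau_focs24} on the line graph, followed by trimming. Concretely, I would first prove a deterministic analogue of one MPX-plus-trimming step.

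\emph{Target lemma.} For constants $\varepsilon\in(0,1/2)$ and $\beta\in(0,1)$, and any subgraph $H=(V,E_H)$ of $G$ (the active edges), an $(\varepsilon,\beta,O(\log n))$-introvert clustering of $H$ can be computed deterministically in $\widetilde O(\log n)$ rounds of $G$, with weak diameter measured in $G$.

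Given this lemma, the theorem follows by iteration. Set $G_1=G$. Let $E_i$ be the set of intra-cluster edges of the clustering of $G_i$, and let $G_{i+1}$ keep the remaining edges. Since $|E(G_{i+1})|\le\beta|E(G_i)|$, after $O(\log n)$ layers no edges remain. Layer assignment, low diameter and the introvert property (with respect to $G_i$) then hold by construction, and the total round count is $O(\log n)\cdot\widetilde O(\log n)=\widetilde O(\log^2 n)$.

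The lemma splits into a trimming step and a base clustering step.

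\emph{Trimming step.} Start from any clustering of $H$ with weak diameter $D$ and at most $2\varepsilon\beta|E_H|$ inter-cluster edges. Repeatedly uncluster every vertex $v$ with $|N_H(v)\cap C|<(\frac12-\varepsilon)\deg_H(v)$. I would show that the final number of inter-cluster edges is at most $\beta|E_H|$, by a charging argument using the potential $\Phi=$ number of inter-cluster edges.
\begin{itemize}
\item When $v$ is removed, more than $(\frac12+\varepsilon)\deg_H(v)$ of its edges are already inter-cluster, and at most $(\frac12-\varepsilon)\deg_H(v)$ edges become newly inter-cluster.
\item Charge the newly cut edges to the already-cut edges at $v$. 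Each already-cut edge has two endpoints, so it can be charged at most twice.
\item With a geometric-sum bookkeeping, the newly created inter-cluster edges total at most $(\frac{1}{2\varepsilon}-1)$ times the initial count.
\end{itemize}
This bound is used in the following direction: because trimming multiplies the inter-cluster count by at most roughly $1/(2\varepsilon)$, the base clustering may leave an $\Theta(\varepsilon\beta)$ fraction of edges inter-cluster and still meet the target $\beta$.

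Trimming only removes vertices, so diameters do not grow. Deterministically, trimming is a sequential peeling process. To keep it within $O(\log n)$ rounds, I would relax it to a threshold version: remove in parallel all violating vertices with a slightly stronger threshold $\frac12-\varepsilon/2$ during $O(\log n)$ phases. Alternatively, since the cluster has weak diameter $O(\log n)$, the leader can simulate the whole peeling of its cluster after gathering the cluster's neighborhood in $O(\log n)$ rounds. The latter is cleaner in $\LOCAL$ and is what I would use.

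\emph{Base clustering step.} I need a deterministic low-diameter clustering of the edges in which only a constant $\delta=2\varepsilon\beta$ fraction of edges is inter-cluster. I would run the Ghaffari--Grunau clustering on the line graph $L(H)$, where a clustering into pairwise non-adjacent clusters leaves a small fraction of line-graph vertices (that is, edges of $H$) unclustered. Their procedure loses roughly a $1/\log n$ fraction per step over $O(\log n)$ steps, giving a constant fraction. I would rescale their per-step loss to $\delta/\Theta(\log n)$ so that the total unclustered fraction is at most $\delta$, paying only a constant factor in rounds because $\delta$ is constant. Non-adjacent edge-clusters then map to vertex clusters of $H$ with essentially the same weak diameter, and each unclustered line-graph vertex is an inter-cluster edge.

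The main obstacle is that the recursion in~\cite{ghaffari_grunau_focs24} defers some vertices to later calls. In our setting, the deferred edges are still active, and they count in $\deg_H(v)$ for vertices clustered earlier. I would strengthen the recursive invariant. Each call on an edge set $F$ would guarantee that the edges it leaves unclustered, together with active edges outside $F$ incident to its clustered vertices, form at most a $\delta$ fraction (suitably weighted) of the edges it is responsible for. Trimming would then be applied once at the end, with respect to the full active set $E_H$, which keeps the introvert property correct.

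The delicate point is to verify that, in the merge step of their recursion, the additional weight from outside edges telescopes rather than compounding over the $O(\log\log n)$ recursion levels. If it compounds, I would budget the losses per level as $\delta/2^{j}$ at level $j$ to keep the total bounded by a constant.
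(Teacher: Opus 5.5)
There is a genuine gap. It lies in the architecture of your reduction, not in the trimming step: your charging argument is fine and gives the same $(\frac{1}{2\varepsilon}-1)U$ bound on newly cut edges as the paper, and you correctly spot the line-graph translation and the active-edge issue. The problem is that everything hinges on your target lemma, a deterministic $(\varepsilon,\beta,O(\log n))$-introvert clustering in $\widetilde O(\log n)$ rounds, repeated $O(\log n)$ times. No such single-shot deterministic clustering is available from known results, and the route you sketch does not produce one.

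The Ghaffari--Grunau recursion is not a sequence of $O(\log n)$ cheap steps each losing a $1/\log n$ fraction. Its $\widetilde O(\log N)$-round clustering primitive (their base case) works only once the badness of the head start function is at most $\operatorname{polylog} N$. Driving the badness down from $B=N$ (where $h\equiv 0$) uses sampling steps costing $\widetilde O(\log N\log B)$ rounds each. That is already $\widetilde O(\log^2 n)$ even if you follow only the chain of first recursive calls down to a single base case. So one layer costs $\widetilde O(\log^2 n)$, and your layer-by-layer iteration gives $\widetilde O(\log^3 n)$. This is exactly the loss the paper says separate derandomization of each layer incurs.

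Separately, the constant loss in their construction comes from the base-case clustering, which may leave half of $U$ unclustered. The paper makes this an arbitrary $\rho$ by replacing the growth factor in their Lemma~A.14 with $1+\delta$, iterating, and then restoring strong diameter. It is not obtained by rescaling a per-step $1/\log n$ loss.

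The missing idea is that the recursion itself must emit all layers jointly, rather than building them one clustering at a time. In the paper, each call $(A,U,N_U,B,h)$ outputs a valid block of layers:
\begin{itemize}
\item At the base case $B\le\log^{10}N$, it runs $O(\log\log N)$ iterations. Each clusters $U_i$ with loss $\rho=\varepsilon/2$ and then trims against the \emph{current} full active set $A_i$, emitting one layer per iteration. So trimming happens per layer, not once at the end.
\item In the recursive step, the head start function from one sampling step is shared by two recursive calls. Hence all $O(\log n)$ layers together cost $\widetilde O(\log N\log B)=\widetilde O(\log^2 n)$ rounds.
\end{itemize}

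Your "suitably weighted" invariant is made precise by $|A\setminus U|\le N_U/B^3$. In the base case, trimming amplifies the outside active edges by $1/(2\varepsilon)$, giving $|U_{i+1}|\le|A\setminus U|/(2\varepsilon)+|U_i|/4$. That is why they must be a $B^{-3}$ rather than $B^{-2}$ fraction. In the second recursive call, with $N_U^{(2)}=N_U/(B')^2$, the drop from $B$ to $B'=B^{1/2+1/\log^2\log B}$ supplies the needed slack, $2N_U/B^3\le N_U/(B')^5=N_U^{(2)}/(B')^3$. So no per-level budgeting of the form $\delta/2^j$ is needed.
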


\paragraph{Why being introvert helps.}
A standard network decomposition is particularly useful for turning
sequential local algorithms into distributed ones.
Roughly speaking, one processes the color classes of the decomposition
one at a time; within each color class, the low-diameter clusters can be
processed independently and in parallel.
For example, this immediately parallelizes the familiar greedy algorithms
for maximal independent set, maximal matching, $(\Delta+1)$-vertex coloring, and $(2\Delta-1)$-edge coloring.
More generally, Ghaffari, Kuhn, and Maus~\cite{GhaffariKM17} formalized
this connection through the $\mathsf{SLOCAL}$ model, which captures
algorithms that process the vertices sequentially while making each
decision using only a local neighborhood.
Network decomposition provides a general mechanism for translating such
sequential locality into distributed locality, and has consequently
become a central tool for obtaining efficient deterministic distributed
algorithms~\cite{GhaffariKM17,RozhonGhaffari20}.

Introvert decomposition gives us something extra: its clusters not only
have low diameter, but also keep many of their neighbors close.
This seemingly modest additional guarantee turns out to make a
surprisingly large difference.
The introvert guarantee allows us to process clusters even in settings
where an arbitrary partial solution may leave too little room to extend
the solution greedily.

List edge coloring below the $2\Delta-1$ greedy threshold is one example.
An arbitrary partial coloring may leave an uncolored edge with no
available color.
With introvert clustering, however, roughly half of the relevant
edges at each vertex remain inside the current cluster.
This leaves precisely the additional slack needed to make lists of size
$\left(\frac{3}{2}+\varepsilon\right)\Delta$ sufficient.
For locally balanced cut, the same phenomenon appears in a different
form: roughly half of each vertex's relevant neighbors remain inside its
cluster, and computing a locally maximum cut within the cluster guarantees
that at least half of those neighbors lie on the opposite side.
The two factors of $1/2$ lead directly to the $1/4$ guarantee.

A particularly appealing aspect of our framework is its simplicity.
Once the layered introvert network decomposition from
\Cref{thm:introvert-decomposition-rand,thm:introvert-decomposition-det}
is available, both applications admit remarkably simple algorithms:
the edge coloring algorithm processes the layers in reverse order and
colors each cluster, while the cut algorithm processes them in forward
order and computes a locally maximum cut inside each cluster.
Thus, beyond the new quantitative bounds in
\Cref{thm:intro-main,thm:intro-cut}, these applications illustrate the
extra algorithmic power provided by the introvert guarantee, opening up
new opportunities for using network decomposition beyond the usual
greedy regime.

\paragraph{The $\mathbf{1/2}$ barrier.}
The threshold $1/2$ in the introvert property is essentially \emph{tight}.
Consider the graph in \Cref{fig:introvert-half-barrier}, consisting of
$t=2n/\Delta$ consecutive layers
$L_1,\ldots,L_t$, each with $\Delta/2$ vertices, where every pair of
consecutive layers induces a complete bipartite graph.
Thus, every vertex in an internal layer has degree $\Delta$.

Suppose that a nonempty cluster $C$ satisfies
$|N(v)\cap C|>\deg(v)/2$ for every $v\in C$.
Let $L_i$ be the first layer intersecting $C$.
If $i>1$, then any $v\in C\cap L_i$ has no neighbor in $C$ in
$L_{i-1}$, and hence at most its $\Delta/2$ neighbors in $L_{i+1}$
can belong to $C$.
Thus $|N(v)\cap C|\leq\Delta/2=\deg(v)/2$, a contradiction.
Therefore $C$ must intersect $L_1$.
By the same argument from the other end, $C$ must also intersect
$L_t$; in fact, it must intersect every layer.

Hence every nonempty cluster satisfying the strict $1/2$ introvert
condition has weak diameter $\Omega(t)=\Omega(n/\Delta)$.
In particular, any clustering with this stronger guarantee
must either leave every vertex unclustered or contain a cluster of
diameter $\Omega(n/\Delta)$.

\usetikzlibrary{decorations.pathreplacing}

\begin{figure}[ht!]
\centering
\begin{tikzpicture}[
    x=1cm,
    y=1cm,
    every node/.style={font=\small},
    vtx/.style={circle, fill=black, inner sep=1.6pt}
]

\newcommand{\drawlayer}[3]{%
    \draw (#2,0) rectangle ++(0.9,4.2);
    \node at ({#2+0.45},-0.45) {$#3$};
    \node[vtx] (#1a) at ({#2+0.45},3.45) {};
    \node[vtx] (#1b) at ({#2+0.45},2.55) {};
    \node at ({#2+0.45},1.65) {$\vdots$};
    \node[vtx] (#1c) at ({#2+0.45},0.75) {};
}

\newcommand{\drawcompletebip}[2]{%
    \foreach \u in {a,b,c}{
        \foreach \v in {a,b,c}{
            \draw[thin] (#1\u) -- (#2\v);
        }
    }
}

\drawlayer{A}{0.0}{L_1}
\drawlayer{B}{2.1}{L_2}
\drawlayer{C}{4.2}{L_3}
\drawlayer{D}{7.6}{L_{t-1}}
\drawlayer{E}{9.7}{L_t}

\drawcompletebip{A}{B}
\drawcompletebip{B}{C}
\drawcompletebip{D}{E}

\node at (6.25,2.1) {$\cdots$};

\draw[decorate,decoration={brace,amplitude=5pt}]
    (-0.35,0) -- (-0.35,4.2)
    node[midway,left=6pt] {$\Delta/2$ vertices};

\draw[decorate,decoration={brace,amplitude=5pt}]
    (0,4.65) -- (10.6,4.65)
    node[midway,above=6pt] {$t=\frac{2n}{\Delta}$ layers};

\end{tikzpicture}
\caption{A graph illustrating the $1/2$ barrier for introvert clustering.}
\label{fig:introvert-half-barrier}
\end{figure}

\subsection{Roadmap}

The remainder of the paper is organized as follows.
In \Cref{sec: list coloring}, we present our list edge coloring algorithm.
In \Cref{sec: approx locally max cut}, we give our algorithm for locally
balanced cut.
We then present the randomized construction of layered introvert network
decomposition in \Cref{sec: introvert clustering}.
We conclude with a discussion of open questions and future directions in
\Cref{sec: conclusion}.
The deterministic construction of layered introvert network
decomposition, based on an adaptation of the recursive network
decomposition algorithm of Ghaffari and Grunau~\cite{ghaffari_grunau_focs24}, is deferred to
\Cref{app:det-introvert,app:small-loss-proof}.

\section{List Edge Coloring}
\label{sec: list coloring}

In this section, we show how a layered introvert network decomposition
gives our list edge coloring algorithms.
Once the decomposition is available, the algorithm is simple: we process
the layers in reverse order and color all clusters of the current layer
in parallel.
The main point is to show that the introvert property leaves sufficiently
many colors available when a cluster is processed.

\subsection{List Edge Coloring Inside a Cluster}

We begin by reviewing the known existential list edge coloring results that we use to
color the edges within each cluster.
A \emph{list assignment} $L$ assigns to each edge $e$ a set $L(e)$ of
available colors.
A \emph{proper $L$-edge coloring} assigns to each edge $e$ a color from
$L(e)$ such that any two edges sharing an endpoint receive distinct
colors.

The first result gives a
local list size condition for general graphs.

\begin{theorem}[Bonamy, Delcourt, Lang, and Postle~\cite{BonamyDLP24}]
\label{thm:local-list-coloring}
For every constant $\gamma>0$, there exists a constant
$\Delta_0(\gamma)$ such that the following holds.
Let $H$ be a graph with maximum degree
$\Delta(H)\geq\Delta_0(\gamma)$ and minimum degree
$\delta(H)\geq\ln^{25}\Delta(H)$.
If $L$ is a list assignment satisfying
$$
|L(\{u,v\})|
\geq
(1+\gamma)\max\{\deg_H(u),\deg_H(v)\}
$$
for every edge $\{u,v\}\in E(H)$, then $H$ admits a proper
$L$-edge coloring.
\end{theorem}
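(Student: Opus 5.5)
This is the theorem of Bonamy, Delcourt, Lang, and Postle~\cite{BonamyDLP24}, a local form of Kahn's list edge coloring theorem~\cite{Kahn96}, and the paper uses it as a black box. If I had to prove it, I would use Kahn's semi-random (nibble) method. The change is to track, for each edge $e=\{u,v\}$, the ratio of its current list size to the local quantity $\max\{\deg(u),\deg(v)\}$, rather than to the global $\Delta$. The goal is to show that the ratio $|L(e)|/\max\{\deg(u),\deg(v)\}$ stays at least $1+\gamma/2$ throughout, while degrees shrink geometrically.

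\emph{Nibble round.} Each uncolored edge activates independently with a small probability $\eta=\eta(\gamma)$. An activated edge picks a uniformly random color from its current list. It keeps that color if no adjacent activated edge picked the same color. Each retained color $c$ at $e$ is then removed from the lists of all uncolored edges sharing an endpoint with $e$. I would add Kahn-style equalizing coin flips so that every vertex loses degree at essentially the same rate, and every edge loses each color at essentially the same rate. With these, in expectation the uncolored degree at each vertex $w$ becomes roughly $(1-\eta')\deg(w)$. The list of $\{u,v\}$ shrinks by at most a matching factor. Here $\deg(u),\deg(v)\le\max\{\deg(u),\deg(v)\}$ and the list exceeds that maximum by a factor $1+\gamma$, so the local ratio is preserved up to $1\pm O(\eta^2)$ errors per round.

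\emph{Concentration and the Local Lemma.} This is the main obstacle. In the global setting, every bad event has failure probability $\exp(-\Delta^{\Omega(1)})$, which easily beats the $\poly(\Delta)$ dependency degree in the Lov\'asz Local Lemma. Here, a vertex of small degree $d$ only gets concentration $\exp(-d^{\Omega(1)})$ from Talagrand/Azuma-type bounds. Its bad event still depends on $\poly(\Delta)$ other events, namely those at its high-degree neighbors. This is exactly why the hypothesis $\delta(H)\ge\ln^{25}\Delta(H)$ is needed. With that lower bound, $\exp(-d^{\Omega(1)})\le\Delta^{-C}$ for a large constant $C$, provided the number of rounds and the per-round error accumulation are tuned so that degrees remain polylogarithmic in $\Delta$.

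I would use the asymmetric Local Lemma, weighting each vertex's events by its own degree. I would also control the cumulative multiplicative errors, which total about $\sum_i O(\eta^2)$ over $O(\eta^{-1}\log\Delta)$ rounds, so that they stay below $\gamma/4$. Choosing $\eta$ small enough in terms of $\gamma$ achieves this, with $\Delta_0(\gamma)$ taken large enough.

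\emph{Finishing.} I would stop the nibble once all remaining degrees are about $\ln^{c}\Delta$ for a suitable constant $c$. At that point I would apply one further random step in which each edge retains only a small random sub-list. The aim is that every edge's list exceeds twice the maximum remaining degree at its endpoints, possibly after a final LLL-based reservation of colors. The remainder is then colored greedily, since each edge has at most $2\max\{\deg(u),\deg(v)\}-2$ conflicting neighbors.

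The delicate points are making the equalizing coin flips work when adjacent vertices have very different degrees, and verifying the Local Lemma condition across this degree heterogeneity. I expect these to be where the substantial work of~\cite{BonamyDLP24} lies.
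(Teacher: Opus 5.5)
The paper gives no proof of this statement. It is quoted from Bonamy, Delcourt, Lang, and Postle and used only as a black box (through \Cref{cor:local-list-coloring-no-min-degree}), so the citation is the whole argument there. Your diagnosis of why the minimum-degree hypothesis is needed is reasonable: concentration at low-degree vertices has to beat $\poly(\Delta)$ dependencies in the Local Lemma. The sketch itself would not go through, however, because its central invariant is false. In a nibble round, the list of $e=\{u,v\}$ loses every color retained at $u$ \emph{and} every color retained at $v$, whereas $\deg(u)$ only loses the edges colored at $u$. Take a $\Delta$-regular graph in which all lists equal a common set of $(1+\gamma)\Delta$ colors. One round takes the degrees to about $(1-\eta)\Delta$, but keeps each color of $L(e)$ only with probability about $1-2\eta/(1+\gamma)$. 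So $|L(e)|/\max\{\deg(u),\deg(v)\}$ becomes about $(1+\gamma)-(1-\gamma)\eta$. This is a first-order loss per round, not $1\pm O(\eta^2)$. Over the $\Theta(1/\eta)$ rounds needed to shrink degrees by a constant factor, it accumulates to a constant. A random-greedy heuristic puts the minimum of the ratio near $4\gamma/(1+\gamma)$, which is below $1$ for $\gamma<1/3$, far from your target $1+\gamma/2$. Equalizing coin flips cannot remove this factor of two, since the list always loses colors at both endpoints.

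The missing idea is that Kahn-type arguments compare lists not with vertex degrees but with color degrees. For $c\in L(e)$, this is the number $d_c(u)$ of uncolored edges at $u$ whose lists still contain $c$. In the local setting, one uses a weighted quantity such as $\sum_{f\ni u,\,c\in L(f)}1/|L(f)|$, which is initially at most $1/(1+\gamma)$ at every vertex regardless of degree heterogeneity. Bonamy et al.\ indeed derive the theorem from a weighted statement. Because $d_c(u)$ drops both when edges at $u$ get colored and when $c$ is lost at the far endpoint, the ratio $|L(e)|/d_c(u)$ improves, by a factor of roughly $1+\eta\gamma/(1+\gamma)$ per round in the regular case.

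Two further steps also fail as written. First, with $\eta=\eta(\gamma)$ fixed, $O(\eta^{-1}\log\Delta)$ rounds with $O(\eta^2)$ error each accumulate $O(\eta\log\Delta)$ error. This is unbounded in $\Delta$, so your budget of $\gamma/4$ cannot be met by choosing $\eta$ in terms of $\gamma$ alone. Second, passing to random sub-lists can only lower list sizes relative to degrees, so it cannot create the condition ``list exceeds twice the degree'' needed for a greedy finish. The finish must instead come from one of two sources. One is a Local Lemma finishing lemma of Reed/Haxell type, applied once every color degree is a small fraction of the list size. The other is the correct color-degree bookkeeping, which shows that lists eventually dominate the few remaining uncolored neighbors.
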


For bipartite graphs, neither the minimum degree assumption nor the
multiplicative slack is needed.
\begin{theorem}[Borodin, Kostochka, and Woodall~\cite{BorodinKW97}]
\label{thm:bipartite-local-list-coloring}
Let $H$ be a bipartite graph with a list assignment $L$.
If
$$
|L(\{u,v\})|
\geq
\max\{\deg_H(u),\deg_H(v)\}
$$
for every edge $\{u,v\}\in E(H)$, then $H$ admits a proper
$L$-edge coloring.
\end{theorem}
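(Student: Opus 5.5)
The plan is to follow Galvin's kernel method for list edge coloring of bipartite graphs, adapted so that the out-degree bound becomes local. Let $A,B$ be the two sides of $H$ and let $\mathcal L(H)$ be its line graph. The standard reduction is as follows. Suppose $\mathcal L(H)$ has an orientation $\vec D$ in which every induced subdigraph has a kernel (an independent set absorbing every other vertex by an arc into it), and in which every edge $e$ satisfies $d^+_{\vec D}(e)<|L(e)|$. Then $H$ admits a proper $L$-edge coloring. I would prove this step by the usual induction on the number of edges. Pick a color $\kappa$ and let $S$ be the edges whose list contains $\kappa$. Take a kernel $K$ of $\vec D[S]$, color $K$ with $\kappa$, and delete $\kappa$ from the lists of $S\setminus K$. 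Every edge of $S\setminus K$ has an out-neighbor in $K$, which is now deleted, so both its out-degree and its list size drop by at least one and the inequality is preserved. Edges outside $S$ keep their lists, and their out-degrees can only decrease. Since $K$ is independent in $\mathcal L(H)$, it is a matching and receives a single color legally.

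For kernel-perfectness I would use Galvin's lemma. Give each vertex $x\in V(H)$ a linear order $<_x$ on its incident edges, and orient $e\to f$ for $e,f$ sharing $x$ whenever:
\begin{itemize}
\item $x\in A$ and $e<_x f$, or
\item $x\in B$ and $f<_x e$.
\end{itemize}
Every induced subdigraph of such an orientation has a kernel. The kernel is a stable matching in the corresponding subgraph, obtained by Gale--Shapley with $A$ proposing according to $<_a$ and $B$ ranking according to $<_b$; stability is exactly the absorbing property. Hence the whole problem reduces to choosing the orders so that, for each edge $e=ab$ with $a\in A$, $b\in B$,
\[
\#\{f\ni a: e<_a f\}+\#\{f\ni b: f<_b e\}\;\le\;\max\{\deg_H(a),\deg_H(b)\}-1 .
\]

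The main obstacle is constructing these orders. Galvin orders by a proper $\Delta$-edge coloring $c$ (König's theorem), increasing at $A$ and decreasing at $B$. That only gives the global bound $(\Delta-c(e))+(c(e)-1)=\Delta-1$. The first thing I would try is to choose $c$ more carefully, namely a proper edge coloring with positive integer colors satisfying two conditions:
\begin{itemize}
\item every edge $ab$ satisfies $c(ab)\le\max\{\deg(a),\deg(b)\}$;
\item colors at each $a\in A$ lie in $\{1,\dots,\deg(a)\}$, or symmetrically at the $B$ side.
\end{itemize}
Then the number of larger colors at $a$ is at most $\max(\text{colors at }a)-c(e)$, and the number of smaller colors at $b$ is at most $c(e)-1$. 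It therefore suffices that the largest color at $a$ is at most $\max\{\deg(a),\deg(b)\}$ along each edge.

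I would attempt to get such a coloring by induction on the edges, processing vertices in decreasing order of degree. An alternative is to embed $H$ into a bipartite graph with the same local degree profile and apply König with alternating paths to push large colors onto edges whose larger endpoint degree is large.

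If that fails, the fallback is a direct induction on $|E(H)|$ that builds the preference orders themselves. Remove an edge at a vertex of minimum degree, order the rest inductively, and insert the removed edge at the extreme position (last at its $A$-end, first at its $B$-end). Then check that the out-degree of every other edge does not exceed its local bound. This verification is the delicate counting step I expect to take most of the work.
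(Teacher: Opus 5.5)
\textbf{There is a genuine gap: you reduce the problem to constructing suitable vertex orders, but never construct them, and both concrete routes you propose fail.} The paper itself gives no proof to compare with; it imports the theorem from Borodin, Kostochka, and Woodall, so your outline has to stand on its own. The reduction is sound. The kernel-based list-colouring lemma is correct, and arbitrary linear orders at the vertices of a bipartite graph do give a kernel-perfect orientation of the line graph via stable matchings. But everything beyond Galvin's theorem sits in the step you leave open: producing orders with $\#\{f\ni a: e<_a f\}+\#\{f\ni b: f<_b e\}\le\max\{\deg(a),\deg(b)\}-1$.

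\textbf{The colouring route cannot deliver these orders.} Let $b\in B$ have neighbors $a_1,a_2,a_3\in A$, where each $a_i$ has one further neighbor, a leaf $b_i'\in B$. Along $a_ib_i'$ your condition forces every color at $a_i$ to be at most $2$. So the three edges $a_ib$, which meet at $b$, would need distinct colors from $\{1,2\}$, which is impossible. The mirror-image gadget defeats the side-swapped variant, and a graph containing both defeats either choice. The loss comes from the estimate $\#\{\text{larger colors at }a\}\le\max(\text{colors at }a)-c(e)$. For orders induced by a proper coloring (arcs toward larger colors at $A$, toward smaller colors at $B$), the exact requirement is $s_b(e)-s_a(e)\le\max\{0,\deg(b)-\deg(a)\}$, where $s_v(e)$ counts the colors below $c(e)$ at $v$. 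The example does satisfy this: give $a_ib_i'$ color $1$ and $a_ib$ color $i+1$. Showing that suitable orders always exist, through such a coloring or otherwise, is the actual content of the theorem, and the proposal contains no argument for it.

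\textbf{The fallback induction fails already on the $4$-cycle} $a_1b_1a_2b_2$, where every bound equals $1$. By symmetry, remove $e_0=a_1b_2$. What remains is the path with edges $g_1=a_1b_1$, $g_2=b_1a_2$, $g_3=a_2b_2$, again with all bounds equal to $1$. Your insertion gives $e_0$ out-degree $0$ but adds the arcs $g_1\to e_0$ and $g_3\to e_0$, while the bounds of $g_1$ and $g_3$ remain $1$. Hence $g_1$ and $g_3$ must have had out-degree $0$ on the path. This forces $g_1<_{b_1}g_2$ and $g_2<_{a_2}g_3$, and therefore gives $g_2$ out-degree $2$. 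So no valid ordering of $H-e_0$ extends.

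The mechanism is general. Inserting at the extreme position charges an extra arc to every edge at both endpoints of $e_0$, yet at a minimum-degree endpoint the bounds of those edges do not grow. An induction would need a smarter insertion position or a strengthened hypothesis that guarantees slack. The other ideas you mention (processing by decreasing degree, embedding plus alternating paths) are too unspecified to close the gap.
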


For general graphs, the minimum degree assumption in
\Cref{thm:local-list-coloring} is inconvenient for our application,
since the subgraph induced by a cluster may contain vertices of small
degree.
The following simple corollary removes this assumption at the cost of
requiring an absolute lower bound on the list sizes.
\begin{corollary}
\label{cor:local-list-coloring-no-min-degree}
For every constant $\gamma>0$, there exists a constant
$\Delta_0(\gamma)$ such that the following holds for every integer
$\Delta\geq\Delta_0(\gamma)$.
Let $H$ be a graph with maximum degree at most $\Delta$.
If $L$ is a list assignment satisfying
$$
|L(\{u,v\})|
\geq
(1+\gamma)
\max\{\deg_H(u),\deg_H(v),\lceil\ln^{25}\Delta\rceil\}
$$
for every edge $\{u,v\}\in E(H)$, then $H$ admits a proper
$L$-edge coloring.
\end{corollary}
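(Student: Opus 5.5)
The plan is to reduce to \Cref{thm:local-list-coloring} by embedding $H$ into a larger graph $H'$ whose minimum degree is large, while keeping the list condition valid. Set $d=\lceil\ln^{25}\Delta\rceil$ and take $\Delta_0(\gamma)$ large enough that $\Delta\geq\Delta_0$ implies $d\leq\Delta$, and also that the constant of \Cref{thm:local-list-coloring} (applied with a suitable $\gamma'$) is met.

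\textbf{Step 1: padding.} First I would make every vertex degree at least $d$ by attaching gadgets, without pushing any degree above $\max\{\deg_H(v),d\}$. Concretely, take $d$ disjoint copies $H^{(1)},\dots,H^{(d)}$ of $H$. For each vertex $v$ with $\deg_H(v)<d$, join its copies so that each copy gains $d-\deg_H(v)$ extra edges to other copies. One way is to add a $(d-\deg_H(v))$-regular graph on the $d$ copies of $v$; this exists for $d$ even, and otherwise I use $d+1$ copies. Call the result $H'$.

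\textbf{Step 2: degrees and lists in $H'$.} In $H'$, every vertex copy has degree exactly $\max\{\deg_H(v),d\}$. Hence $\delta(H')\geq d\geq\ln^{25}\Delta$, and $\Delta(H')=\max\{\Delta(H),d\}\leq\Delta$. Edges inside copies keep their original lists. Each new edge between copies of $v$ gets the list $\{1,\dots,\lceil(1+\gamma)d\rceil\}$; since both of its endpoints have degree $d$ in $H'$, this meets the requirement of \Cref{thm:local-list-coloring}. For an original edge $\{u,v\}$, its list has size at least $(1+\gamma)\max\{\deg_H(u),\deg_H(v),d\}=(1+\gamma)\max\{\deg_{H'}(u),\deg_{H'}(v)\}$.

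\textbf{Step 3: apply the theorem.} I would then apply \Cref{thm:local-list-coloring} to $H'$ and restrict the resulting coloring to one copy of $H$, which gives a proper $L$-edge coloring of $H$.

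\textbf{Main obstacle: $\Delta(H')$ versus $\delta(H')$.} The theorem is stated in terms of $\Delta(H')$ itself. So I need $\Delta(H')\geq\Delta_0(\gamma)$ and $\delta(H')\geq\ln^{25}\Delta(H')$, where $\Delta(H')$ may be smaller than $\Delta$. Since $\Delta(H')\geq d$ and $\ln^{25}$ is increasing, $\delta(H')\geq d\geq\ln^{25}\Delta\geq\ln^{25}\Delta(H')$, so the minimum degree condition holds. The remaining issue is $\Delta(H')\geq d$ versus the threshold $\Delta_0$. Because $d=\lceil\ln^{25}\Delta\rceil\to\infty$, choosing $\Delta_0(\gamma)$ large enough that $\lceil\ln^{25}\Delta_0\rceil$ exceeds the threshold of \Cref{thm:local-list-coloring} settles it.

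Two details also need checking: the parity issue in the regular-graph construction, and $d\leq\Delta$. If $d>\Delta$ the construction still works, but then the bound $\Delta(H')\leq\Delta$ fails; this is harmless, since the corollary only needs $\Delta(H')$ large, not bounded by $\Delta$.
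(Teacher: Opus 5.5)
Your argument is correct, but it pads $H$ differently from the paper and controls $\Delta(H')$ differently.

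\textbf{The paper's construction.} The paper attaches to each low-degree vertex $v$ its own fresh $(d_0+1)$-clique $Q_v$ and joins $v$ to $d_0-\deg_H(v)$ of its vertices. It then adds a disjoint $(\Delta+1)$-clique so that $\Delta(H^+)=\Delta$ exactly. This makes the minimum degree condition immediate and avoids any parity issues. Its $\Delta_0(\gamma)$ only needs to exceed the threshold $T$ of \Cref{thm:local-list-coloring} and guarantee $\lceil\ln^{25}\Delta\rceil<\Delta$.

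\textbf{Your construction.} You use the classical regularization trick: disjoint copies of $H$, plus a regular graph on the copies of each low-degree vertex. This gives every padded vertex degree exactly $d$, but needs existence and parity bookkeeping. You handle the maximum degree by monotonicity, using $d\leq\Delta(H')\leq\Delta$. The cost is that $\Delta(H')$ can be as small as $d$, so you need $\lceil\ln^{25}\Delta_0\rceil\geq T$. Your $\Delta_0(\gamma)$ is therefore about $\exp(T^{1/25})$ rather than about $T$. It is still a constant depending only on $\gamma$, so the corollary follows. Adding the disjoint $(\Delta+1)$-clique would remove this blow-up.

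\textbf{Two small repairs.} First, with $d$ even and exactly $d$ copies, an isolated vertex of $H$ would need a $d$-regular graph on $d$ vertices, which does not exist. Either discard isolated vertices first (they carry no edges), or always take an even number $m\geq d+1$ of copies; this admits a $k$-regular graph for every $0\leq k\leq d$.

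Second, drop your closing remark about $d>\Delta$. In that case $H'$ is $d$-regular, and you would need $d\geq\ln^{25}d$. This need not hold, since $d>\Delta$ gives $\ln^{25}d>\ln^{25}\Delta>d-1$. Your choice of $\Delta_0$ already ensures $d\leq\Delta$, so nothing else is affected.
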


\begin{proof}
We may assume that $\Delta$ is sufficiently large that
$\lceil\ln^{25}\Delta\rceil<\Delta$.
Set
$
d_0=\lceil\ln^{25}\Delta\rceil
$.
We augment $H$ to a graph $H^+$ with minimum degree at least $d_0$,
while preserving the degrees of vertices of $H$ that already have
degree at least $d_0$.

For each vertex $v$ with $\deg_H(v)<d_0$, take a fresh copy $Q_v$ of a $(d_0+1)$-clique  and connect $v$ to exactly
$d_0-\deg_H(v)$ distinct vertices of $Q_v$.
Then $v$ has degree exactly $d_0$ in $H^+$, while every vertex of
$Q_v$ has degree either $d_0$ or $d_0+1$.
Vertices of degree at least $d_0$ in $H$ are left unchanged.

Finally, add a disjoint copy of a $(\Delta+1)$-clique.
Since $d_0+1\leq\Delta$, the resulting graph satisfies
$$
\Delta(H^+)=\Delta
\qquad\text{and}\qquad
\delta(H^+)\geq d_0\geq\ln^{25}\Delta.
$$
Moreover, every original vertex $v\in V(H)$ satisfies
$$
\deg_{H^+}(v)
=
\max\{\deg_H(v),d_0\}.
$$
Hence, for every original edge $\{u,v\}\in E(H)$,
\[|L(\{u,v\})| \geq (1+\gamma)\max\{\deg_H(u),\deg_H(v),d_0\} = (1+\gamma)\max\{\deg_{H^+}(u),\deg_{H^+}(v)\}.\]
Extend $L$ to the newly added edges by assigning them sufficiently large
lists of colors.
The resulting list assignment on $H^+$ satisfies the conditions of
\Cref{thm:local-list-coloring}, so $H^+$ has a proper list edge
coloring.
Restricting this coloring to the original edges of $H$ yields a proper
$L$-edge coloring of $H$.
\end{proof}
 
\subsection{The Distributed Coloring Algorithm}

We now prove our list edge coloring theorem.

\mainthm*

\begin{proof}
Fix the constant $\varepsilon>0$ from the theorem.
It suffices to consider $0<\varepsilon\leq 1/2$, since a result for a
smaller value of $\varepsilon$ also implies the result for any larger
one.
Set $\eta=\varepsilon/2$ and $\gamma=\varepsilon/2$.

\paragraph{The algorithm.}
We first construct an
$(\eta,\ell,D)$-layered introvert network decomposition
$E=E_1\cup\cdots\cup E_\ell$, with clusterings
$\mathcal C_1,\ldots,\mathcal C_\ell$, where
$\ell=O(\log n)$ and $D=O(\log n)$.
Recall that
$G_i=(V,E_i\cup E_{i+1}\cup\cdots\cup E_\ell)$.

We process the layers in reverse order,
$E_\ell,E_{\ell-1},\ldots,E_1$.
Suppose that all edges in
$E_{i+1}\cup\cdots\cup E_\ell$ have already been colored.
For each cluster $C\in\mathcal C_i$, we color the edges of $G_i[C]$,
which are precisely the edges of $E_i$ with both endpoints in $C$.
All clusters in the same layer are processed in parallel.

\paragraph{How many colors remain?}
Fix a cluster $C\in\mathcal C_i$.
For every vertex $v\in C$, the introvert property gives
$\deg_{G_i[C]}(v)\geq
(\frac12-\eta)\deg_{G_i}(v)$.
It follows that
\begin{equation}
\label{eq:introvert-colored-degree}
\deg_{G_i}(v)-\deg_{G_i[C]}(v)
\leq
\left(\frac12+\eta\right)\deg_{G_i}(v)
\leq
\left(\frac12+\eta\right)\Delta.
\end{equation}
The left-hand side is exactly the number of edges incident to $v$ that
belong to later layers and have therefore already been colored.

Now consider an edge $e=\{u,v\}\in E(G_i[C])$. Without loss of generality, we assume that
\[\deg_{G_i[C]}(u)\geq\deg_{G_i[C]}(v).\]
Let $L_i(e)$ denote its residual list after removing all colors already
used by colored edges incident to $u$ or $v$.
The number of colors removed from $L(e)$ is at most
$$
\bigl(\deg_{G_i}(u)-\deg_{G_i[C]}(u)\bigr)
+
\bigl(\deg_{G_i}(v)-\deg_{G_i[C]}(v)\bigr).
$$
Therefore,
\begin{align}
|L_i(e)|
&\geq
\left(\frac32+\varepsilon\right)\Delta
-\bigl(\deg_{G_i}(u)-\deg_{G_i[C]}(u)\bigr)
-\left(\frac12+\eta\right)\Delta
&& \text{by \eqref{eq:introvert-colored-degree}} \notag\\
&=
\deg_{G_i[C]}(u)
+\bigl(\Delta-\deg_{G_i}(u)\bigr)
+(\varepsilon-\eta)\Delta \notag\\
&\geq
\deg_{G_i[C]}(u)+(\varepsilon-\eta)\Delta
&& \text{as $\deg_{G_i}(u)\leq\Delta$} \notag\\
&\geq
(1+\varepsilon-\eta)\deg_{G_i[C]}(u)
&& \text{as $\deg_{G_i[C]}(u)\leq\Delta$}. \notag
\end{align}
Since $\gamma=\varepsilon-\eta=\varepsilon/2$ and
$\deg_{G_i[C]}(u)=
\max\{\deg_{G_i[C]}(u),\deg_{G_i[C]}(v)\}$, we obtain
\begin{equation}
\label{eq:residual-local-degree}
|L_i(\{u,v\})|
\geq
(1+\gamma)
\max\{\deg_{G_i[C]}(u),\deg_{G_i[C]}(v)\}.
\end{equation}

\paragraph{Handling small degrees.}
For general graphs, we also need the absolute list size guarantee
required by \Cref{cor:local-list-coloring-no-min-degree}.
Applying \eqref{eq:introvert-colored-degree} to both endpoints of an
edge $e\in E(G_i[C])$ gives
\begin{align*}
|L_i(e)|
&\geq
\left(\frac32+\varepsilon\right)\Delta
-(1+2\eta)\Delta =
\left(\frac12+\varepsilon-2\eta\right)\Delta
=
\frac{\Delta}{2}.
\end{align*}
For sufficiently large $\Delta$, depending only on $\varepsilon$,
$$
\frac{\Delta}{2}
\geq
(1+\gamma)\lceil\ln^{25}\Delta\rceil.
$$
Together with \eqref{eq:residual-local-degree}, this implies that every
edge $\{u,v\}\in E(G_i[C])$ satisfies
$$
|L_i(\{u,v\})|
\geq
(1+\gamma)
\max\left\{
\deg_{G_i[C]}(u),
\deg_{G_i[C]}(v),
\lceil\ln^{25}\Delta\rceil
\right\}.
$$
Therefore, by \Cref{cor:local-list-coloring-no-min-degree},
$G_i[C]$ admits a proper list edge coloring from the residual lists.

Since each residual list excludes all colors already used by incident
edges in later layers, this coloring extends the existing partial
coloring without introducing any conflict.

\paragraph{Distributed implementation.}
The clusters in $\mathcal C_i$ are vertex-disjoint, so all clusters of
the same layer can be processed simultaneously.
Each cluster has weak diameter at most $D$ in $G$.
Since messages in the $\LOCAL$ model have unbounded size, all
information about $G_i[C]$ and its residual lists can be gathered in
$O(D)$ rounds.
Local computation is unrestricted, so a proper list edge coloring whose
existence is guaranteed above can then be computed locally, and the
resulting colors can be communicated in another $O(D)$ rounds.

Thus, given the layered decomposition, each layer can be processed in
$O(D)$ rounds, and all layers can be colored in
$O(\ell D)=O(\log^2 n)$ rounds.

Using \Cref{thm:introvert-decomposition-rand}, we obtain an
$O(\log^2 n)$-round randomized algorithm with high probability.
Using \Cref{thm:introvert-decomposition-det}, we obtain a
$\widetilde O(\log^2 n)$-round deterministic algorithm.

\paragraph{Bipartite graphs.}
Suppose now $G$ is bipartite, so every graph $G_i[C]$ is also bipartite.
By \eqref{eq:residual-local-degree},
$$
|L_i(\{u,v\})|
\geq
\max\{\deg_{G_i[C]}(u),\deg_{G_i[C]}(v)\}
$$
for every edge $\{u,v\}\in E(G_i[C])$.
Thus, by \Cref{thm:bipartite-local-list-coloring}, every cluster can be
colored without any lower bound on $\Delta$.
The same algorithm and round complexity bounds therefore apply for all $\Delta$ in
the bipartite case.
\end{proof}

\section{Locally Balanced Cut}
\label{sec: approx locally max cut}

In this section, we show how a layered introvert network decomposition
gives our locally balanced cut algorithm.
As in the edge coloring algorithm, we process the decomposition one
layer at a time, but here we process the layers in forward order.
When a vertex first appears in a cluster, we assign it to one side of
the cut, and its assignment is never changed afterward.

\approxcutthm*

\begin{proof}
Fix the constant $0<\varepsilon<1/4$ from the theorem and set
$\eta=2\varepsilon$.
We construct an
$(\eta,\ell,D)$-layered introvert network decomposition
$E=E_1\cup\cdots\cup E_\ell$, with clusterings
$\mathcal C_1,\ldots,\mathcal C_\ell$, where
$\ell=O(\log n)$ and $D=O(\log n)$.
Recall that
$G_i=(V,E_i\cup E_{i+1}\cup\cdots\cup E_\ell)$.

\paragraph{The algorithm.}
Initially, no vertex has been assigned to either side of the cut.
We process the layers in forward order,
$E_1,E_2,\ldots,E_\ell$.

Consider a cluster $C\in\mathcal C_i$.
Some vertices of $C$ may already have been assigned to a side of the
cut because they belonged to a cluster in an earlier layer.
We keep these assignments fixed and assign the remaining vertices of
$C$ so as to maximize the number of crossing edges in $G_i[C]$,
subject to the fixed assignments.
Once a vertex is assigned, its assignment is never changed.
All clusters in the same layer are processed in parallel.

Since every edge belongs to some layer, every vertex belongs to a
cluster in some layer and is therefore eventually assigned.

\paragraph{The guarantee for a newly assigned vertex.}
Suppose that $v$ is assigned for the first time while processing a
cluster $C\in\mathcal C_i$.
By the choice of the cut inside $C$, switching only $v$ to the other
side cannot increase the number of crossing edges in $G_i[C]$.
Therefore, at least half of the neighbors of $v$ in $G_i[C]$ are
assigned to the opposite side; otherwise, switching $v$ would strictly
increase the number of crossing edges.
Thus, $v$ has at least
$\frac12\deg_{G_i[C]}(v)$ neighbors in $C$ on the opposite side.

\paragraph{Relating to the original degree.}
Fix a vertex $v$, and let $i$ be the first layer in which $v$ belongs
to a cluster.
Let $C\in\mathcal C_i$ be the cluster containing $v$.
None of the edges incident to $v$ belongs to an earlier layer.
Indeed, if an incident edge belonged to $E_j$ for some $j<i$, then it
would be an intra-cluster edge of $\mathcal C_j$, implying that $v$
already belonged to a cluster in layer $j$.
Consequently,
\begin{equation}
\label{eq:first-layer-full-degree}
\deg_{G_i}(v)=\deg_G(v).
\end{equation}

By the introvert property and \eqref{eq:first-layer-full-degree},
\begin{equation*}
\deg_{G_i[C]}(v)
\geq
\left(\frac12-\eta\right)\deg_G(v).
\end{equation*}
Since $v$ is assigned for the first time when $C$ is processed, the
argument above shows that it has at least
$$
\frac12\deg_{G_i[C]}(v)
\geq
\frac12\left(\frac12-\eta\right)\deg_G(v)
=
\left(\frac14-\varepsilon\right)\deg_G(v)
$$
neighbors on the opposite side.
These assignments are never changed afterward, so the same guarantee
holds in the final cut.
Since this argument applies to every vertex, the resulting cut is
$\left(\frac14-\varepsilon\right)$-locally balanced.

\paragraph{Distributed implementation.}
As in the edge coloring algorithm, the clusters of each layer can be
processed in parallel, and the weak diameter bound allows each cluster
to gather all information needed to compute its cut in $O(D)$ rounds.
Thus, given the layered decomposition, all layers can be processed in
$O(\ell D)=O(\log^2 n)$ rounds.
Using \Cref{thm:introvert-decomposition-rand}, we obtain an
$O(\log^2 n)$-round randomized algorithm with high probability.
Using \Cref{thm:introvert-decomposition-det}, we obtain a
$\widetilde O(\log^2 n)$-round deterministic algorithm.
\end{proof}

\section{Randomized Construction of Introvert Decomposition}
\label{sec: introvert clustering}

In this section, we prove
\Cref{thm:introvert-decomposition-rand}.
The construction starts from the standard low-diameter clustering of
Miller, Peng, and Xu~\cite{MPX13SPAA}.
Such a clustering guarantees that few edges cross between clusters, but
this is only a global guarantee: an individual vertex may still have
most of its neighbors outside its own cluster.
We enforce the introvert property by a simple trimming procedure.

\subsection{Trimming Low-Diameter Clusters}

We first show that any clustering with few inter-cluster edges can be
trimmed so that every remaining clustered vertex satisfies the
introvert property, while increasing the number of inter-cluster edges
by only a constant factor.

\begin{lemma}[Trimming]
\label{lem:trimming}
Let $\mathcal C$ be a clustering of a graph $G=(V,E)$, and let
$U$ be the number of inter-cluster edges with respect to $\mathcal C$.
For every $0<\varepsilon<1/2$, there is a clustering
$\mathcal C'$ obtained by removing vertices from the clusters of
$\mathcal C$.
The clustering $\mathcal C'$ has the following properties.
\begin{itemize}
    \item Every $v\in C\in\mathcal C'$ satisfies
    $|N_G(v)\cap C|\geq
    \left(\frac12-\varepsilon\right)\deg_G(v)$.
    \item The number of inter-cluster edges with respect to
    $\mathcal C'$ is at most $U/(2\varepsilon)$.
\end{itemize}
\end{lemma}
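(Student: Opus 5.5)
The plan is to obtain $\mathcal C'$ from $\mathcal C$ by a sequential peeling process and to control its cost with a potential function. As long as some clustered vertex $v\in C$ has $|N_G(v)\cap C|<\left(\frac12-\varepsilon\right)\deg_G(v)$, we remove $v$ from $C$, so that $v$ becomes unclustered. We discard any cluster that becomes empty. When the process stops, every remaining clustered vertex satisfies the first property by construction. What remains is to show that the process terminates and to bound the number of inter-cluster edges at the end.

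The potential is
\[
\Phi=\bigl|\{(v,e): v \text{ clustered},\ e\ni v,\ e \text{ inter-cluster}\}\bigr|.
\]
Each inter-cluster edge contributes at most $2$, so initially $\Phi\le 2U$, and $\Phi\ge 0$ at all times.

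Now consider a single removal of a vertex $v$ from $C$. Write $d=\deg_G(v)$, $a=|N_G(v)\cap C|$, and $b=d-a$. Note that $d>0$, since a vertex with $d=0$ satisfies the condition trivially. The violation gives $a<\left(\frac12-\varepsilon\right)d$, and therefore $b>\left(\frac12+\varepsilon\right)d$.

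\begin{itemize}
\item The $b$ edges from $v$ to vertices outside $C$ were already inter-cluster. After the removal they stay inter-cluster, but they no longer count at $v$, so $\Phi$ loses exactly $b$ (the pairs $(v,e)$ for these edges).
\item The $a$ edges from $v$ to $C$ become newly inter-cluster. Each adds one pair at its endpoint in $C$, so $\Phi$ gains exactly $a$.
\end{itemize}

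Hence $\Phi$ decreases by $b-a>2\varepsilon d>0$, which gives termination. Moreover, the number of newly created inter-cluster edges in this step satisfies
\[
a<\frac{\frac12-\varepsilon}{2\varepsilon}\,(b-a),
\]
because $a/(b-a)$ is maximized at the extreme ratios $a<\left(\frac12-\varepsilon\right)d$ and $b-a>2\varepsilon d$.

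Summing over all removals, the total decrease of $\Phi$ is at most its initial value $2U$. So the total number of newly created inter-cluster edges is at most
\[
2U\cdot\frac{\frac12-\varepsilon}{2\varepsilon}=\frac{(1-2\varepsilon)U}{2\varepsilon}.
\]
An inter-cluster edge never becomes intra-cluster again, since vertices only leave clusters. So the final count is at most
\[
U+\frac{(1-2\varepsilon)U}{2\varepsilon}=\frac{U}{2\varepsilon},
\]
which is the second property.

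The main obstacle is choosing the right potential. Counting inter-cluster edges alone does not decrease under a removal. Weighting the edges by their clustered endpoints does decrease, and it turns the introvert violation into a gap of at least $2\varepsilon d$ between the potential lost and the new inter-cluster edges created.
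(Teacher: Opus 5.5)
Your proof is correct and is essentially the paper's argument. The total drop of your potential $\Phi$ is exactly the paper's $D_{\mathrm{out}}-D_{\mathrm{in}}$, which the paper bounds by $2U$ through a per-edge contribution count instead of a telescoping potential; your per-step inequality $a<\frac{1/2-\varepsilon}{2\varepsilon}(b-a)$ is a rearrangement of the paper's $d_{\mathrm{out}}(v)>\frac{1+2\varepsilon}{1-2\varepsilon}d_{\mathrm{in}}(v)$, and it follows directly by substituting $b-a=d-2a$ into $a<(\frac12-\varepsilon)d$, so the ``extreme ratios'' wording is not needed.
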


\begin{proof}
Starting from $\mathcal C$, repeatedly remove any clustered vertex $v$
with fewer than
$\left(\frac12-\varepsilon\right)\deg_G(v)$ neighbors in its current
cluster.
When the process terminates, every remaining clustered vertex satisfies
the first property.

It remains to bound the number of inter-cluster edges created by the
trimming process.
For every vertex $v$ that is removed, consider the clustering
\ul{immediately before $v$ is removed}.
Let $d_{\mathrm{in}}(v)$ be the number of neighbors of $v$ that are
still in the same cluster as $v$ at this point, and let
$d_{\mathrm{out}}(v)=\deg_G(v)-d_{\mathrm{in}}(v)$.
Since $v$ is removed,
$d_{\mathrm{in}}(v)<
\left(\frac12-\varepsilon\right)\deg_G(v)$, and therefore
\begin{equation}
\label{eq:trimming-ratio}
d_{\mathrm{out}}(v)
>
\frac{1+2\varepsilon}{1-2\varepsilon}
d_{\mathrm{in}}(v).
\end{equation}

Let $D_{\mathrm{in}}$ and $D_{\mathrm{out}}$ denote the sums of
$d_{\mathrm{in}}(v)$ and $d_{\mathrm{out}}(v)$, respectively, over all
vertices removed during the process.
We bound $D_{\mathrm{out}}-D_{\mathrm{in}}$ by considering the
contribution of each edge.

First consider an edge that is initially intra-cluster.
If neither endpoint is removed, it contributes nothing.
If exactly one endpoint is removed, the edge is internal immediately
before that endpoint is removed, and hence contributes $-1$ to
$D_{\mathrm{out}}-D_{\mathrm{in}}$.
If both endpoints are removed, the edge is internal when the first
endpoint is removed and external when the second endpoint is removed,
so its two contributions cancel.
Thus, every initially intra-cluster edge contributes at most $0$.

Now consider an edge that is initially inter-cluster.
Since clusters only lose vertices, its endpoints can never enter the
same cluster.
Hence it contributes at most $1$ to $D_{\mathrm{out}}$ from each
endpoint, and therefore at most $2$ to
$D_{\mathrm{out}}-D_{\mathrm{in}}$.
Since there are $U$ initially inter-cluster edges, we obtain
\begin{equation}
\label{eq:trimming-difference}
D_{\mathrm{out}}-D_{\mathrm{in}}\leq 2U.
\end{equation}

Summing \eqref{eq:trimming-ratio} over all removed vertices gives
$D_{\mathrm{out}}>
\frac{1+2\varepsilon}{1-2\varepsilon}D_{\mathrm{in}}$.
Together with \eqref{eq:trimming-difference}, this yields
$$
\frac{4\varepsilon}{1-2\varepsilon}D_{\mathrm{in}}
<
D_{\mathrm{out}}-D_{\mathrm{in}}
\leq 2U,
$$
and hence
$$
D_{\mathrm{in}}
<
\frac{1-2\varepsilon}{2\varepsilon}U.
$$

Every initially intra-cluster edge that becomes inter-cluster is
counted once in $D_{\mathrm{in}}$, namely when its first endpoint is
removed.
Thus, the number of newly created inter-cluster edges is
$D_{\mathrm{in}}$.
The final number of inter-cluster edges is therefore at most
$$
U+D_{\mathrm{in}}
<
U+\frac{1-2\varepsilon}{2\varepsilon}U
=
\frac{U}{2\varepsilon},
$$
proving the second property.
\end{proof}

Since trimming only removes vertices from clusters, it cannot increase
their weak diameter.
More precisely, in \Cref{lem:trimming}, if every cluster of the original
clustering $\mathcal C$ has weak diameter at most $D$, then the same is
true for the trimmed clustering $\mathcal C'$, since every cluster of
$\mathcal C'$ is a subset of some cluster of $\mathcal C$.

We use the low-diameter clustering algorithm of Miller, Peng, and
Xu~\cite{MPX13SPAA}.
Although originally presented in the PRAM model, it is well known that
the algorithm admits an implementation in the $\LOCAL$ model with the
following guarantee; see, e.g.,~\cite{forster2022improved}.

\begin{theorem}[MPX clustering]
\label{thm:mpx-clustering}
For every $\lambda\in(0,1)$, there is a randomized
$O(\lambda^{-1}\log n)$-round $\LOCAL$ algorithm that computes a
clustering $\mathcal C$ of a graph $G=(V,E)$ such that
every cluster has strong diameter $O(\lambda^{-1}\log n)$ and the
expected number of inter-cluster edges is at most $\lambda|E|$.
\end{theorem}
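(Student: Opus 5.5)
We would prove \Cref{thm:mpx-clustering} via the standard exponential-shift construction of Miller, Peng, and Xu.

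\textbf{Construction.} Every vertex $u$ independently samples a shift $\delta_u\sim\mathrm{Exp}(\lambda)$, that is, $\Pr[\delta_u>t]=e^{-\lambda t}$. Each vertex $v$ then joins the cluster centered at
\[
c(v)=\arg\min_{u}\bigl(\dist_G(u,v)-\delta_u\bigr),
\]
with ties broken by identifier. Clusters are the nonempty sets $c^{-1}(u)$.

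\textbf{Round complexity.} By a union bound, with probability $1-n^{-c}$ every shift satisfies $\delta_u\le R:=(c+1)\lambda^{-1}\ln n$. Since $v$ itself achieves the value $-\delta_v\le 0$, the minimizer $u$ satisfies $\dist(u,v)\le\delta_u\le R$. Hence $v$ only needs its $R$-hop neighborhood, and the algorithm runs in $O(\lambda^{-1}\log n)$ rounds. To make the algorithm always terminate within this bound, we cap shifts at $R$ (or equivalently restrict the search to radius $R$). This changes the output only on the failure event.

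\textbf{Strong diameter.} If $c(v)=u$ and $w$ lies on a shortest $u$--$v$ path, then $c(w)=u$. Indeed, if some $u'$ gave $w$ a strictly better value, then by the triangle inequality $u'$ would also beat $u$ at $v$; ties are handled consistently by the identifier rule. So each cluster contains a BFS tree from its center of depth at most $R$, and its strong diameter is at most $2R=O(\lambda^{-1}\log n)$.

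\textbf{Cut edges, the main step.} Fix an edge $\{x,y\}$ and consider the shifted values $\delta_u-\dist(u,x)$ over all centers $u$. If the largest exceeds the second largest by more than $1$, then the same $u$ wins at $y$, because every value changes by at most $1$ when moving from $x$ to $y$. So the edge can only be cut if the gap between the top two values at $x$ is at most $1$.

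By the memoryless property of the exponential distribution, conditioned on the second-largest value and on which vertex attains the maximum, the excess of the maximum over the second largest is again distributed as $\mathrm{Exp}(\lambda)$. This is the standard order-statistics argument for shifted exponentials. Hence the edge is cut with probability at most $1-e^{-\lambda}\le\lambda$, and linearity of expectation gives at most $\lambda|E|$ inter-cluster edges in expectation.

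The point needing care is reconciling this bound with the truncation. Capping shifts perturbs the output only on an event of probability $n^{-c}$, which adds at most $n^{-c}|E|$ to the expectation. We absorb this additive term by running with $\lambda/2$ in place of $\lambda$ and taking $c$ large enough that $n^{-c}\le\lambda/2$. When $\lambda<n^{-c}$, the round bound $O(\lambda^{-1}\log n)$ exceeds $n$ anyway. In that regime we can drop the cap, since the algorithm may then gather the whole graph, and the exact analysis above applies.
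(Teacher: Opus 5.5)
The paper gives no proof of this statement. It cites Miller--Peng--Xu and a $\LOCAL$ implementation (e.g.\ \cite{forster2022improved}). Your exponential-shift construction, the truncation at $R$, and the shortest-path closure argument for strong diameter are the standard proof behind that citation.

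\textbf{The cut-edge step is false as written.} A gap larger than $1$ between the top two values $\delta_u-\dist(u,x)$ does \emph{not} force the same winner at $y$. Each individual value changes by at most $1$ from $x$ to $y$. But the winner's value can drop by $1$ while a competitor's value rises by $1$, so the gap can shrink by $2$. For example, take the path with vertices $u,x,y,w$ in this order and shifts $\delta_u=10$, $\delta_w=9.5$, $\delta_x=\delta_y=0.01$:
\begin{itemize}
\item at $x$ the values are $9$ for $u$ and $7.5$ for $w$, a gap of $1.5$;
\item at $y$ they are $8$ for $u$ and $8.5$ for $w$, so $\{x,y\}$ is cut.
\end{itemize}
What your reasoning actually shows is that a cut edge has gap at most $2$ at $x$. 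This gives cut probability at most $1-e^{-2\lambda}\le 2\lambda$, not $\lambda$.

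\textbf{The repair is easy.} Either option works:
\begin{itemize}
\item Absorb the factor $2$ by running with $\lambda/4$ instead of $\lambda/2$. This is allowed, since only $O(\lambda^{-1}\log n)$ bounds are required.
\item As MPX do, measure the gap at the midpoint $m$ of the edge, with $\dist(u,m)=\min\{\dist(u,x),\dist(u,y)\}+\tfrac12$. Every value at $x$ or $y$ is within $\tfrac12$ of the corresponding value at $m$, so a gap larger than $1$ at $m$ does force a common winner. You then recover $1-e^{-\lambda}\le\lambda$.
\end{itemize}

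\textbf{State the memorylessness step as an inequality.} Write $X_{u^*}=\delta_{u^*}-\dist(u^*,x)$ and let $s$ be the runner-up value, conditioned on the other shifts. The excess $X_{u^*}-s$ is exactly $\mathrm{Exp}(\lambda)$ only when $s+\dist(u^*,x)\ge 0$. Otherwise the conditioning is vacuous, because shifts are nonnegative, and the excess is $\mathrm{Exp}(\lambda)$ plus a positive constant. What you need is $\Pr[s<X_{u^*}\le s+t]\le(1-e^{-\lambda t})\Pr[X_{u^*}>s]$ for each fixed $u^*$ and threshold $t$, then summed over $u^*$. This holds in both cases.

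With these corrections, the rest of your argument goes through: the $n^{-c}|E|$ truncation correction, the small-$\lambda$ regime, and the $2R$ strong-diameter bound. For the diameter bound, use the capping variant. Restricting the search radius instead is not equivalent, since it can break the shortest-path closure outside the good event.
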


Combining MPX low-diameter clustering with \Cref{lem:trimming} yields an
introvert clustering.
More precisely, we obtain exactly the guarantees of an
$(\varepsilon,\beta,O(\log n))$-introvert clustering, except that the
bound on the number of inter-cluster edges holds only in expectation.
To construct the layered decomposition, we therefore run this procedure
for a fixed $O(\log n)$ number of iterations and show that all edges have
been assigned with high probability.

\begin{lemma}[Randomized introvert clustering]
\label{lem:random-introvert-layer}
For all constants $0<\varepsilon<1/2$ and $\beta\in(0,1)$, there is
a randomized $O(\log n)$-round $\LOCAL$ algorithm that computes a clustering $\mathcal C$ of a graph $G=(V,E)$.
The clustering $\mathcal C$ satisfies the following properties.
\begin{itemize}
    \item Every cluster has weak diameter $O(\log n)$.
    \item Every $v\in C\in\mathcal C$ satisfies
    $|N_G(v)\cap C|\geq
    \left(\frac12-\varepsilon\right)\deg_G(v)$.
    \item If $U$ is the number of inter-cluster edges with respect to
    $\mathcal C$, then $\mathbb E[U]\leq\beta|E|$.
\end{itemize}
\end{lemma}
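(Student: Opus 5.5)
The plan is to compose \Cref{thm:mpx-clustering} with \Cref{lem:trimming}. First run the MPX clustering with parameter $\lambda=2\varepsilon\beta$. This is a constant in $(0,1)$ because $\varepsilon<1/2$ and $\beta<1$. It takes $O(\lambda^{-1}\log n)=O(\log n)$ rounds and produces a clustering $\mathcal C_0$ with the following properties:
\begin{itemize}
\item every cluster has strong, and hence weak, diameter $O(\log n)$;
\item the number $U_0$ of inter-cluster edges satisfies $\mathbb E[U_0]\leq 2\varepsilon\beta|E|$.
\end{itemize}
Next apply the trimming procedure from the proof of \Cref{lem:trimming} to $\mathcal C_0$, which yields the clustering $\mathcal C$. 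The three required properties then follow from the earlier results:
\begin{itemize}
\item \textbf{Introvert property.} This is the first conclusion of \Cref{lem:trimming}.
\item \textbf{Diameter.} Each cluster of $\mathcal C$ is a subset of a cluster of $\mathcal C_0$, so weak diameter is preserved, as remarked after the lemma.
\item \textbf{Inter-cluster edges.} The deterministic bound $U\leq U_0/(2\varepsilon)$ holds for every outcome of the randomness. Taking expectations gives $\mathbb E[U]\leq\beta|E|$.
\end{itemize}

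The main obstacle is implementing the trimming in $O(\log n)$ $\LOCAL$ rounds. The sequential process in \Cref{lem:trimming} could take many steps. However, the final result of the removal process does not depend on the order of removals: a vertex that violates the condition keeps violating it as its cluster shrinks, so the removals form a monotone closure.

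The bound in \Cref{lem:trimming} holds for any removal order, and in particular for the set of vertices removed by the closure. So it suffices to compute, within each cluster of $\mathcal C_0$, the set of vertices removed by this closure process. Each cluster of $\mathcal C_0$ has strong diameter $O(\log n)$, and the introvert condition depends only on a vertex's own cluster and its degree in $G$. Hence a designated center can gather the entire induced structure of its cluster, together with the degrees $\deg_G(v)$, in $O(\log n)$ rounds. It then simulates the trimming locally and broadcasts the outcome in another $O(\log n)$ rounds.

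Clusters are disjoint, so different clusters do not interfere and can all run in parallel. The total round count is therefore $O(\log n)$.
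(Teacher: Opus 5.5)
Your proposal is correct and follows essentially the same route as the paper: run MPX with $\lambda=2\varepsilon\beta$, apply \Cref{lem:trimming} pointwise to get $U\leq U_0/(2\varepsilon)$ and hence $\mathbb E[U]\leq\beta|E|$, and implement the trimming by having one vertex per strong-diameter-$O(\log n)$ MPX cluster gather the cluster and the $G$-degrees and simulate the process locally. Your order-independence remark is valid (the threshold uses the fixed $\deg_G(v)$ and in-cluster degrees only decrease) but not needed, since the local simulation can follow any one removal order.
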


\begin{proof}
Apply \Cref{thm:mpx-clustering} with
$\lambda=2\varepsilon\beta$.
Let $U_0$ be the number of inter-cluster edges in the resulting
clustering.
Then $\mathbb E[U_0]\leq2\varepsilon\beta|E|$.

We now apply \Cref{lem:trimming}.
The resulting clustering satisfies the introvert property, and
if $U$ denotes its number of inter-cluster edges, then
$U\leq U_0/(2\varepsilon)$.
Therefore,
$\mathbb E[U]\leq\mathbb E[U_0]/(2\varepsilon)\leq\beta|E|$.

Since $\varepsilon$ and $\beta$ are constants, the MPX clustering takes
$O(\log n)$ rounds.
The trimming can also be implemented in $O(\log n)$ rounds.
Indeed, each MPX cluster has strong diameter $O(\log n)$ before
trimming, so one vertex can gather the induced subgraph of the cluster,
together with the degrees in $G$ of all its vertices, in
$O(\log n)$ rounds.
It can then simulate the trimming process locally and communicate the
resulting cluster membership back to the vertices.
Finally, by the observation following \Cref{lem:trimming}, trimming
preserves the $O(\log n)$ weak-diameter bound.
\end{proof}

\subsection{Constructing the Layered Decomposition}

We now repeatedly apply \Cref{lem:random-introvert-layer} to the edges
that remain to obtain the desired layered introvert decomposition.

\randintrovertdecomp*

\begin{proof}
Let $G=(V,E)$, and fix any constant $\beta\in(0,1)$, say
$\beta=1/2$.
We construct the decomposition one layer at a time.
Initially, let $G_1=G$.
At layer $i$, apply \Cref{lem:random-introvert-layer} to $G_i$ and let
$\mathcal C_i$ be the resulting clustering.
We define $E_i$ to be the set of intra-cluster edges of
$\mathcal C_i$ in $G_i$, and let $G_{i+1}$ consist of the remaining
inter-cluster edges.
We do this for $\ell=\lceil c\log n\rceil$ iterations, for a
sufficiently large constant $c$, to construct
$E_1,\ldots,E_\ell$ and $\mathcal C_1,\ldots,\mathcal C_\ell$.

By construction, every cluster of $\mathcal C_i$ has weak diameter
$O(\log n)$.
Moreover, the introvert condition holds: every
$v\in C\in\mathcal C_i$ satisfies
$|N_{G_i}(v)\cap C|\geq
\left(\frac12-\varepsilon\right)\deg_{G_i}(v)$.
Thus, to show that the resulting edge partition and clusterings form an
$(\varepsilon,\ell,O(\log n))$-layered introvert network decomposition,
it suffices to show that $E_1\cup\cdots\cup E_\ell=E$, which is
equivalent to $E(G_{\ell+1})=\emptyset$.

Let $M_i=|E(G_i)|$.
By \Cref{lem:random-introvert-layer}, conditioned on $G_i$,
\begin{equation}
\label{eq:remaining-edges-expectation}
\mathbb E[M_{i+1}\mid G_i]\leq\beta M_i.
\end{equation}
Taking expectations and iterating
\eqref{eq:remaining-edges-expectation} gives
$\mathbb E[M_i]\leq\beta^{i-1}|E|$.
Since $|E|<n^2$,
\[
\mathbb E[M_{\ell+1}]
\leq
\beta^\ell n^2
\leq
n^{2-c\log(1/\beta)}.
\]
As $M_{\ell+1}$ is a nonnegative integer, Markov's inequality gives
\[
\Pr[E_1\cup\cdots\cup E_\ell\neq E]
=
\Pr[M_{\ell+1}>0]
\leq
n^{2-c\log(1/\beta)}.
\]
Since $c$ can be chosen to be an arbitrarily large constant,
$E_1\cup\cdots\cup E_\ell=E$ with high probability, as desired.

Each layer takes $O(\log n)$ rounds by
\Cref{lem:random-introvert-layer}.
Therefore, the entire layered decomposition can be constructed in
$O(\log^2 n)$ randomized rounds with high probability.
\end{proof}

\section{Conclusions and Open Problems}
\label{sec: conclusion}

We introduced \emph{introvert clustering}, a low-diameter clustering
in which every clustered vertex keeps almost half of its neighbors
inside its own cluster.
This extra local guarantee allows network decompositions to be useful
beyond their standard role of parallelizing sequential local
algorithms: it gives enough structure to process clusters even when an
arbitrary partial solution may not be extendable greedily.
Using this idea, we obtained new algorithms for list edge
coloring and locally balanced cut.

For list edge coloring, Elkin, Pettie, and Su~\cite{EPS15} gave a
randomized list $(1+\varepsilon)\Delta$-edge coloring algorithm, which
can also be derandomized using general network decomposition
techniques.
Our algorithm uses more colors, but achieves
$\widetilde O(\log^2 n)$ deterministic rounds throughout the entire
range $\Delta\geq\Delta_0(\varepsilon)$, improving the resulting round
complexity when $\Delta$ is small.
Can one achieve the same $\widetilde O(\log^2 n)$ round complexity with
$(1+\varepsilon)\Delta$ colors? 

For locally balanced cut, our
$\left(\frac14-\varepsilon\right)$ guarantee can be obtained in
$\poly(\log n)$ rounds independently of $\Delta$, whereas the optimal
guarantee $1/2$ requires
$\Omega(\min\{\Delta,\sqrt n\})$ rounds~\cite{Balliu26Potential}.
What happens between these two regimes?
Can one compute a $\left(\frac12-\varepsilon\right)$-locally balanced cut in
$\poly(\log n)$ rounds independently of $\Delta$, or is there an
intermediate threshold beyond which a dependence on $\Delta$ is
unavoidable?

Finally, list edge coloring and locally balanced cut exploit the introvert property in quite different ways, suggesting that its usefulness may extend well beyond these two applications. Finding further applications of introvert clustering, and more broadly developing network decompositions with other useful local guarantees, is an exciting direction for future work.

 \section*{AI Disclosure}  
We used OpenAI's ChatGPT to assist with proofreading and improving the
clarity of the exposition, and to help identify potential issues in
proofs and citations throughout the manuscript.
The authors independently verified all mathematical arguments,
corrections, and references and take full responsibility for the
content of the paper.

\bibliography{references}
 
\appendix

\crefalias{section}{appendix}
\crefalias{subsection}{appendix}

\section{Deterministic Construction of Introvert Decomposition}
\label{app:det-introvert}

In this appendix, we prove the deterministic construction of layered
introvert decompositions
(\Cref{thm:introvert-decomposition-det}).
Our proof is a white-box adaptation of the deterministic network
decomposition algorithm of Ghaffari and
Grunau~\cite{ghaffari_grunau_focs24}.
We first develop the adaptation and prove the resulting deterministic
decomposition theorem.
The proof of a technical lemma needed in the base case is deferred to
\Cref{app:small-loss-proof}.
When referring to numbered lemmas,
theorems, and corollaries of Ghaffari and Grunau, we use the numbering
from \href{https://arxiv.org/abs/2410.19516v1}{arXiv:2410.19516v1}.

\subsection{High-Level Overview}

We begin with a high-level overview of the deterministic construction
and explain how it adapts the recursive network decomposition algorithm of Ghaffari and
Grunau~\cite{ghaffari_grunau_focs24}.
The main challenge is that a direct derandomization of our randomized
construction incurs an extra logarithmic factor, while the recursive
structure of their network decomposition algorithm allows us to avoid
this loss.
Our adaptation requires three additional ingredients: making the base
case introvert, keeping track of the active edges throughout the
recursion, and translating their vertex-based clustering framework to our
edge-based setting through the line graph.

\paragraph{Why a white-box adaptation is needed.}
Recall that our randomized construction
(\Cref{thm:introvert-decomposition-rand}) repeatedly applies an MPX
low-diameter clustering to the remaining edges and then trims the
resulting clusters.
Each application removes a constant fraction of the remaining edges,
so $O(\log n)$ applications suffice.
The known deterministic derandomizations of a single MPX-style
clustering require $\widetilde O(\log^2 n)$
rounds~\cite{GhaffariGHIR23,ghaffari_grunau_focs24}.
Thus, derandomizing these $O(\log n)$ applications separately would
give only a $\widetilde O(\log^3 n)$-round algorithm.

Ghaffari and Grunau~\cite{ghaffari_grunau_focs24} overcome this extra
logarithmic factor for standard network decomposition.
Recall that a network decomposition partitions the vertices into
$O(\log n)$ color classes such that every connected component induced
by one color class has diameter $O(\log n)$.
Rather than constructing the $O(\log n)$ color classes using
independent deterministic low-diameter clustering steps, their
algorithm constructs them together through a recursive procedure.
This reduces the overall complexity from
$\widetilde O(\log^3 n)$ to $\widetilde O(\log^2 n)$.
We adapt this recursive procedure directly.

\paragraph{1. Making the base case introvert.}
The clusters produced by the original construction need not satisfy
our introvert condition.
We modify the clustering procedure used at the base case so that each
clustering step leaves only an arbitrarily small constant fraction of
the edges under consideration inter-cluster.
This gives enough slack to apply trimming (\Cref{lem:trimming}).

\paragraph{2. Keeping track of active edges.}
During the recursive algorithm, some edges are deferred to later
recursive calls.
Nevertheless, until an edge is assigned to a layer, it must still
count toward the degree appearing in the introvert condition.
We therefore maintain throughout the recursion a set of
\emph{active edges}, consisting of all edges that have not yet been
assigned to a layer, and ensure that every introvert guarantee is
measured with respect to this active edge set.
This bookkeeping is needed to translate the recursive construction
into the layered decomposition of
\Cref{thm:introvert-decomposition-det}.

\paragraph{3. From vertex clustering to edge clustering.}
The algorithm of Ghaffari and Grunau decomposes vertices, whereas our
layered decomposition assigns edges to layers.
We bridge this difference using the line graph.
Throughout this appendix, let $\operatorname{Line}(G)$ denote the line
graph of $G$: its vertices are the edges of $G$, and two vertices are
adjacent if and only if the corresponding edges of $G$ share an
endpoint. Algorithms on $\operatorname{Line}(G)$ can be simulated on $G$ with
only a constant-factor overhead in the $\LOCAL$ model.

We make the correspondence precise.
Consider a clustering
$\mathcal F=\{F_1,\ldots,F_k\}$ of $\operatorname{Line}(G)$ such that
distinct clusters are non-adjacent.
Viewing each $F_j$ as a subset of $E(G)$, let
$C_j\subseteq V(G)$ be the set of endpoints of the edges in $F_j$, and
define
$\mathcal C=\{C_1,\ldots,C_k\}$.
Since $F_j$ and $F_{j'}$ are non-adjacent in the line graph for
$j\neq j'$, no edge in $F_j$ shares an endpoint with an edge in
$F_{j'}$.
Hence the sets $C_1,\ldots,C_k$ are pairwise disjoint, and
$\mathcal C$ is a clustering of $G$.
Furthermore, if every cluster $F_j$ has weak diameter at most $D$ in
$\operatorname{Line}(G)$, then every cluster $C_j$ has weak diameter at most
$D+1$ in $G$.

The correspondence also gives exactly the edge counting guarantee that
we need.
Let $A\subseteq E(G)$ be any set of edges, viewed equivalently as a
set of vertices of $\operatorname{Line}(G)$.
Suppose that at most $\eta|A|$ vertices of $A$ are not contained in
any of the clusters $F_1,\ldots,F_k$.

If the line-graph vertex corresponding to an edge $e\in A$ belongs to
some cluster $F_j$, then both endpoints of $e$ belong to $C_j$.
Hence $e$ is intra-cluster with respect to $\mathcal C$.
Therefore, every inter-cluster edge in $A$ must correspond to an
unclustered vertex of $\operatorname{Line}(G)$, and thus
\[
\#\{\text{inter-cluster edges in }A\}
\leq
\#\{\text{unclustered vertices of }A\text{ in }\operatorname{Line}(G)\}
\leq
\eta|A|.
\]
Thus, a low-diameter clustering that covers almost all vertices of the
line graph translates into a low-diameter clustering of $G$ with few
inter-cluster edges.

\paragraph{Roadmap.}
We first strengthen the base clustering procedure of Ghaffari and
Grunau so that the fraction of vertices left unclustered can be made
an arbitrarily small constant.
Via the line graph correspondence above, this gives a clustering of
$G$ with an arbitrarily small fraction of inter-cluster edges, to
which we can apply \Cref{lem:trimming}.
We then insert this modified base case into their recursive framework
and verify that, whenever a layer is formed, its introvert condition is
measured with respect to all edges that have not yet been assigned to
earlier layers.
This completes the proof of
\Cref{thm:introvert-decomposition-det}.

\subsection{Head Starts and Badness}

We review the basic ingredients of the recursive framework of
Ghaffari and Grunau~\cite{ghaffari_grunau_focs24}.

\paragraph{Head starts.}
The clustering procedures underlying the framework are based on the
same idea as MPX clustering.
One way to view MPX clustering is to imagine that every vertex
$v\in V(H)$ grows a BFS tree, with different vertices given different
\emph{head starts}.
If $h(v)$ is the head start of $v$, we may think of the BFS tree rooted
at $v$ as starting at time $-h(v)$.
Each vertex $u$ joins the cluster of a center whose BFS tree reaches
$u$ first, with ties broken arbitrarily.
Equivalently, $u$ chooses a center $v$ minimizing
$\dist_H(u,v)-h(v)$.
In the original randomized MPX clustering algorithm~\cite{MPX13SPAA},
the head starts are chosen randomly; in the deterministic framework of
Ghaffari and Grunau, they are constructed deterministically.

\paragraph{Badness.} For a fixed vertex $u$, several BFS trees may reach $u$ at the same
earliest time.
Controlling the number of such ties is useful because the 
clustering described above can then be made pairwise non-adjacent by
unclustering only a small number of vertices.
The notion of \emph{badness} measures, for each vertex $u$, the
maximum number of centers that can tie for the largest head start at
any distance up to a prescribed radius.

Formally, let $h:V(H)\to\mathbb N_{\geq 0}$ be a head start function.
For $u\in V(H)$ and an integer $r\geq 0$, let
\[
S_r^H(u)=\{v\in V(H):\dist_H(u,v)=r\}.
\]
For an integer $d\geq 0$, the \emph{badness} of $u$ with respect to
$h$ and $d$ is
\[
\bad_{h,d}^H(u)
=
\max_{r \, : \, \substack{0\leq r\leq d\\S_r^H(u)\neq\emptyset}}
\left|
\left\{
v\in S_r^H(u):
h(v)=\max_{w\in S_r^H(u)}h(w)
\right\}
\right|.
\]
When $H$ is clear from context, we omit the superscript.

To interpret this definition, fix a distance $r$ from $u$.
All vertices in $S_r^H(u)$ are equally far from $u$, so among
centers at distance $r$, those with the largest head start are
exactly those whose BFS trees reach $u$ first.
Thus, $\bad_{h,d}^H(u)$ measures the largest number of such tied
centers over all distances up to $d$.
When this quantity is small, the clustering described above can be
made pairwise non-adjacent by unclustering only a small fraction of the
vertices.
 
\paragraph{Recursively reducing the badness.}
The recursive algorithm repeatedly replaces the current head start
function by one with substantially smaller badness for almost all
vertices.
The following sampling lemma is the main tool for this reduction.

Throughout \Cref{app:det-introvert,app:small-loss-proof}, we write
$H=\operatorname{Line}(G)$, where $G$ is the original input graph, and
let $N$ be a known polynomial upper bound on $|V(H)|$.
We assume throughout that $N$ is at least a sufficiently large absolute
constant. This is without loss of generality, since we may replace $N$
by $\max\{N,N_0\}$ for any fixed constant $N_0$.

\begin{lemma}[Ghaffari--Grunau sampling lemma
{\cite[Corollary~5.2]{ghaffari_grunau_focs24}}]
\label{lem:gg-sampling}
There is an absolute constant $c$ for which the following holds.
Let $U\subseteq V(H)$, and consider the parameters
\begin{align*}
N_U &\geq |U|, &
d &= \left\lceil\log N\cdot(\log\log N)^c\right\rceil, &
B &\in[\log^5 N,N], &
B' &= B^{1/2+1/\log^2\log B}.
\end{align*}
Let
$h:V(H)\to\mathbb N_{\geq 0}$ be a head start function satisfying
$\bad_{h,d}^H(u)\leq B$ for every $u\in U$.

There is a deterministic distributed algorithm that, in
$\widetilde O(\log N\log B)$ rounds, computes a new head start
function $h':V(H)\to\mathbb N_{\geq 0}$ with the following properties.
\begin{itemize}
    \item
    $\max_{v\in V(H)}h'(v)
    \leq 2\max_{v\in V(H)}h(v)+1$.
    \item
    All but at most $N_U/B^3$ vertices $u\in U$ satisfy
    $\bad_{h',d}^H(u)\leq B'$.
\end{itemize}
\end{lemma}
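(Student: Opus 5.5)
This statement is not new to the paper: it is Corollary~5.2 of Ghaffari and Grunau~\cite{ghaffari_grunau_focs24}, specialized to $H=\operatorname{Line}(G)$. My plan is therefore to derive it by invoking their result as a black box. The only work is to check that our parameters and our communication model match theirs. I do not plan to reprove the derandomization.

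\textbf{Step 1: match the parameters.} Their corollary is stated for an arbitrary graph on which the algorithm runs, with $N$ a known polynomial upper bound on the number of vertices. I would check the following correspondences.
\begin{itemize}
\item Their set of ``relevant'' vertices, whose badness is assumed bounded by $B$, is our $U$, with $N_U\geq|U|$.
\item Their radius $d=\lceil\log N\,(\log\log N)^c\rceil$ and their range $B\in[\log^5N,N]$ are the same as ours.
\item Their target badness $B'=B^{1/2+1/\log^2\log B}$ and their failure bound of at most $N_U/B^3$ exceptional vertices are the same as ours.
\item The new head starts are obtained by doubling the old ones and adding a deterministically chosen bit. This gives $h'(v)=2h(v)+b(v)$ with $b(v)\in\{0,1\}$, and hence $\max h'\leq 2\max h+1$.
\end{itemize}
Since $N$ is assumed to be at least a large absolute constant, any ``$N$ sufficiently large'' hypothesis in their statement is also satisfied.

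\textbf{Step 2: the simulation overhead.} Their algorithm is a $\LOCAL$ algorithm on $H$. A round on $\operatorname{Line}(G)$ can be simulated in $O(1)$ rounds on $G$: each line-graph vertex is hosted by one endpoint of its edge, and line-graph neighbours are within distance $2$ in $G$. So the $\widetilde O(\log N\log B)$ bound transfers to $G$. Because $N=\poly(n)$, this is also $\widetilde O(\log n\log B)$ in terms of $n$.

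For intuition, I would recall the mechanism. Each vertex $v$ picks a bit $b(v)$. Among the tied maximizers in $S_r(u)$, the survivors are those whose bit is $1$, if any vertex has bit $1$. Under truly random bits, the number of survivors roughly halves and concentrates. Their argument derandomizes this with a pessimistic estimator for the event $\bad_{h',d}(u)>B'$, and fixes the bits by local rounding over $O(\log B)$ stages of $\widetilde O(\log N)$ rounds each. The exponent slack $1/\log^2\log B$ absorbs the deviation.

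\textbf{Main obstacle.} The step needing real care is Step~1, specifically the exact form of the failure bound. Their corollary may be phrased with weights, or relative to an auxiliary set of vertices rather than a plain count over $U$. If so, I would instantiate it with unit weights on $U$ and zero weights elsewhere, so that the weighted loss becomes a count of at most $N_U/B^3$ vertices of $U$. The remaining checks are routine.
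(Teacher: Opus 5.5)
Your proposal matches the paper: it gives no separate proof of this statement and simply imports \cite[Corollary~5.2]{ghaffari_grunau_focs24} as a black box for $H=\operatorname{Line}(G)$, with line-graph algorithms simulated on $G$ at constant overhead. One caveat about your intuition paragraph, which does not affect the argument: a fair bit would only halve the number of tied maximizers (to about $B/2$), so reaching $B'\approx\sqrt{B}$ while keeping $h'(v)\in\{2h(v),2h(v)+1\}$ requires the bit to mark a sparse sample of density roughly $B^{-1/2}$ rather than a fair coin.
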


Thus, except for at most $N_U/B^3$ vertices, one application of
\Cref{lem:gg-sampling} reduces the badness bound from $B$ to roughly
$\sqrt B$.
The recursive algorithm uses this reduction whenever $B$ is large,
recursing with the smaller parameter $B'$.
Once $B$ becomes polylogarithmic in $N$, the algorithm switches to a
base case procedure.

\paragraph{Distributed knowledge.}
In \Cref{lem:gg-sampling}, we assume that $N$, $N_U$, and $B$ are
known to all vertices, and that each vertex $v$ knows whether it belongs
to $U$ and knows its own head start $h(v)$.
We make the analogous assumptions throughout the rest of these
appendices: whenever a lemma or algorithm is given subsets of $V(H)$,
numerical parameters, or a head start function, every vertex $v$
knows the relevant parameters, its membership in the relevant subsets,
and its own head start.
We will omit these standard knowledge assumptions from subsequent
lemma statements.

\subsection{Small-Loss Clustering at the Base Case}
\label{app:small-loss-base}

In the base case of the recursion, where the badness is at
most $\log^{10}N$, Ghaffari and
Grunau~\cite[Lemma~5.3]{ghaffari_grunau_focs24} repeatedly invoke a
low-diameter clustering procedure and eventually leave only a
polylogarithmically small fraction of the vertices unresolved.
For our construction, we will use the same repetition.
However, each individual clustering step needs a stronger guarantee:
before we apply trimming, the fraction of edges left inter-cluster
must be an arbitrarily small constant.

The reason is the loss incurred by trimming.
Suppose that, after translating a clustering of
$\operatorname{Line}(G)$ back to $G$, at most a $\rho$-fraction of
some edge set is inter-cluster.
By \Cref{lem:trimming}, if at most a $\rho$-fraction of the
relevant edges are inter-cluster before trimming, then at most a
$\rho/(2\varepsilon)$-fraction remain afterward.
Thus, by taking $\rho$ to be a sufficiently small constant, we
obtain constant-factor progress.
In the proof for the introvert base case below, we use $\rho=\varepsilon/2$, which leaves at
most $1/4$ of these edges.

The following lemma gives the one-shot clustering primitive that we
need.

\begin{restatable}[Small-loss clustering]{lemma}{smalllossclustering}
\label{lem:small-loss-base}
There is an absolute constant $c$ for which the following holds.
Let $H$ be a graph, let $U\subseteq V(H)$, and let
\[
d=\left\lceil\log N\cdot(\log\log N)^c\right\rceil.
\]
Let $h:V(H)\to\mathbb N_{\geq 0}$ be a head start function satisfying
\[
\max_{v\in V(H)}h(v)\leq\log N
\qquad\text{and}\qquad
\bad_{h,d}^H(u)\leq\log^{10}N
\quad\text{for every }u\in U.
\]
For every constant $\rho>0$, there is a deterministic
$\widetilde O(\log N)$-round algorithm that computes a subset
$U'\subseteq U$ satisfying the following properties.
\begin{itemize}
    \item At most a $\rho$-fraction of $U$ is left unclustered:
    \(|U\setminus U'|\leq\rho|U|\).
    \item Every connected component of $H[U']$ has strong diameter
    $O(\log N)$.
\end{itemize}
\end{restatable}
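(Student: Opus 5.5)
The plan is to obtain $U'$ from a single head-start (MPX-style) ball carving driven by $h$. The carving is made ``coarse'' enough that a vertex of $U$ is lost only when two competing centers reach it at almost the same time, and the badness bound $\bad_{h,d}^H(u)\leq\log^{10}N$ is what limits how often such near-ties can occur. I follow the base step of Ghaffari and Grunau~\cite[Lemma~5.3]{ghaffari_grunau_focs24}. Their analysis of a single step only guarantees that some fixed constant (or $1/\polylog$) fraction of $U$ is clustered. So the new ingredient is an amplification that pushes the unclustered fraction below an arbitrary constant $\rho$ while keeping a single, non-iterated carving. A non-iterated carving matters because clusters produced in different iterations could be adjacent, and their union in $H[U']$ could then have large diameter.

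\textbf{Step 1 (carving with a scale parameter).} Fix an integer $k=\Theta(1/\rho)$ and a shift $j\in\{0,\dots,k-1\}$. Let each $u$ join a center $v$ minimizing
\[
\Phi_j(u,v)=\left\lfloor\frac{\dist_H(u,v)+j}{k}\right\rfloor-h(v),
\]
with ties broken by identifier, restricted to $\dist_H(u,v)\leq d$. Then declare $u$ unclustered when the best value is attained, or nearly attained, by two different centers in a way that could make two clusters adjacent. Since $h\leq\log N$, every $u$ has a center within distance $O(k\log N)=O(\log N)$. This gives strong diameter $O(\log N)$ for each component of $H[U']$, provided that distinct clusters are non-adjacent after the removals; I will verify this by the standard argument that an edge between two clusters forces a near-tie at one of its endpoints.

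\textbf{Step 2 (charging near-ties).} For a fixed $u$, a near-tie means that two centers at distances $r_1,r_2\leq d$ have arrival values within one unit of the best. The badness bound controls ties among equal-distance centers. Tie patterns across different distances are controlled by the shift $j$: as $j$ ranges over $\{0,\dots,k-1\}$, the rounding boundaries $\lfloor(r+j)/k\rfloor$ move, and a fixed pair of competing distances straddles a boundary for only $O(1)$ values of $j$. Counting the relevant competing distance classes against the $k$ shifts, I expect to show that each $u\in U$ is unclustered for at most a $\rho$-fraction of the $k$ shifts. Averaging then yields a shift $j^\ast$ with $|U\setminus U'|\leq\rho|U|$.

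\textbf{Step 3 (deterministic choice of the shift, and round complexity).} The good shift $j^\ast$ must be selected deterministically in $\widetilde O(\log N)$ rounds, but a global aggregation over $H$ is unaffordable. I would first try to make the choice local: let each center $v$ (equivalently, each region of radius $O(d)$) pick its own shift via the method of conditional expectations, using pessimistic estimators that count unclustered vertices. Consistency would come from fixing the shifts of centers according to the $O(\log N)$ head-start levels in order, so each decision needs only $O(d)$-radius information. Each round of this costs $\widetilde O(\log N)$, giving the claimed bound for constant $\rho$.

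I expect the main obstacle to be Step 2 together with its derandomization in Step 3, namely proving that a single carving (not repeated carvings, which risk adjacency between clusters from different rounds) loses at most a $\rho$-fraction for arbitrary constant $\rho$ using only the $\log^{10}N$ badness bound. If the shift-averaging argument fails, the fallback is to run $O(\log(1/\rho))$ successive carvings on the remaining vertices. Before each new carving, I would delete a thin boundary of the already clustered region, with the buffer width chosen via a pigeonhole argument over $\Theta(1/\rho)$ candidate radii. This keeps components of $H[U']$ separated at a total loss of $O(\rho)|U|$.
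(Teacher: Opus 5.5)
There is a genuine gap in your Step~2, and the failure is structural. The hypothesis $\bad_{h,d}^H(u)\leq\log^{10}N$ only limits ties among centers at the \emph{same} distance from $u$. Whether $u$ lies on a cluster boundary, however, is decided by near-ties among centers at \emph{different} distances, and nothing in the hypothesis controls those. The shift does not help either. For centers $v_1,v_2$ at distances $r_1<r_2$ from $u$, the quantity $\lfloor (r_2+j)/k\rfloor-\lfloor (r_1+j)/k\rfloor$ lies in $\{\lfloor (r_2-r_1)/k\rfloor,\lceil (r_2-r_1)/k\rceil\}$ for \emph{every} $j$. So whenever $h(v_2)-h(v_1)=\lfloor (r_2-r_1)/k\rfloor$, the two values are within one unit for all $k$ shifts; a near-tie does not require straddling a rounding boundary.

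A concrete instance: let $H$ be a long path and $h\equiv 0$. Then $\bad_{h,d}^H(u)\leq 2$, so all hypotheses hold. For every $j$, the value $0$ is attained by all vertices within distance $k-1-j$ of $u$, and the value $1$ by those at distance between $k-j$ and $2k-1-j$. Hence every vertex has a near-tie for every shift, and your rule unclusters all of $U$. With identifiers increasing along the path, identifier tie-breaking even sends $x_i$ to $x_{i-(k-1-j)}$, making almost every cell a single vertex. So no shift is good, Step~3 has nothing to derandomize, and a single carving driven by the given head starts does not yield the guarantee.

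The paper uses the badness bound only as an entry condition. It scales $h$ by $q=\lceil\log^{20}\log N\rceil$. This is the analogue of your coarsening, and it does settle same-distance ties: a smaller head start at the same distance now loses by at least $q$. But it concludes only that each $u\in U$ has a $\log^{10}\log N$-hop frontier of size at most $(d+1)\log^{10}N\leq\log^{100}N$, so polylogarithmically many near-tied centers remain. The paper then applies small-loss versions of two Ghaffari--Grunau results:
\begin{itemize}
\item Theorem~A.2 computes \emph{new} head starts that shrink the $\log^2\log N$-hop frontier to $(\log\log N)^c$, losing only an $O(1/\log^8\log N)$ fraction.
\item Theorems~A.3/A.11: the resulting head start clustering has $s$-hop degree at most $\mathrm{DEG}=(\log\log N)^c$ for $s=\Theta(\log^2\log N)$, and clusters are extracted \emph{iteratively}. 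Lemma~A.12 gives an $s$-separated subclustering covering a $1/(8\mathrm{DEG})$ fraction. Ball growing then finds a radius at most $s/3$ with growth factor at most $1+\delta$. All remaining neighbors of the extracted clusters are deleted, at most $\delta|X|$ of them. This repeats $O(\mathrm{DEG}\log(1/\rho))$ times, followed by a strong-diameter cleanup that loses another small constant fraction.
\end{itemize}

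So the arbitrary constant $\rho$ comes from the growth factor $1+\delta$, not from averaging shifts. Contrary to your premise, iteration is harmless, because neighbors are deleted between rounds. Your fallback points in this direction, but it has two problems. It still presupposes a carving step that clusters a constant fraction, which is exactly the part that fails. And it grows buffers around merely non-adjacent clusters, which can merge them. The $s$-separation from Lemma~A.12 is what lets balls of radius up to $s/3$ be grown disjointly and keeps the deleted boundary cheap.
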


We defer the proof of \Cref{lem:small-loss-base} to
\Cref{app:small-loss-proof}. 
\Cref{lem:small-loss-base} is a strengthened version of
\cite[Theorem~A.1]{ghaffari_grunau_focs24}.
Our only substantive change is to strengthen the guarantee on the
unclustered vertices of $U$: instead of allowing up to half of $U$ to
remain unclustered, we ensure that at most a $\rho$-fraction remains
unclustered, for any constant $\rho>0$.
The remaining differences in the statement simply match the interface
used in \cite[Lemma~5.3]{ghaffari_grunau_focs24}.

When $U\subseteq E(G)$ is viewed as a vertex set of
$\operatorname{Line}(G)$, the lemma therefore gives a
clustering of $G$ in which at most $\rho|U|$ edges of $U$ are
inter-cluster.
This is the one-shot guarantee that we will use in the base case.

\subsection{The Interface for Recursive Calls}
\label{app:interface}

Before describing the recursive algorithm, we first formalize the input to a recursive call and the invariants maintained throughout the recursion. This allows us to state both the base case and the recursive step more compactly.

\paragraph{The input to a recursive call.}
Since $H=\operatorname{Line}(G)$, we identify the vertices of $H$ with
the edges of $G$.
Every recursive call in our construction is specified by a tuple
\[
(A,U,N_U,B,h),
\]
with the following interpretation.
\begin{itemize}
    \item $A\subseteq E$ is the set of \emph{active edges}: the
    edges that have not yet been assigned to any earlier layer of the layered introvert network decomposition.
    \item $U\subseteq A$ is the set of edges currently handled by the
    recursive call.
    \item $N_U$ is an upper bound on $|U|$.
    \item $B$ is the current badness bound.
    \item $h:V(H)\to\mathbb N_{\geq0}$ is the current head start
    function.
\end{itemize}
We freely view $A$ and $U$ either as sets of edges of $G$ or, equivalently, as sets of vertices of $H=\operatorname{Line}(G)$, without explicitly distinguishing between the two interpretations.
Throughout the recursion, we maintain the following two invariants.
\begin{description}
    \item[Active set invariant.]
    The number of active edges outside the set currently handled by the
    recursive call satisfies
    \(|A\setminus U|\leq\frac{N_U}{B^3}\).

    \item[Badness invariant.]
    Every edge in $U$ satisfies
    \(\bad_{h,d}^H(e)\leq B\).
\end{description}
The badness invariant applies only to the edges in $U$.
However, the edges in $A\setminus U$ are still active, so they must
also be included when we check the introvert condition for any layer
constructed by the current call.
In addition to these two invariants, we will impose an upper bound on
$\max_{v\in V(H)}h(v)$; the appropriate bound differs between the base
case and the recursive step and will be stated separately for the two
cases.

\paragraph{Appending layers.}
It will be convenient to formalize the output produced by a recursive
call in a way that directly matches the definition of a layered
introvert network decomposition.

Let $A$ be the active edge set at the beginning of the call.
For $t\geq1$, an \emph{$(\varepsilon,D)$-valid block of $t$ layers
starting from $A$} consists of clusterings
$\mathcal C_1,\ldots,\mathcal C_t$ and edge sets
$F_1,\ldots,F_t$ defined as follows.
Set $A_1=A$, and for each $i\in[t]$, require the following.
\begin{description}
    \item[Low diameter.]
    $\mathcal C_i$ is a clustering of $(V,A_i)$, and every cluster of
    $\mathcal C_i$ has weak diameter at most $D$ in $G=(V,E)$.

    \item[Introvert property.]
    Every $v\in C\in\mathcal C_i$ satisfies
    $|N_{(V,A_i)}(v)\cap C|
    \geq
    \left(\frac12-\varepsilon\right)\deg_{(V,A_i)}(v)$.

    \item[Layer assignment.]
    $F_i$ is exactly the set of intra-cluster edges of
    $\mathcal C_i$ in $(V,A_i)$, and
    $A_{i+1}=A_i\setminus F_i$.
\end{description}

Thus, $A_i$ is precisely the set of edges that are active at the
beginning of the $i$th new layer, $F_i$ is the set of edges assigned
to that layer, and $A_{t+1}$ is the active edge set after the block
has been constructed.
These are exactly the three requirements in the definition of a
layered introvert network decomposition, restricted to a consecutive
block of layers.

Valid blocks can therefore be appended one after another: if one
block ends with active edge set $A'$, the next block simply starts
from $A'$.
Consequently, if we start with $A=E$ and keep appending valid blocks
until no active edge remains, then the resulting layer edge sets
partition $E$ and, together with their clusterings, form an
$(\varepsilon,\ell,D)$-layered introvert network decomposition, where
$\ell$ is the total number of layers constructed.

\subsection{The Base Case}
\label{app:introvert-base-case}

With the interface for recursive calls in place, we now turn to the base
case.
We combine \Cref{lem:small-loss-base} with trimming to obtain the base
case of our recursive construction.
This plays the same role as
\cite[Lemma~5.3]{ghaffari_grunau_focs24} in the recursion of Ghaffari
and Grunau.

\begin{lemma}[Base case]
\label{lem:introvert-base-case}
There is an absolute constant $c$ for which the following holds.
Consider a recursive call $(A,U,N_U,B,h)$ with
\[
d=
\left\lceil\log N\cdot(\log\log N)^c\right\rceil,
\qquad
B\in[\log^5 N,\log^{10}N], \qquad \max_{v\in V(H)}h(v)\leq\log N.
\]
For every constant $0<\varepsilon<1/2$, there is a deterministic
$\widetilde O(\log N)$-round algorithm that constructs an
$(\varepsilon,O(\log N))$-valid block of
$O(\log\log N)$ layers starting from $A$.
After this block, at most
\(
\frac{N_U}{B^2}
\) 
edges of $U$ remain active.
\end{lemma}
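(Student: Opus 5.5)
The plan is to iterate a single ``introvert round'' $T=O(\log\log N)$ times. Each round creates exactly one new layer and shrinks the still-active part of $U$ by a constant factor, while the active set invariant keeps the edges of $A\setminus U$ under control.

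Let $U_1=U$ and $A_1=A$. In round $i$ we work with the current handled set $U_i\subseteq A_i$, where $U_i$ is the set of edges of $U$ that are still active. We apply \Cref{lem:small-loss-base} to $U_i$ with the unchanged head start $h$ and $\rho=\varepsilon/4$. The hypotheses on $h$ remain valid because $U_i\subseteq U$ and badness is a property of individual vertices. This yields $U_i'\subseteq U_i$ with $|U_i\setminus U_i'|\le\rho|U_i|$, and every component of $H[U_i']$ has strong diameter $O(\log N)$.

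These components are pairwise non-adjacent in $H$, so by the line graph correspondence they induce a clustering $\mathcal C_i^0$ of $(V,A_i)$. Its clusters are the endpoint sets of the components and have weak diameter $O(\log N)$ in $G$. An edge of $A_i$ is inter-cluster only if it is either in $U_i\setminus U_i'$ or in $A_i\setminus U_i\subseteq A\setminus U$. Hence the number of inter-cluster edges of $\mathcal C_i^0$ in $(V,A_i)$ is at most $\rho|U_i|+N_U/B^3$.

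Next we apply \Cref{lem:trimming} to $\mathcal C_i^0$ in the graph $(V,A_i)$. The resulting clustering $\mathcal C_i$ satisfies the introvert property with respect to $A_i$, which is exactly the condition required for a valid block. Trimming only removes vertices, so the weak diameter stays $O(\log N)$. The number of inter-cluster edges becomes at most
\[
\frac{\rho|U_i|+N_U/B^3}{2\varepsilon}=\frac{|U_i|}{8}+\frac{N_U}{2\varepsilon B^3}.
\]
We set $F_i$ to be the intra-cluster edges, $A_{i+1}=A_i\setminus F_i$, and $U_{i+1}=U\cap A_{i+1}$. Then
\[
|U_{i+1}|\le\frac{|U_i|}{8}+\frac{N_U}{2\varepsilon B^3}.
\]

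Unrolling this recursion gives
\[
|U_{T+1}|\le 8^{-T}N_U+\frac{N_U}{\varepsilon B^3}.
\]
Since $B\le\log^{10}N$, choosing $T=\lceil\log_8 B^2\rceil+1=O(\log\log N)$ makes the first term at most $N_U/(8B^2)$. Since $B\ge\log^5N$ and $N$ is large, $\varepsilon$ is a constant and so the second term is also at most $N_U/(2B^2)$. Together these give the required bound $N_U/B^2$.

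Each round costs $\widetilde O(\log N)$ for \Cref{lem:small-loss-base} plus $O(\log N)$ for trimming, which is done by gathering each cluster of strong diameter $O(\log N)$ in $H$. Over $T$ rounds the total is $\widetilde O(\log N)$, since the $O(\log\log N)$ factor is absorbed by $\widetilde O$.

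The main obstacle I expect is the bookkeeping of the active-but-unhandled edges $A\setminus U$. They count toward the degrees in the introvert condition and toward the inter-cluster edges before trimming, but they are never clustered by \Cref{lem:small-loss-base}. The key point is that these are at most $N_U/B^3$ edges in every round, so they contribute only a non-accumulating additive error after trimming. One must also check that $\mathcal C_i^0$ is a legitimate clustering of $(V,A_i)$ even though some edges of $A_i$ are not in $U_i$. This holds because the clusters are determined solely by $U_i'$, and disjointness follows from non-adjacency in $H$.
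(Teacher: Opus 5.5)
Your proposal is correct and follows essentially the same route as the paper: in each round you apply the small-loss clustering lemma to the still-active part of $U$, translate it to a clustering of $G$ via the line graph, and trim in $(V,A_i)$. You then unroll a recurrence whose additive term comes from the at most $N_U/B^3$ edges of $A\setminus U$. The only differences are cosmetic: you take $\rho=\varepsilon/4$, giving contraction factor $1/8$, whereas the paper takes $\rho=\varepsilon/2$, giving factor $1/4$, with the number of iterations adjusted accordingly.
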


\begin{proof}
Let $A_1=A$ and $U_1=U$.
At the beginning of iteration $i$, let $A_i$ be the current active
edge set and let $U_i=U\cap A_i$.
Since $U_i\subseteq U$, we still have
$\bad_{h,d}^H(e)\leq B\leq\log^{10}N$ for every $e\in U_i$. 

\paragraph{One iteration.}
Apply \Cref{lem:small-loss-base} to $U_i$ with $\rho=\varepsilon/2$.
Translate the resulting connected components in the line graph into a
clustering $\mathcal C_i$ of $G$.
By \Cref{lem:small-loss-base}, every cluster of $\mathcal C_i$ has weak diameter $O(\log N)$ in $G$.

An active edge can be inter-cluster with respect to
$\mathcal C_i$ only if it lies outside $U_i$, or if it belongs to
$U_i$ but is among the at most $(\varepsilon/2)|U_i|$ edges left
unclustered by \Cref{lem:small-loss-base}.
Since
$A_i\setminus U_i\subseteq A\setminus U$, the current number
of active inter-cluster edges is at most
\[
|A\setminus U|+\frac{\varepsilon}{2}|U_i|.
\]

Now we do the trimming step. Apply \Cref{lem:trimming} to $\mathcal C_i$ in the active graph
$(V,A_i)$, and let $\widehat{\mathcal C}_i$ be the resulting
clustering.
Every clustered vertex now satisfies the
$(1/2-\varepsilon)$-introvert condition with respect to $A_i$, and
trimming does not increase the weak diameter.
Let $F_i$ be the set of intra-cluster edges of
$\widehat{\mathcal C}_i$ in $(V,A_i)$, assign $F_i$ to the current
layer, and set $A_{i+1}=A_i\setminus F_i$.
Thus, $A_{i+1}$ is exactly the set of active inter-cluster edges after
trimming.

By \Cref{lem:trimming},
\[
|A_{i+1}|
\leq
\frac{1}{2\varepsilon}
\left(
|A\setminus U|+\frac{\varepsilon}{2}|U_i|
\right)
=
\frac{|A\setminus U|}{2\varepsilon}
+\frac14|U_i|.
\]
Since $U_{i+1}\subseteq A_{i+1}$, we obtain the recurrence
\[
|U_{i+1}|
\leq
\frac{|A\setminus U|}{2\varepsilon}
+\frac14|U_i|.
\]
\paragraph{Residual active edges.} Iterating the recurrence for
$t=\lceil\log_4(2B^2)\rceil =O(\log\log N)$ iterations: 
\begin{align*}
|U_{t+1}|
&\leq
4^{-t}|U|
+
\frac{|A\setminus U|}{2\varepsilon}
\sum_{j=0}^{t-1}4^{-j}
&& \text{by repeatedly applying the recurrence} \\
&\leq
4^{-t}|U|
+
\frac{2}{3\varepsilon}|A\setminus U|
&& \text{since $\sum_{j=0}^{t-1}4^{-j}\leq 4/3$} \\
&\leq
4^{-t}N_U
+
\frac{2N_U}{3\varepsilon B^3}
&& \text{by $|U|\leq N_U$ and the active set invariant
$|A\setminus U|\leq N_U/B^3$} \\
&\leq
\frac{N_U}{2B^2}
+
\frac{2N_U}{3\varepsilon B^3}
&& \text{since $t=\lceil\log_4(2B^2)\rceil$} \\
&\leq
\frac{N_U}{2B^2}
+
\frac{N_U}{2B^2}
&& \text{since $B\geq\log^5N = \omega(1)$} \\
&=
\frac{N_U}{B^2}.  
\end{align*}
Therefore, after this block, at most
\(
\frac{N_U}{B^2}
\) 
edges of $U$ remain active.

\paragraph{Validity and round complexity.}
By construction, each iteration produces one layer satisfying the
three conditions of an
$(\varepsilon,O(\log N))$-valid block: the clusters have weak
diameter $O(\log N)$, the introvert property is measured with respect
to the active edge set $A_i$, and exactly the active intra-cluster
edges are assigned to the layer.
Therefore, the $t=O(\log\log N)$ layers form an
$(\varepsilon,O(\log N))$-valid block starting from $A$.

Each iteration applies \Cref{lem:small-loss-base} followed by the
trimming step of \Cref{lem:trimming}, and takes
$\widetilde O(\log N)$ rounds.
Since there are $O(\log\log N)$ iterations, the total round complexity
is $\widetilde O(\log N)$.
\end{proof}

\subsection{The Recursive Step}
\label{app:recursive-introvert}

We now describe the recursive step of our construction.
It enhances 
\cite[Lemma~5.4]{ghaffari_grunau_focs24} by incorporating the invariant about active edges \(|A\setminus U|\leq\frac{N_U}{B^3}\) introduced above.

For convenience, define
\[
\Phi(B)
=
\left(4.5-\frac{2\log B}{\log N}\right)
\left(1+\frac{100}{\log\log B}\right)
\frac{\log N}{\log B}.
\]
This is the upper bound on the maximum head start maintained in
Algorithm~1 and the proof of Lemma~5.4 of Ghaffari and Grunau.
Its precise form will not be important to us.
For
\[
B'=B^{1/2+1/\log^2\log B},
\]
their analysis gives
\begin{equation}
\label{eq:phi-recursion}
2\Phi(B)+1\leq\Phi(B')
\end{equation}
whenever $B$ is larger than a universal constant. Since throughout our recursion $B\geq\log^5 N = \omega(1)$, the condition holds.
This inequality is verified in the proof of
\cite[Lemma~5.4]{ghaffari_grunau_focs24}; we will use it without
repeating the calculation.

\paragraph{Structure of the recursive step.}
When $B$ is larger than the range handled by the base case, the
sampling lemma first reduces the badness bound from $B$ to
$B'=B^{1/2+1/\log^2\log B}$ for all but at most $N_U/B^3$ edges of
$U$.
We then make two recursive calls with parameter $B'$.
The first call handles the edges satisfying this improved badness
bound, and the second call handles the subset of these edges that
remain active after the first call, using a smaller value of $N_U$.

At first sight, the active set invariant \(|A\setminus U|\leq\frac{N_U}{B^3}\) may seem difficult to
preserve in the second call: its value of $N_U$ is smaller, while we
cannot rely on the number of active edges outside the set $U$ handled
by the call to decrease.
The key point is that $B$ also decreases, from $B$ to roughly
$B'=B^{1/2+1/\log^2\log B}$.
Since the active set invariant allows $N_U/B^3$ such edges, this
decrease in $B$ creates enough additional slack to compensate for the
smaller value of $N_U$.
The proof below verifies this formally.

\begin{lemma}[Recursive layered introvert decomposition]
\label{lem:recursive-introvert}
There is an absolute constant $c$ for which the following holds.
Consider a recursive call $(A,U,N_U,B,h)$ with
\[
d=\left\lceil\log N\cdot(\log\log N)^c\right\rceil,
\qquad
B\in[\log^5 N,N], \qquad \max_{v\in V(H)} h(v)\leq \Phi(B).
\]
For every constant $0<\varepsilon<1/2$, there is a deterministic
$\widetilde O(\log N\log B)$-round algorithm that constructs an
$(\varepsilon,O(\log N))$-valid block of $O(\log B)$ layers starting
from $A$.
After this block, at most
\(
\frac{N_U}{B^2}
\)
edges of $U$ remain active.
\end{lemma}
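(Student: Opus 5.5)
We will prove the lemma by induction on $B$, mirroring the structure of \cite[Lemma~5.4]{ghaffari_grunau_focs24}. If $B\leq\log^{10}N$, then $\Phi(B)\geq\log N$ may fail, so we first check the base of the induction carefully: for $B\in[\log^5N,\log^{10}N]$ we have $\Phi(B)\leq(4.5)(1+o(1))\log N/\log B$, which is at most $\log N$ since $\log B\geq 5\log\log N$. Hence the hypotheses of \Cref{lem:introvert-base-case} hold, and that lemma directly gives a valid block of $O(\log\log N)\subseteq O(\log B)$ layers in $\widetilde O(\log N)$ rounds, leaving at most $N_U/B^2$ edges of $U$ active.

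For $B>\log^{10}N$, we proceed in three steps. First, apply \Cref{lem:gg-sampling} to $U$ with $N_U$, $B$, $h$. This yields $h'$ with $\max h'\leq 2\Phi(B)+1\leq\Phi(B')$ by \eqref{eq:phi-recursion}, and a set $U^{(1)}\subseteq U$ of edges with $\bad_{h',d}(e)\leq B'$, where $|U\setminus U^{(1)}|\leq N_U/B^3$. Note $B'\geq\sqrt B\geq\log^5N$, so $B'$ is in the allowed range. Second, recurse on $(A,U^{(1)},N_U,B',h')$. The active set invariant holds because $|A\setminus U^{(1)}|\leq|A\setminus U|+|U\setminus U^{(1)}|\leq 2N_U/B^3\leq N_U/B'^3$, using $B'^3\leq B^{3/2+o(1)}\leq B^3/2$. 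This call produces a valid block starting from $A$, ending at some $A'$, with $|U^{(1)}\cap A'|\leq N_U/B'^2$. Third, recurse on $(A',U^{(2)},N_U/B'^2,B',h')$ with $U^{(2)}=U^{(1)}\cap A'$. The badness invariant is inherited. For the active set invariant we need
\[
|A'\setminus U^{(2)}|\leq|A\setminus U|+|U\setminus U^{(1)}|\leq\frac{2N_U}{B^3}\leq\frac{N_U}{B'^5},
\]
which holds since $B'^5\leq B^{5/2+5/\log^2\log B}\leq B^3/2$ for $B$ above a universal constant. The second call ends with at most $(N_U/B'^2)/B'^2=N_U/B'^4$ edges of $U^{(2)}$ active. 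The edges of $U$ still active at the end are therefore at most $N_U/B'^4+|U\setminus U^{(1)}|\leq N_U/B'^4+N_U/B^3\leq N_U/B^2$, using $B'^4\geq B^2\cdot B^{4/\log^2\log B}\geq 2B^2$. Concatenating the two blocks gives a valid block starting from $A$, by the appending property.

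The complexity recursions are $T(B)=\widetilde O(\log N\log B)+2T(B')$ for rounds and $L(B)=2L(B')$ for layers, with base values $\widetilde O(\log N)$ and $O(\log\log N)$. Since $\log B'=(1/2+1/\log^2\log B)\log B$, the doubling of calls is offset by the halving of $\log B$ up to a factor $\prod(1+2/\log^2\log B_j)=O(1)$ over the $O(\log\log B)$ levels. Thus $\sum_j 2^j\log B_j=O(\log B)$, giving $T(B)=\widetilde O(\log N\log B)$. The number of base-case leaves is $O(\log B/\log\log N)$, each contributing $O(\log\log N)$ layers, so there are $O(\log B)$ layers in total.

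The main obstacle is the slack bookkeeping in the active set invariant for the second call, where $N_U$ shrinks to $N_U/B'^2$ while $A\setminus U$ does not shrink. The resolution is exactly the exponent inequality $B'^5\ll B^3$, which relies on $B'$ being about $\sqrt B$. The remaining care is in the constants of the base transition, namely that $\Phi(B)\leq\log N$ once $B\leq\log^{10}N$, and in checking that the product bound is $O(1)$. For both of these we would defer to Ghaffari–Grunau's calculation.
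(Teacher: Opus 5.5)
Your proposal is correct and follows the paper's proof essentially step for step: the same base-case check that $\Phi(B)\le\log N$, the same two recursive calls $(A,U^{(1)},N_U,B',h')$ and $(A',U^{(2)},N_U/(B')^2,B',h')$, the same use of $(B')^5\ll B^3$ to restore the active set invariant in the second call, and the same residual count $N_U/B^3+N_U/(B')^4<N_U/B^2$. There is one small slip in your complexity accounting: each of the $O(\log\log B)$ recursion levels contributes $2^j\log B_j=O(\log B)$, so the sum is $O(\log B\log\log B)$ rather than $O(\log B)$; the extra $\log\log B\le\log\log N$ factor is absorbed by $\widetilde O$, so $T(B)=\widetilde O(\log N\log B)$ still holds.
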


\begin{proof}
We follow the recursion of
\cite[Lemma~5.4]{ghaffari_grunau_focs24}, while additionally carrying
the active edge set $A$ through every recursive call.

\paragraph{Base case.}
Suppose that $B\leq\log^{10}N$.
Since $\Phi(B)\leq\log N$, the assumptions of \Cref{lem:introvert-base-case} are
satisfied, and that lemma directly gives an
$(\varepsilon,O(\log N))$-valid block of $O(\log\log N)$ layers,
leaving at most $N_U/B^2$ edges of $U$ active.
Since $B\geq\log^5N$, we have
$O(\log\log N)=O(\log B)$, as required.

\paragraph{Inductive step.}
Suppose now that $B>\log^{10}N$.
Set $B' = B^{1/2+1/\log^2\log B}$.
Apply \Cref{lem:gg-sampling} to $U$.
Let $h'$ be the resulting head start function, and define
\[
U^{(1)}
=
\{e\in U:\bad_{h',d}^H(e)\leq B'\}.
\]
By \Cref{lem:gg-sampling},
\[
|U\setminus U^{(1)}|\leq\frac{N_U}{B^3}.
\]
Moreover, by \Cref{lem:gg-sampling} and \eqref{eq:phi-recursion},
\[
\max_{v\in V(H)}h'(v)
\leq
2\max_{v\in V(H)}h(v)+1
\leq
2\Phi(B)+1
\leq
\Phi(B'),
\]
Thus $h'$ satisfies the head start requirement for recursive calls
with parameter $B'$.

\paragraph{First recursive call.}
We first recurse on $U^{(1)}$, keeping the full active edge set $A$.
The active set invariant is preserved because
\[
|A\setminus U^{(1)}|
=
|A\setminus U|+|U\setminus U^{(1)}|
\leq
\frac{2N_U}{B^3}
\leq
\frac{N_U}{(B')^3},
\]
where the last inequality holds because $B' = o(B)$.
The badness invariant holds by the definition of $U^{(1)}$.
Hence the induction hypothesis applies to
$(A,U^{(1)},N_U,B',h')$.

Let $A^{(2)}$ be the active edge set after this first recursive call,
and set
\[
U^{(2)}=U^{(1)}\cap A^{(2)}.
\]
By the induction hypothesis,
\(
|U^{(2)}|\leq\frac{N_U}{(B')^2}
\).
We therefore define
\[
N_U^{(2)}=\frac{N_U}{(B')^2}.
\]

\paragraph{Second recursive call.}
We now recurse on $U^{(2)}$ with active edge set $A^{(2)}$.
An edge of $A^{(2)}\setminus U^{(2)}$ must already have been outside
$U^{(1)}$ before the first recursive call.
Therefore, the active set
invariant is satisfied:
\[
|A^{(2)}\setminus U^{(2)}|
\leq
|A\setminus U^{(1)}|
\leq
\frac{2N_U}{B^3}
\leq
\frac{N_U}{(B')^5}
=
\frac{N_U^{(2)}}{(B')^3},
\]
where the inequality $\frac{2N_U}{B^3}
\leq
\frac{N_U}{(B')^5}$ is due to $(B')^5 \in o(B^3)$.
Since $U^{(2)}\subseteq U^{(1)}$, the badness invariant remains
valid.
Thus the induction hypothesis applies to
$(A^{(2)},U^{(2)},N_U^{(2)},B',h')$.

\paragraph{Residual active edges.}
We append the valid block produced by the second recursive call to the
one produced by the first.
By the definition of a valid block, their concatenation is again a
valid block starting from $A$.

It remains to bound the number of edges of the original set $U$ that
are still active.
Such an edge either belongs to $U\setminus U^{(1)}$, or belongs to
$U^{(2)}$ and survives the second recursive call.
Hence their number is at most
\[
\frac{N_U}{B^3}
+
\frac{N_U^{(2)}}{(B')^2}
=
\frac{N_U}{B^3}
+
\frac{N_U}{(B')^4}
<
\frac{N_U}{B^2},
\]
due to the choice of $B' = B^{1/2+1/\log^2\log B}$.
 
\paragraph{Number of layers and round complexity.} As in the proof of \cite[Lemma~5.4]{ghaffari_grunau_focs24}, a standard analysis of the recursion with two subproblems with parameter $B^{1/2+1/\log^2\log B}$ and $B\leq\log^{10}N$ being the base case shows that there are $O(\log B/\log\log N)$ base case calls. Each base case constructs $O(\log\log N)$ layers in $\widetilde O(\log N)$ rounds, so the total number of layers is $O(\log B)$, and the total round complexity is $\widetilde O(\log N\log B)$.
\end{proof}

\subsection{Completing the Deterministic Construction}

We now apply \Cref{lem:recursive-introvert} to the original graph and
complete the proof of \Cref{thm:introvert-decomposition-det}. 

\detintrovertdecomp*

\begin{proof} 
Let $G=(V,E)$ and let $H=\operatorname{Line}(G)$.
Choose a common polynomial upper bound $N$ such that
$|V(H)|=|E|<N$ and $N\geq2$.
Since we may take $N=\poly(n)$, we have $\log N=O(\log n)$.

For the initial recursive call, set
\[
A=U=E,\qquad N_U=N,\qquad B=N,
\]
and let $h(v)=0$ for every $v\in V(H)$.
The two invariants are immediate.
Indeed, $A\setminus U=\emptyset$, and, for every $e\in U$,
\(\bad_{h,d}^H(e)\leq |V(H)|<N=B\). 
Moreover, $|U|<N=N_U$ and
$\max_{v\in V(H)}h(v)=0\leq\Phi(B)$.
Thus, all assumptions of
\Cref{lem:recursive-introvert} are satisfied.

Applying \Cref{lem:recursive-introvert} gives an
$(\varepsilon,O(\log N))$-valid block of $O(\log N)$ layers starting
from $A=E$.
After this block, the number of edges of $U=E$ that remain active is
at most
\(\frac{N_U}{B^2}
=
\frac{1}{N}
<1\).

A valid block that
starts with active edge set $E$ and ends with no active edge forms a
layered introvert network decomposition: its layer edge sets partition
$E$, and every layer satisfies the required diameter and introvert
conditions.
We therefore obtain an
$(\varepsilon,O(\log N),O(\log N))$-layered introvert network
decomposition of $G$.

Finally, \Cref{lem:recursive-introvert} runs in
$\widetilde O(\log N\log B)=\widetilde O(\log^2 n)$ rounds.
Since $\log N=O(\log n)$, the resulting decomposition has parameters
$(\varepsilon,O(\log n),O(\log n))$, completing the proof.
\end{proof}

\section{Proof of the Small-Loss Clustering Lemma}
\label{app:small-loss-proof}

In this appendix, we prove the small-loss clustering lemma
(\Cref{lem:small-loss-base}) used in the base case of
\Cref{app:det-introvert}.
While both \Cref{app:det-introvert,app:small-loss-proof} are based on
adapting the algorithm of Ghaffari and
Grunau~\cite{ghaffari_grunau_focs24}, the nature of the adaptation is
different.

In \Cref{app:det-introvert}, the introvert condition forces us to take
into account all active edges, including those outside the set handled
by a recursive call.
This requires changing the invariants maintained by the recursion and
carefully checking that the modified invariants remain valid.
For this reason, we gave an essentially complete proof of the
recursive construction there.

The modification needed here is conceptually simpler.
The base case of Ghaffari and Grunau clusters at least half of the
relevant vertices, whereas we need the fraction left unclustered to be
an arbitrarily small constant $\rho>0$.
The particular constant $1/2$ is not essential to their construction.
Rather than reproducing their entire proof, we therefore
follow its structure and examine each step that contributes
to this constant loss, verifying that each such loss can be
made arbitrarily small.

\subsection{From Small Badness to Small Frontiers}
\label{app:small-loss-main}

We first restate the lemma to be proved.

\smalllossclustering*

We use the following small-loss version of
\cite[Theorem~A.1]{ghaffari_grunau_focs24}.
For a head start function $h$, the \emph{$r$-hop frontier} of a
vertex $u$ consists of the vertices whose shifted distance
$\dist_H(u,v)-h(v)$ is within $r$ of the minimum possible value.
Formally, it is the set of vertices $v$ satisfying
\[
\dist_H(u,v)-h(v)
\leq
\min_{w\in V(H)}
\left(
\dist_H(u,w)-h(w)
\right)+r.
\]

\begin{lemma}[Small-loss version of
{\cite[Theorem~A.1]{ghaffari_grunau_focs24}}]
\label{lem:gg-small-loss-a1}
Let $H$ be a graph, let $U\subseteq V(H)$, and let
$h:V(H)\to\mathbb N_{\geq0}$ be a head start function satisfying
\(\max_{v\in V(H)}h(v)=\widetilde O(\log N)\). 
Suppose that, for every $u\in U$, the
$\log^{10}\log N$-hop frontier of $u$ has size at most
$\log^{100}N$.
Then, for every constant $\rho>0$, there is a deterministic
$\widetilde O(\log N)$-round algorithm that computes a subset
$U'\subseteq U$ satisfying the following properties.
\begin{itemize}
    \item At most a $\rho$-fraction of $U$ is left unclustered:
    $|U\setminus U'|\leq\rho|U|$.
    \item Every connected component of $H[U']$ has strong diameter
    $O(\log N)$.
\end{itemize}
\end{lemma}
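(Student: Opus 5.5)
We follow the proof of \cite[Theorem~A.1]{ghaffari_grunau_focs24} essentially line by line, changing only the places where the constant $1/2$ enters. Recall the structure of their argument. Each vertex $u\in U$ joins the cluster of a center $v$ minimizing $\dist_H(u,v)-h(v)$. Ties, and more generally near-ties within the frontier, are resolved by a deterministic procedure that grows clusters with a slowly increasing radius. A vertex is left unclustered exactly when it lies on a ``boundary'', that is, when the relevant frontier changes between consecutive scales. The bounded frontier size, at most $\log^{100}N$ within $\log^{10}\log N$ hops, is what lets them charge boundary vertices. Among roughly $\log^{10}\log N$ candidate scales, a frontier of polylogarithmic size can change at most $O(\log(\log^{100}N))=O(\log\log N)$ times in a multiplicative sense. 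Averaging over the scales therefore shows that at most a $1/2$-fraction of $U$ is cut.

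Our plan is to make this averaging finer. Fix the constant $\rho$. The number of candidate scales (hops) available is $\log^{10}\log N$, while the number of ``expensive'' scales is only $O(\log\log N)$. Their averaging step thus already has slack of order $\log^{9}\log N$ in the ratio. The bound of $1/2$ comes from choosing a threshold so that a scale counts as good if the frontier grows by at most a factor of $(1+\delta)$ there. We instead choose $\delta=\delta(\rho)$, or equivalently require the boundary layer to carry at most a $\rho/k$-fraction of the mass, where $k$ is the number of stages at which loss occurs. A pigeonhole over the $\log^{10}\log N$ scales still produces a good scale, because $\log_{1+\delta}(\log^{100}N)=O_\rho(\log\log N)\ll\log^{10}\log N$.

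Concretely, we will trace each of the (constantly many) steps of their proof that loses a constant fraction:
\begin{itemize}
\item[(i)] the choice of a good radius via the frontier-growth pigeonhole;
\item[(ii)] the deterministic rounding or derandomization step that fixes the radius choice using pessimistic estimators or conditional expectations, which loses only a constant factor in the expected cut mass;
\item[(iii)] the final step that discards vertices whose cluster assignment is ambiguous, so that distinct components become non-adjacent in $H[U']$.
\end{itemize}
In each of these we will replace the target loss by $\rho/3$. This only changes constants depending on $\rho$ in the number of iterations or in the granularity of the scales. Because $\rho$ is a constant, the round complexity remains $\widetilde O(\log N)$: each stage is a BFS to depth $O(\log N)+\max h=\widetilde O(\log N)$, repeated $O_\rho(\poly\log\log N)$ times.

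The strong diameter bound is unaffected. Clusters are still subsets of shifted-BFS balls of radius $O(\log N)$ around their centers. Their argument that components of $H[U']$ are connected inside these balls, which holds because every vertex on a shortest path to its center prefers that same center, does not depend on the loss parameter.

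The main obstacle is step (ii). If their deterministic selection is a concrete procedure whose analysis is tuned to ``at least half'', we need a potential-function argument that tolerates an arbitrary constant target. Our first attempt will be to reformulate that step as the selection of a threshold minimizing a weighted count of cut vertices. By averaging, the minimizer is at most the mean, and the mean is at most $\rho/3$ once $\delta$ is small. If this direct route fails, the fallback is to repeat their $1/2$-loss clustering on the leftover vertices $O(\log(1/\rho))$ times, provided the frontier hypothesis is hereditary to subsets of $U$ and components from different rounds can be separated without further loss. The main obstacle there is non-adjacency across rounds, which is why we prefer the first route.
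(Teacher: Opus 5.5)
There is a genuine gap. The proposal never supplies the mechanism that drives the loss below an arbitrary $\rho$ while keeping the final clusters non-adjacent, and it places the constant loss in the wrong steps.

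Your per-vertex pigeonhole says that a frontier of size $\log^{100}N$ can grow by a factor $1+\delta$ at only $O_\rho(\log\log N)$ of the $\log^{10}\log N$ scales. That argument corresponds in spirit to the frontier-reduction stage, \Cref{lem:gg-small-loss-a2}, whose loss is already $O(1/\log^8\log N)$ and needs no retuning. After that stage a vertex may still see up to $(\log\log N)^c$ competing centers, and the head start clusters themselves can be useless. Take $h\equiv 0$ on a long path: every hypothesis of the lemma holds, yet all head start clusters are singletons. Keeping them and discarding ``ambiguous'' vertices until distinct clusters are non-adjacent leaves an independent set, so you lose half of $U$ whatever $\delta$ is.

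Clusters therefore have to be regrown, and the regrown balls must be kept apart. In the paper, $\mathcal C_h$ serves only to certify the $s$-hop degree bound $\mathrm{DEG}$. The clusters themselves come from \Cref{lem:gg-small-loss-a14,lem:gg-small-loss-a11}, in three moves:
\begin{itemize}
\item First, it extracts an $s$-separated subclustering using Ghaffari--Grunau's Lemma~A.12, unmodified, even though it covers only $|R|/(8\mathrm{DEG})$ vertices.
\item Next, it grows each seed $C'$ to a radius $k\le\lfloor s/3\rfloor$ with $|C'_{\le k+1}|\le(1+\delta)|C'_{\le k}|$. Separation makes the grown balls disjoint and bounds the mass of seeds with no good radius by $(1+\delta)^{-\lfloor s/3\rfloor}|R|$. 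Hence the extracted set $X$ has at most $\delta|X|$ neighbours in $R\setminus X$.
\item Finally, it iterates $T=\lceil 16\mathrm{DEG}\ln(4/\rho)\rceil$ times, each time deleting the extracted clusters together with their neighbours. The deleted neighbours are charged to the $\delta|X|$ boundary bound.
\end{itemize}
This small-boundary guarantee is exactly what your fallback (rerunning a $1/2$-loss black box on the leftovers) lacks, which is why it stalls on non-adjacency across rounds. There is also no constant-loss derandomization step to retune. The only lossy subsampling step loses a factor $8\mathrm{DEG}$, and this is compensated by iteration, not by changing the step.

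Your claim that the strong diameter bound is unaffected is also incorrect. Since $\max h$ is only $\widetilde O(\log N)$, head start clusters have radius $\widetilde O(\log N)$, not $O(\log N)$. Moreover, the fact that vertices on a shortest path to a center prefer that center gives connectivity in $H$, not in $H[U']$, once vertices outside $U'$ are removed. This is why the paper claims only weak diameter after restricting $\mathcal C_h$ to $U$. Strong diameter $O(\log N)$ comes from a separate final step: a standard low-diameter decomposition inside each $2$-separated cluster of weak diameter $\widetilde O(\log N)$, deleting a $\rho/2$-fraction. That step is itself lossy and must be budgeted, which is why the total is $(1-\rho/2)^2\ge 1-\rho$.
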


The proof of \Cref{lem:gg-small-loss-a1} is deferred to \Cref{app:small-loss-a1}.
We first show that it implies \Cref{lem:small-loss-base}.

\begin{proof}[Proof of \Cref{lem:small-loss-base}]
Let
\[
q=\left\lceil\log^{20}\log N\right\rceil
\]
and define the scaled head start function
$\widetilde h(v)=q h(v)$.
We verify that $\widetilde h$ satisfies the assumptions of
\Cref{lem:gg-small-loss-a1}.

Fix $u\in U$ and consider a vertex $v$ in the
$\log^{10}\log N$-hop frontier of $u$ with respect to
$\widetilde h$.
First, $v$ is at distance at most $d$ from $u$.
Indeed, by taking $w=u$ in the definition of the frontier,
\[
\dist_H(u,v)-\widetilde h(v)
\leq
-\widetilde h(u)+\log^{10}\log N,
\]
and hence
\[
\dist_H(u,v)
\leq
\widetilde h(v)-\widetilde h(u)+\log^{10}\log N
\leq
q\log N+\log^{10}\log N
\leq d,
\]
where the last inequality holds for a sufficiently large choice of
the constant $c$ in the definition of $d$.

Next, among all vertices at distance
$r=\dist_H(u,v)$ from $u$, the vertex $v$ must have maximum head
start with respect to $h$.
Indeed, if some vertex $w$ at the same distance satisfied
$h(w)>h(v)$, then
\[
\widetilde h(w)-\widetilde h(v)
\geq q
>
\log^{10}\log N,
\]
which contradicts the assumption that $v$ belongs to the frontier.

Consequently, for every distance $r\leq d$, the number of frontier
vertices at distance $r$ from $u$ is at most
$\bad_{h,d}^H(u)\leq\log^{10}N$.
Therefore, the entire frontier has size at most
\[
(d+1)\log^{10}N
\leq
\log^{100}N.
\]
Moreover,
\[
\max_{v\in V(H)}\widetilde h(v)
\leq
q\log N
=
\widetilde O(\log N).
\]

Thus $\widetilde h$ satisfies all assumptions of
\Cref{lem:gg-small-loss-a1}.
Applying that lemma gives a subset $U'\subseteq U$ satisfying the required two properties in $\widetilde O(\log N)$ rounds.
\end{proof}

\subsection{Reducing the Frontier Size}
\label{app:small-loss-a1}

The proof of \cite[Theorem~A.1]{ghaffari_grunau_focs24} consists of
two steps, captured by their Theorems~A.2 and A.3.
We use the following small-loss versions of these two results.

\begin{lemma}[Small-loss version of
{\cite[Theorem~A.2]{ghaffari_grunau_focs24}}]
\label{lem:gg-small-loss-a2}
There is an absolute constant $c$ for which the following holds.
Let $H$ be a graph, let $U\subseteq V(H)$, and let
$h:V(H)\to\mathbb N_{\geq0}$ be a head start function satisfying
$\max_{v\in V(H)}h(v)=\widetilde O(\log N)$.
Suppose that, for every $u\in U$, the
$\log^{10}\log N$-hop frontier of $u$ has size at most
$\log^{100}N$.
Then, for every constant $\rho>0$, there is a deterministic
$\widetilde O(\log N)$-round algorithm that computes a subset
$U'\subseteq U$ and a head start function
$h':V(H)\to\mathbb N_{\geq0}$ satisfying the following properties.
\begin{itemize}
    \item $|U\setminus U'|\leq\rho|U|$.
    \item $\max_{v\in V(H)}h'(v)=\widetilde O(\log N)$.
    \item For all $u\in U'$, the $\log^2\log N$-hop frontier of $u$
    with respect to $h'$ has size at most $(\log\log N)^c$.
\end{itemize}
\end{lemma}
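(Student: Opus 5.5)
The plan is to follow the proof of \cite[Theorem~A.2]{ghaffari_grunau_focs24} essentially verbatim and to change only where its constant loss appears. Their argument proceeds in a small number of phases. In each phase, every vertex $u\in U$ whose current $\log^{10}\log N$-hop frontier is large is handled by perturbing the head starts. Concretely, the head starts are modified deterministically, using a derandomized local sampling or hashing step that is computable in $\widetilde O(\log N)$ rounds. The perturbation is chosen so that, in expectation over the (derandomized) choice, the frontier of each $u$ shrinks: from size at most $\log^{100}N$ to size at most $(\log\log N)^c$, measured at the smaller radius $\log^2\log N$. The vertices for which this fails are removed from $U$. In their statement the total loss is bounded by half of $U$. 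I would locate each inequality contributing to that loss and reparametrize it so that the total loss is at most $\rho|U|$.

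The key steps, in order, are as follows.

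First, I would write the frontier-reduction step as a random experiment. Each vertex $v$ receives an independent extra head start $X_v$, a truncated geometric variable with scale $\Theta(\log^{2}\log N)$ and support $\widetilde O(1)$. The new head start is $h'(v)=K\cdot h(v)+X_v$ for a suitable scaling $K=\poly(\log\log N)$. The scaling preserves the ordering of large gaps, and the geometric shift separates the vertices in the old frontier.

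Second, I would prove the per-vertex failure bound. Fix $u\in U$ with old frontier $F(u)$ of size at most $\log^{100}N$. By memorylessness of the geometric shift, the number of vertices of $F(u)$ landing within $\log^2\log N$ of the new minimum is stochastically dominated by a binomial with mean $O(1)$ times the ratio of the two radii. Vertices outside $F(u)$ are pushed out of the new frontier by the scaling. Hence the probability that the new frontier exceeds $(\log\log N)^c$ is at most $\exp(-\Omega((\log\log N)^{c'}))$ plus the truncation probability. Both terms are $o(1)$, and so they are at most $\rho/2$ once $c$ is chosen large enough relative to $\rho$. This is where the original argument only needed failure probability $1/2$. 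Since $\rho$ is a constant, enlarging $c$, or repeating the experiment $O(\log(1/\rho))$ times and keeping the first success per vertex, drives it below $\rho$.

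Third, I would derandomize. The event "$u$ fails" depends only on the shifts of the vertices in $F(u)$, which lie within distance $\widetilde O(\log N)$ of $u$. I would therefore apply the same deterministic rounding or conditional-expectation machinery that Ghaffari and Grunau use, via their local derandomization with $\poly(\log\log N)$-wise independence. The pessimistic estimator is $\sum_{u\in U}\Pr[u\text{ fails}]\le(\rho/2)|U|$. The derandomized outcome loses at most a factor $1+o(1)$, giving $|U\setminus U'|\le\rho|U|$. Setting $U'$ to be the non-failing vertices and bounding $\max h'\le K\max h+\widetilde O(1)=\widetilde O(\log N)$ completes the three properties.

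I expect the main obstacle to be the second step in the presence of $k$-wise rather than full independence. The concentration bound on the new frontier size must survive limited independence with a failure probability that is an arbitrarily small constant rather than $1/2$. I would first try a moment bound of order $(\log\log N)^{\Theta(1)}$ on the count of surviving frontier vertices, with independence of that order. This costs only $\poly(\log\log N)$ in seed length and rounds, so it keeps the round complexity at $\widetilde O(\log N)$.
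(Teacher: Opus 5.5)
\textbf{There is a genuine gap.} Your high-level plan is the right one: run the proof of \cite[Theorem~A.2]{ghaffari_grunau_focs24} and find where the constant loss enters. However, the process you sketch is your own reconstruction rather than the cited proof, and it does not go through as written.

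The scaling $h'(v)=Kh(v)+X_v$ with $K=\poly(\log\log N)$ breaks your second step. Multiplying head starts by $K$ without multiplying graph distances changes which vertices minimize $\dist_H(u,v)-h'(v)$. For instance, suppose every vertex of $u$'s old frontier has head start $0$, and a vertex $w$ at distance $D=\log N$ has $h(w)=D-2\log^{10}\log N$. Then $w$ lies outside the old frontier. Yet under $Kh+X$ its shifted distance is about $-(K-1)D$, far below that of every old frontier vertex. Many such $w$ can be present, and the hypothesis says nothing about them. So the claim that vertices outside $F(u)$ are pushed out of the new frontier by the scaling is false. With $K=1$ it would hold, since a shift truncated at $\widetilde O(1)\ll\log^{10}\log N$ cannot bring an outsider within $\log^2\log N$ of the new minimum.

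Your amplification fallback is also not meaningful. The output $h'$ is one global function, and $u$'s frontier depends on everyone's head start, so different vertices cannot each ``keep the first success'' of different runs. In addition, choosing $c$ large relative to $\rho$ reverses the quantifier order of the statement, where $c$ is absolute and $\rho$ comes afterwards.

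More fundamentally, your third step is where the entire difficulty of \cite[Theorem~A.2]{ghaffari_grunau_focs24} lies, and you only assert it. A conditional-expectation argument on the global estimator $\sum_{u\in U}\Pr[u\text{ fails}]$ requires coordination across the whole graph, or a network decomposition, which is what is being built. Meanwhile each failure event depends on shifts up to distance $\widetilde O(\log N)$. Nothing in your proposal shows how to do this deterministically in $\widetilde O(\log N)$ rounds.

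The missing observation is that no reparametrization is needed at all. Their algorithm performs $O(\log^2\log N)$ iterations, and \cite[Claim~A.7]{ghaffari_grunau_focs24} gives $|U_i|\geq(1-2i/\log^{10}\log N)|U|$ after iteration $i$. The total discarded fraction is therefore already $O(1/\log^8\log N)=o(1)$, which is below any constant $\rho$ for large $N$. Their theorem simply records the weaker bound $0.1$, not $1/2$ as you assumed. The paper's proof consists exactly of this observation, applied to the unmodified algorithm and analysis.
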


\begin{proof}
We use exactly the algorithm and analysis of
\cite[Theorem~A.2]{ghaffari_grunau_focs24}.
The only difference between \Cref{lem:gg-small-loss-a2} and \cite[Theorem~A.2]{ghaffari_grunau_focs24} is that Ghaffari and Grunau fix $\rho = 0.1$.
In their notation, \cite[Claim~A.7]{ghaffari_grunau_focs24} shows that, after iteration $i$, the set
$U_i$ satisfies
\(|U_i|
\geq
\left(1-\frac{2i}{\log^{10}\log N}\right)|U|\).
The algorithm performs $O(\log^2\log N)$ iterations.
Hence the fraction of vertices discarded is already
$O(1/\log^8\log N)$, which is not just at most 0.1 but also at most any constant $\rho$, as required.
\end{proof}

\begin{lemma}[Small-loss version of
{\cite[Theorem~A.3]{ghaffari_grunau_focs24}}]
\label{lem:gg-small-loss-a3}
For every constant $c\geq1$, the following holds.
Let $H$ be a graph, let $U\subseteq V(H)$, and let
$h:V(H)\to\mathbb N_{\geq0}$ be a head start function satisfying
$\max_{v\in V(H)}h(v)=\widetilde O(\log N)$.
Suppose that, for every $u\in U$, the $\log^2\log N$-hop frontier of
$u$ has size at most $(\log\log N)^c$.
Then, for every constant $\rho>0$, there is a deterministic
$\widetilde O(\log N)$-round algorithm that computes a subset
$U'\subseteq U$ satisfying the following properties.
\begin{itemize}
    \item At most a $\rho$-fraction of $U$ is left unclustered:
    $|U\setminus U'|\leq\rho|U|$.
    \item Every connected component of $H[U']$ has strong diameter
    $O(\log N)$.
\end{itemize}
\end{lemma}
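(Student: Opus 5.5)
The plan is to follow the algorithm of \cite[Theorem~A.3]{ghaffari_grunau_focs24} essentially verbatim and to identify the single place where their analysis incurs a constant loss. We then drive that loss below $\rho$ by repeating the basic clustering step $O(\log(1/\rho))$ times, which is a constant number of times.

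Recall the structure of their argument. Each vertex $u\in U$ has a frontier of size at most $(\log\log N)^c$. The algorithm performs a derandomized MPX-style step: each potential center receives an additional small random shift. The shift is drawn from a distribution of range $O(\log^2\log N)$, and only $\poly(\log\log N)$-wise independence is needed, which makes the step derandomizable in $\widetilde O(\log N)$ rounds by the method of conditional expectations over cluster-local estimators. After the shift, each vertex joins the unique center minimizing its shifted distance. A vertex is declared \emph{bad} and unclustered if the best and second-best shifted values, taken over its frontier, are within $1$. Their analysis shows that the probability of a tie is at most a small constant, and the deterministic procedure guarantees that at most half of $U$ is unclustered.

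The first key step is to reparametrize this shift. Replace the tie-window-to-range ratio by $\rho/2$: let the shifts range over $\{0,\dots,R\}$ with $R=\Theta(\rho^{-1}\cdot\log^2\log N)$, or equivalently scale the existing head starts by a factor $\Theta(1/\rho)$ before adding shifts. The same exponential-shift calculation then bounds the probability that $u$ has two near-minimal candidates by $O(\rho)$. Because the frontier has only $(\log\log N)^c$ elements, the only other loss is the event that the shifts leave the $\log^2\log N$-hop frontier. That event has probability $(\log\log N)^{-\Omega(1)}=o(1)$ once $R$ stays well below $\log^2\log N$ times constants. To keep $R$ small enough, I would instead use an exponential distribution truncated at $\log^2\log N$ with rate $\Theta(\rho)$, which gives truncation probability $e^{-\Theta(\rho\log^2\log N)}=o(1)$.

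The second step is to derandomize. The expected number of bad vertices is at most $(\rho/2)|U|$, and the pessimistic estimator argument of Ghaffari--Grunau produces a deterministic choice achieving at most $\rho|U|$ bad vertices. The head starts grow only by a constant factor, so the maximum head start stays $\widetilde O(\log N)$, the cluster radius stays $O(\log N)$, and the running time stays $\widetilde O(\log N)$ for constant $\rho$. Strong diameter follows exactly as in their proof: the clustered vertices with the same center form a BFS-consistent set, and the removal of bad vertices separates distinct clusters. Hence the connected components of $H[U']$ are subsets of single clusters with strong diameter $O(\log N)$.

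The main obstacle is verifying that the derandomization and the strong-diameter or separation argument tolerate the rescaled shift distribution. The concern is that their bounded-independence analysis may hard-code the constant $1/2$ through the specific range of the shifts. If the rescaling breaks that analysis, the fallback is iteration. Apply their unmodified Theorem~A.3 to $U$, obtaining $U'_1$ with $|U\setminus U'_1|\le|U|/2$. Then rerun it on the unclustered remainder $U\setminus U'_1$, restricted to vertices non-adjacent to $U'_1$. Repeat this $\lceil\log_2(1/\rho)\rceil$ times, with each round's clusters separated from earlier ones by removing neighbors. The fallback requires checking that the frontier hypothesis is inherited by subsets, which holds since frontiers depend only on $h$ and $H$. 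It also requires controlling the separation losses, which I expect to be the delicate part.
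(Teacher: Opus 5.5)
There is a genuine gap. Your primary plan describes \cite[Theorem~A.3]{ghaffari_grunau_focs24} as a derandomized random-shift MPX step whose constant loss is a tie probability. That is not how the result is proved: it is a direct application of their Theorem~A.11 to the head-start clustering. Two steps you then take for granted fail.

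\textbf{Derandomization.} A vertex's badness depends on the shifts of frontier centers that may lie at distance $\widetilde O(\log N)$. The known deterministic versions of a single MPX-style clustering take $\widetilde O(\log^2 N)$ rounds. Nothing in your sketch explains how the small frontier lets you fix the shifts in $\widetilde O(\log N)$ rounds, and there is no ready-made Ghaffari--Grunau pessimistic estimator you can reuse with rescaled parameters.

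\textbf{Diameter.} The hypothesis only gives $\max h=\widetilde O(\log N)$, so head-start clusters have radius $\widetilde O(\log N)$, not $O(\log N)$. Moreover, since $U'\subseteq U$, the shortest path from a vertex of $U'$ to its center may pass through vertices outside $U$ or through bad vertices. So a component of $H[U']$ inside one cluster can have strong diameter far larger than the cluster's radius; ``BFS-consistency'' does not survive restriction to $U'$.

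Your fallback fails exactly where you suspected. The unmodified Theorem~A.3 gives no control over how many vertices of $U\setminus U'_1$ are adjacent to $U'_1$, and it could be all of them. If you delete these neighbors, the next round may have nothing left and you stay at loss $1/2$. If you do not, clusters from different rounds merge into long components. The missing ingredient is a \emph{small-boundary} extraction, which the paper obtains as follows.
\begin{enumerate}
\item The head-start clustering $\mathcal C_h$ has $s$-hop degree at most $\mathrm{DEG}=(\log\log N)^c$ at every $u\in U$, with $s=\lfloor\log^2\log N/3\rfloor$. This is because the center of any cluster within distance $s$ of $u$ lies in the $2s$-hop frontier of $u$.
\item Restrict $\mathcal C_h$ to $U$ and apply a small-loss version of Theorem~A.11. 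Each round takes the $s$-separated clustering from \cite[Lemma~A.12]{ghaffari_grunau_focs24} and grows balls until $|C'_{\le k+1}|\le(1+\delta)|C'_{\le k}|$. This is affordable because $(1+\delta)^{-\lfloor s/3\rfloor}\le 1/(16\mathrm{DEG})$, and it bounds the boundary by $\delta|X|$.
\item Iterating $O(\mathrm{DEG}\log(1/\rho))$ rounds with neighbor deletion therefore loses only about a $\rho/2$ fraction.
\item Strong diameter $O(\log N)$ then comes from applying, inside each cluster of weak diameter $\widetilde O(\log N)$, the standard decomposition that removes a $\rho/2$ fraction of vertices and leaves components of strong diameter $O(\log N/\rho)$.
\end{enumerate}
Your iterate-and-delete-neighbors loop is the right outer structure; the $(1+\delta)$ ball growing is what makes it work.
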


We prove \Cref{lem:gg-small-loss-a3} in \Cref{app:small-loss-a3}.
We can now complete the proof of the small-loss version of
\cite[Theorem~A.1]{ghaffari_grunau_focs24}.

\begin{proof}[Proof of \Cref{lem:gg-small-loss-a1}]
Apply \Cref{lem:gg-small-loss-a2} with parameter $\rho/2$.
This gives a subset $U_1\subseteq U$ and a head start function $h'$
such that $|U\setminus U_1|\leq(\rho/2)|U|$, and the
$\log^2\log N$-hop frontier of every $u\in U_1$ with respect to $h'$
has size at most $(\log\log N)^c$.

We can therefore apply \Cref{lem:gg-small-loss-a3} to $U_1$ and $h'$,
again with parameter $\rho/2$.
Let $U'\subseteq U_1$ be the resulting set.
Then
\(
|U'|
\geq
\left(1-\frac{\rho}{2}\right)|U_1|
\geq
\left(1-\frac{\rho}{2}\right)^2|U|
\geq
(1-\rho)|U|\).
Moreover, every connected component of $H[U']$ has strong diameter
$O(\log N)$.
Both applications take $\widetilde O(\log N)$ rounds, so the total
round complexity is also $\widetilde O(\log N)$.
\end{proof}

\subsection{From Small Frontiers to Low-Diameter Clusters}
\label{app:small-loss-a3}

The proof of \cite[Theorem~A.3]{ghaffari_grunau_focs24} is a direct
application of their Theorem~A.11.
Before stating the small-loss version that we need, we introduce two
notions used in that theorem.

\paragraph{Clustering terminology.}
For a vertex $u$ and a cluster $C$ of $H$, let
\[
\dist_H(u,C)=\min_{v\in C}\dist_H(u,v).
\]
For two clusters $C$ and $C'$, let
\[
\dist_H(C,C')
=
\min_{u\in C,\;v\in C'}\dist_H(u,v).
\]
A clustering $\mathcal C$ is \emph{$s$-separated} if
$\dist_H(C,C')\geq s$ for every two distinct clusters
$C,C'\in\mathcal C$.

For a vertex $u$, the \emph{$s$-hop degree} of $u$ with respect to
$\mathcal C$ is the number of clusters $C\in\mathcal C$ satisfying
$\dist_H(u,C)\leq s$.
The \emph{$s$-hop degree} of $\mathcal C$ is the maximum $s$-hop
degree over all vertices clustered by $\mathcal C$.
These definitions agree with
\cite[Definition~A.2]{ghaffari_grunau_focs24}.

We only need Theorem~A.11 in the particular parameter regime arising
in the proof of Theorem~A.3, so we state its small-loss version directly
in this regime.

\begin{lemma}[Small-loss version of
{\cite[Theorem~A.11]{ghaffari_grunau_focs24}}]
\label{lem:gg-small-loss-a11}
For every constant $c\geq1$, the following holds. Set
\[
s=\left\lfloor\frac{\log^2\log N}{3}\right\rfloor
\qquad\text{and}\qquad
\mathrm{DEG}=(\log\log N)^c.
\]
Let $\mathcal C$ be a clustering of a graph $H$ with weak diameter
$\widetilde O(\log N)$ and $s$-hop degree at most $\mathrm{DEG}$.
Let $R$ be the set of vertices clustered by $\mathcal C$.
For every constant $\rho>0$, there is a deterministic
$\widetilde O(\log N)$-round algorithm that computes a clustering
$\mathcal C_{\mathrm{out}}$ satisfying the following properties.
\begin{itemize}
    \item Every cluster has strong diameter $O(\log N)$.
    \item The clustering is $2$-separated.
    \item All vertices clustered by $\mathcal C_{\mathrm{out}}$
    belong to $R$, and at least $(1-\rho)|R|$ vertices of $R$ are
    clustered by $\mathcal C_{\mathrm{out}}$.
\end{itemize}
\end{lemma}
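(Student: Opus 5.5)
We prove \Cref{lem:gg-small-loss-a11} in the same way as \Cref{lem:gg-small-loss-a2}: run the algorithm of \cite[Theorem~A.11]{ghaffari_grunau_focs24} essentially unchanged and track where the constant-fraction loss enters its analysis. That theorem turns a clustering with small $s$-hop degree into a $2$-separated clustering of strong diameter $O(\log N)$ that covers at least half of $R$. Its loss comes from a ball-growing step. Each cluster (or merged group of clusters) expands or contracts layer by layer in the $s$-hop neighborhood, and it stops at a radius where the boundary layer is small compared with the interior. Vertices on that boundary are unclustered to obtain separation. The number of layers available is $\Theta(s)=\Theta(\log^2\log N)$, and the growth is multiplicative, capped by the $s$-hop degree $\mathrm{DEG}=(\log\log N)^c$ and by the weak-diameter bound.

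The first step is to locate the single place where their analysis picks a radius with boundary at most a $\delta$-fraction of the interior. Here $\delta$ is set so that the total discarded mass is at most $|R|/2$. We replace this target $\delta$ by $\delta'=\Theta(\rho)$, and then redo the pigeonhole/averaging argument over the $\Theta(s)$ candidate radii. The argument is that if every one of $k$ consecutive radii had boundary exceeding a $\delta'$-fraction, the covered mass would grow by a factor $(1+\delta')^k$. With $k=\Theta(\log^2\log N)$ and $\delta'$ constant, this factor exceeds $\mathrm{DEG}\cdot\poly\log N$ bounds only if $k\geq C_\rho\log\log N$ for a constant $C_\rho$. The degree bound controls how many clusters can be absorbed, and the multiplicative charging is over $\widetilde O(\log N)$-sized structures only through $\log$ factors. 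Since $s$ is $\log^2\log N$, there is ample room to make this hold.

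The second step covers any remaining sources of loss. These include vertices removed because a cluster exceeds the diameter budget, and the conflicts resolved between groups that share the $s$-hop neighborhood. For each, I would check that the bound is either already $o(1)$, as in Claim~A.7, or is governed by the same $\delta'$. The total loss is then a constant multiple of $\rho$, and rescaling $\rho$ gives the statement. The strong-diameter bound changes by at most a factor depending on $\rho$, because the number of growth steps becomes $O_\rho(\cdot)$, so it stays $O(\log N)$. The round complexity likewise stays $\widetilde O(\log N)$ for constant $\rho$.

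The main obstacle is the first step. I must confirm that the $1/2$ in their theorem truly comes from a tunable boundary-to-interior threshold, and not from a structural step such as a fixed $2$-coloring of clusters where one color class is discarded. If it is such a step, I would try iterating. Rerun the procedure on the discarded half $R\setminus R_{\mathrm{out}}$, restricted to vertices at distance at least $2$ from the clusters already formed, for $\log_2(1/\rho)$ rounds, taking a union of $2$-separated outputs. This may need a deferral to keep the outputs separated: discard the vertices adjacent to earlier clusters and charge them to the boundary argument above.
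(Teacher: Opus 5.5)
Your headline change, making the boundary-to-interior threshold in the ball-growing step a small constant $\delta=\Theta(\rho)$, is the paper's modification (\Cref{lem:gg-small-loss-a14}, growth factor $1+\delta$ with $\delta=\rho/4$). But the loss does not sit in a single tunable place, and the argument as written does not close.

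\emph{The radius pigeonhole.} You never say what the growth $(1+\delta')^k$ contradicts, and there is no per-cluster cap: in a general graph some clusters have no good radius. The paper first applies \cite[Lemma~A.12]{ghaffari_grunau_focs24} to get an $s$-separated subclustering covering at least $|R|/(8\mathrm{DEG})$ vertices. Separation makes the balls $C'_{\le\lfloor s/3\rfloor}$ pairwise disjoint. Hence the clusters with no radius $k\le\lfloor s/3\rfloor$ satisfying $|C'_{\le k+1}|\le(1+\delta)|C'_{\le k}|$ have total mass at most $(1+\delta)^{-\lfloor s/3\rfloor}|R|\le|R|/(16\mathrm{DEG})$, and they are simply discarded.

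\emph{The iteration count.} As a consequence, one extraction clusters only an $\Omega(1/\mathrm{DEG})$ fraction. Shrinking $\delta$ alone therefore leaves a constant-fraction residual unless the number of extractions also grows with $1/\rho$. The paper iterates $T=\lceil 16\mathrm{DEG}\ln(4/\rho)\rceil$ times, each time deleting the extracted vertices together with their neighbors. It charges the deleted neighbors to the $\delta$-boundary, obtaining $|R|\le(1+\delta)|V(\mathcal C_T^{\mathrm{out}})|+|V(\mathcal C_T)|$ with $|V(\mathcal C_T)|\le(\rho/4)|R|$. Your fallback iteration, with adjacent vertices charged to the boundary, has the right shape. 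However, what must be iterated is the small-boundary extraction, which captures only an $\Omega(1/\mathrm{DEG})$ fraction per round. So $\Theta(\mathrm{DEG}\log(1/\rho))$ rounds are needed, not $\log_2(1/\rho)$, and this iteration is a necessary part of the argument rather than an optional fallback.

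\emph{The strong diameter.} Your justification here is incorrect. The ball-growing radius ranges up to $\lfloor s/3\rfloor=\Theta(\log^2\log N)$ regardless of $\rho$. The resulting clusters, grown from clusters of weak diameter $\widetilde O(s\log N)$ given by \cite[Lemma~A.12]{ghaffari_grunau_focs24}, only have \emph{weak} diameter $\widetilde O(\log N)$. Strong diameter $O(\log N)$ requires a separate carving step inside each cluster: remove a $\beta$-fraction of the vertices so that the remaining components have strong diameter $O(\log N/\beta)$. This is a third constant-fraction loss that must be tuned; the paper takes $\beta=\rho/2$, leaving at least $(1-\rho/2)^2|R|\ge(1-\rho)|R|$ clustered vertices. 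Unless the iteration count and this carving parameter are treated as separate knobs alongside $\delta$, the claim that the total loss is $O(\rho)$ does not follow from tuning $\delta'$ alone.
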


The proof of \Cref{lem:gg-small-loss-a11} is given in
\Cref{app:small-loss-a11}.
We first show that it implies \Cref{lem:gg-small-loss-a3}.

\begin{proof}[Proof of \Cref{lem:gg-small-loss-a3}]
We follow the proof of
\cite[Theorem~A.3]{ghaffari_grunau_focs24}.
Let $\mathcal C_h$ be the head start clustering obtained by assigning
every vertex $u\in V(H)$ to a vertex $v\in V(H)$ minimizing
$\dist_H(u,v)-h(v)$, with ties broken consistently.
Since $\max_{v\in V(H)}h(v)=\widetilde O(\log N)$,
$\mathcal C_h$ has strong diameter $\widetilde O(\log N)$.

Set $s$ and $\mathrm{DEG}$ as in the statement of
\Cref{lem:gg-small-loss-a11}.
We claim that every vertex $u\in U$ has $s$-hop degree at most
$\mathrm{DEG}$ with respect to $\mathcal C_h$.

To see this, consider a cluster of $\mathcal C_h$ with center $v$
whose distance from $u$ is at most $s$.
Let $x$ be a vertex of this cluster with $\dist_H(u,x)\leq s$, and
let $w$ minimize $\dist_H(u,w)-h(w)$.
Then
\begin{align*}
\dist_H(u,v)-h(v)
&\leq
\dist_H(u,x)+\dist_H(x,v)-h(v)
&& \text{by the triangle inequality} \\
&\leq
\dist_H(u,x)+\dist_H(x,w)-h(w)
&& \text{since $x$ is assigned to $v$} \\
&\leq
\dist_H(u,x)+\dist_H(u,w)+\dist_H(u,x)-h(w)
&& \text{by the triangle inequality} \\
&\leq
2s+\dist_H(u,w)-h(w)
&& \text{since $\dist_H(u,x)\leq s$}.
\end{align*}
Thus the center $v$ belongs to the $2s$-hop frontier of $u$.
Since $2s\leq\log^2\log N$, every cluster within distance $s$ of $u$
has a distinct center in the $\log^2\log N$-hop frontier of $u$.
By the assumption of \Cref{lem:gg-small-loss-a3}, this frontier has
size at most $(\log\log N)^c=\mathrm{DEG}$.
Hence the $s$-hop degree of $u$ is at most $\mathrm{DEG}$.

Let $\mathcal C$ be obtained from $\mathcal C_h$ by retaining only
the vertices in $U$ in each cluster.
Then $\mathcal C$ clusters exactly $U$, has weak diameter
$\widetilde O(\log N)$, and has $s$-hop degree at most
$\mathrm{DEG}$.
We may therefore apply \Cref{lem:gg-small-loss-a11} to $\mathcal C$
with parameter $\rho$.

Let $U' \subseteq U$ be the set of vertices clustered by the resulting clustering
$\mathcal C_{\mathrm{out}}$.
By  \Cref{lem:gg-small-loss-a11}, 
$|U\setminus U'|\leq\rho|U|$, and $\mathcal C_{\mathrm{out}}$ is $2$-separated, so the
connected components of $H[U']$ are precisely its clusters and hence
have strong diameter $O(\log N)$.
The round complexity is $\widetilde O(\log N)$, as required.
\end{proof}

\subsection{Iterative Small-Boundary Clustering}
\label{app:small-loss-a11}

We now prove \Cref{lem:gg-small-loss-a11}.
The proof follows
\cite[Theorem~A.11]{ghaffari_grunau_focs24}.
The main modification is in
\cite[Lemma~A.14]{ghaffari_grunau_focs24}, which extracts a reasonably
large clustering whose boundary is at most a fixed constant fraction
of its size.
We show that this constant can be made arbitrarily small.
The preceding subsampling step,
\cite[Lemma~A.12]{ghaffari_grunau_focs24}, is used without any
modification. 

\begin{lemma}[Small-loss version of
{\cite[Lemma~A.14]{ghaffari_grunau_focs24}}]
\label{lem:gg-small-loss-a14}
For all constants $c\geq1$ and $\delta>0$, the following holds.
Set
\[
s=\left\lfloor\frac{\log^2\log N}{3}\right\rfloor
\qquad\text{and}\qquad
\mathrm{DEG}=(\log\log N)^c.
\]
Let $\mathcal C$ be a clustering of a graph $H$ with weak diameter
$\widetilde O(\log N)$ and $s$-hop degree at most $\mathrm{DEG}$,
and let $R$ be the set of vertices clustered by $\mathcal C$.
There is a deterministic $\widetilde O(\log N)$-round algorithm that
computes a $2$-separated clustering $\mathcal D$ using only vertices
of $R$.
Let $X\subseteq R$ be the set of vertices clustered by $\mathcal D$.
Then the following properties hold.
\begin{itemize}
    \item $|X|\geq |R|/(16\mathrm{DEG})$.
    \item Every cluster of $\mathcal D$ has weak diameter
    $\widetilde O(\log N)$.
    \item At most $\delta|X|$ vertices of $R\setminus X$ have a
    neighbor in $X$.
\end{itemize}
\end{lemma}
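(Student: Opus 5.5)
The plan is to rerun the proof of \cite[Lemma~A.14]{ghaffari_grunau_focs24} with the constant boundary ratio replaced by $\delta$. That proof is a ball-growing argument carried out in parallel over the clusters of a sparse subfamily. The only place where the fixed constant enters is the number of growth steps needed before a ratio test must succeed, and this number will simply grow by a factor depending on $\delta$.

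\textbf{Step 1 (selecting a sparse subfamily).} As in their proof, I would first use the $s$-hop degree bound $\mathrm{DEG}$ to choose deterministically a subfamily $\mathcal C^\ast\subseteq\mathcal C$. The subfamily should be pairwise far apart, at distance more than the total growth radius used below, and it should cover at least $|R|/(8\mathrm{DEG})$ vertices of $R$. The natural tool is a coloring or ruling-set computation on the cluster graph whose edges join clusters at distance $\leq s$, which has degree at most $\mathrm{DEG}$. Simulating this computation costs $\widetilde O(\log N)$ rounds because clusters have weak diameter $\widetilde O(\log N)$. I would then take the heaviest color class, up to the constant slack present in their argument. This step does not depend on $\delta$, so I would reuse it verbatim.

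\textbf{Step 2 (growing with ratio $\delta$).} For each $C\in\mathcal C^\ast$, define $X_0=C$ and let $X_j$ consist of the vertices of $R$ lying within distance $j$ of $C$ and assigned to clusters of $\mathcal C$ close to $C$. Stop at the first $j$ with $|X_{j+1}\setminus X_j|\leq\delta|X_j|$, and output $X_j$. After stopping, the vertices of $R\setminus X$ adjacent to $X$ lie in $X_{j+1}\setminus X_j$, which gives the third property directly. If the test never succeeds, then $|X_j|\geq(1+\delta)^j|C|$. Since only the $\leq\mathrm{DEG}$ clusters within distance $s$ can be absorbed, while each has size at most $\mathrm{poly}(N)$, the growth is bounded. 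To pass from this bound to a termination guarantee, I would use the known fact from their proof that the size grows at most by a factor $\mathrm{DEG}$ in total. With that, the test must succeed within $j^\ast=O(\delta^{-1}\log\mathrm{DEG})=O(\delta^{-1}c\log\log\log N)$ steps. For constant $\delta$ this is far below $s=\Theta(\log^2\log N)$, so the required separation still fits within the $s$-hop neighbourhood. The resulting clusters are $2$-separated because their grown regions were chosen at distance more than $2j^\ast+2\leq s$. Weak diameter rises only by $O(j^\ast)$, and $X\supseteq\bigcup_{C\in\mathcal C^\ast}C$ gives $|X|\geq|R|/(8\mathrm{DEG})\geq|R|/(16\mathrm{DEG})$.

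\textbf{Main obstacle.} The difficulty is making the growth argument rigorous in the restricted regime. The grown sets must use only vertices of $R$. The growth factor must be bounded by something polylogarithmic, namely $\mathrm{DEG}$ times the relevant cluster sizes, rather than by $N$. If it were bounded only by $N$, the number of steps would be $\delta^{-1}\log N$ and would exceed $s$. My first approach is to measure growth in units of whole clusters of $\mathcal C$: grow by adding entire clusters of $\mathcal C$ that meet the boundary, in the style of weighted ball carving where each cluster's weight is its size, and stop when the added weight is at most $\delta$ times the current weight. The number of clusters within distance $s$ is at most $\mathrm{DEG}$. Hence a non-stopping run of $j$ steps either exhausts these clusters, in which case there is no boundary in $R$ and the test trivially succeeds, or reaches $(1+\delta)^j$ times the weight. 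To bound this ratio by $\mathrm{DEG}$, I would fall back on the ratio bound that their Lemma~A.14 already establishes and replace their constant by $\delta$.
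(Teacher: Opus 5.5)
Your Step~2 has a genuine gap. It needs \emph{every} selected cluster to pass the ratio test $|X_{j+1}\setminus X_j|\le\delta|X_j|$ within $j^\ast=O(\delta^{-1}\log\mathrm{DEG})$ steps. That rests on a per-cluster claim, ``the size grows at most by a factor $\mathrm{DEG}$ in total,'' which is false and is not established in Ghaffari--Grunau's Lemma~A.14.

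The $s$-hop degree bound limits how many clusters lie near each \emph{vertex}. It does not bound how fast the ball in $R$ around a cluster grows. For example, let $H$ contain a complete binary tree of depth $\log N$ whose root is a singleton cluster and whose other vertices form one cluster. Every $s$-hop degree is at most $2$ and the weak diameter is $O(\log N)$. Yet the ball around the root doubles at every radius, so for $\delta<1$ the test fails at every $k\le s/3$.

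Your weighted whole-cluster variant does not rescue this. Sizes of adjacent clusters are unrelated, and the number of clusters near a whole cluster is not bounded by $\mathrm{DEG}$, so the per-step ratio can still be $\poly(N)$. The same vertex-versus-cluster confusion affects your Step~1 claim that the cluster graph has degree at most $\mathrm{DEG}$. This is why the paper invokes Ghaffari--Grunau's Lemma~A.12 as a black box, obtaining an $s$-separated clustering $\mathcal C'$ covering at least $|R|/(8\mathrm{DEG})$ vertices, rather than a colour class of $\mathcal C$.

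The missing idea is to give up on per-cluster success and argue globally. Let $C'_{\le k}=\{v\in R:\dist_H(v,C')\le k\}$ and search for some $k\le\lfloor s/3\rfloor$ with $|C'_{\le k+1}|\le(1+\delta)|C'_{\le k}|$. A cluster with no such $k$ satisfies $|C'|<(1+\delta)^{-\lfloor s/3\rfloor}|C'_{\le\lfloor s/3\rfloor}|$. Because $\mathcal C'$ is $s$-separated, these balls are pairwise disjoint subsets of $R$. Hence all failing clusters together contain at most $(1+\delta)^{-\lfloor s/3\rfloor}|R|\le|R|/(16\mathrm{DEG})$ vertices, using $s=\Theta(\log^2\log N)$ against $\mathrm{DEG}=(\log\log N)^c$.

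Discard the failing clusters and output $C'_{\le k}$ for the rest. The survivors cover at least $|R|/(8\mathrm{DEG})-|R|/(16\mathrm{DEG})=|R|/(16\mathrm{DEG})$ vertices, which is exactly where the $16$ in the statement comes from. Your claimed $|X|\ge|R|/(8\mathrm{DEG})$ was a symptom that this loss was unaccounted for. Your intuition that $\delta^{-1}\log\mathrm{DEG}\ll s$ is the right quantitative relation. However, it enters only through this averaging over disjoint balls, not as a termination time for each cluster.
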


\begin{proof}
We follow the proof of
\cite[Lemma~A.14]{ghaffari_grunau_focs24}.
First apply \cite[Lemma~A.12]{ghaffari_grunau_focs24}:
using $\widetilde O(s\log^2(\mathrm{DEG})\log N)$ rounds, it computes an
$s$-separated clustering $\mathcal C'$ of weak diameter
$\widetilde O(s\log N)$ that clusters at least
$|R|/(8\mathrm{DEG})$ vertices, all belonging to $R$.
For each $C'\in\mathcal C'$ and $k\geq0$, define
\[
C'_{\leq k}
=
\left\{
v\in R:\dist_H(v,C')\leq k
\right\}.
\]
For each $C'$, we look for a radius
$k\leq\lfloor s/3\rfloor$ satisfying
\[
|C'_{\leq k+1}|
\leq
(1+\delta)|C'_{\leq k}|.
\]
If no such radius exists, then
\[
|C'_{\leq\lfloor s/3\rfloor}|
>
(1+\delta)^{\lfloor s/3\rfloor}|C'|.
\]
Since $\mathcal C'$ is $s$-separated, the sets
$C'_{\leq\lfloor s/3\rfloor}$ are pairwise disjoint.
Hence the total number of vertices belonging to clusters $C'$ for
which no suitable radius exists is at most
\[
(1+\delta)^{-\lfloor s/3\rfloor}|R|
\leq
\frac{|R|}{16\mathrm{DEG}},
\]
where the inequality follows from
$s=\Theta(\log^2\log N)$ and
$\mathrm{DEG}=(\log\log N)^c$.

Since $\mathcal C'$ initially clusters at least
$|R|/(8\mathrm{DEG})$ vertices, after discarding the clusters for
which no suitable radius exists, at least
$|R|/(16\mathrm{DEG})$ vertices remain.
For every remaining cluster $C'$, choose such a radius $k$ and output
$C'_{\leq k}$.
These output clusters remain $2$-separated and have weak diameter
$\widetilde O(\log N)$.

Finally, every vertex of $R$ outside the output clustering that is
adjacent to $C'_{\leq k}$ belongs to
$C'_{\leq k+1}\setminus C'_{\leq k}$.
By the choice of $k$, this set has size at most
$\delta|C'_{\leq k}|$.
Summing over all output clusters shows that at most $\delta|X|$
vertices of $R\setminus X$ are adjacent to $X$.

The only modification from the proof of
\cite[Lemma~A.14]{ghaffari_grunau_focs24} is replacing its fixed
growth factor by $1+\delta$; all other steps are unchanged.
The call to Lemma~A.12 takes
$\widetilde O(s\log^2(\mathrm{DEG})\log N)$ rounds, while the
subsequent ball growing and carving take
$\widetilde O(s\log N) + O(s)$ rounds.
By the choice of  $s$ and $\mathrm{DEG}$, the total round complexity
is $\widetilde O(\log N)$.
\end{proof}

\begin{proof}[Proof of \Cref{lem:gg-small-loss-a11}]
It suffices to consider $0<\rho<1$.
Set $\delta=\rho/4$.
We follow the iterative construction in the proof of
\cite[Theorem~A.11]{ghaffari_grunau_focs24}.

Let $\mathcal C_0=\mathcal C$ and let
$\mathcal C_0^{\mathrm{out}}$ be empty.
In iteration $i$, apply \Cref{lem:gg-small-loss-a14} to
$\mathcal C_i$, obtaining a $2$-separated clustering
$\mathcal D_i$.
Add the clusters of $\mathcal D_i$ to
$\mathcal C_i^{\mathrm{out}}$ to obtain
$\mathcal C_{i+1}^{\mathrm{out}}$.
Then obtain $\mathcal C_{i+1}$ from $\mathcal C_i$ by removing all
vertices clustered by $\mathcal D_i$, together with all remaining
vertices that have a neighbor in one of these clusters.

For a clustering $\mathcal A$, write $V(\mathcal A)$ for the set of
vertices clustered by $\mathcal A$.
By \Cref{lem:gg-small-loss-a14},
\[
|V(\mathcal C_{i+1}^{\mathrm{out}})|
-
|V(\mathcal C_i^{\mathrm{out}})|
\geq
\frac{|V(\mathcal C_i)|}{16\mathrm{DEG}}.
\]
All newly clustered vertices are removed from $\mathcal C_i$.
Therefore,
\[
|V(\mathcal C_i)|-|V(\mathcal C_{i+1})|
\geq
|V(\mathcal C_{i+1}^{\mathrm{out}})|
-
|V(\mathcal C_i^{\mathrm{out}})|,
\]
and hence
\[
|V(\mathcal C_{i+1})|
\leq
\left(
1-\frac{1}{16\mathrm{DEG}}
\right)
|V(\mathcal C_i)|.
\]

Deleting vertices from the input clustering cannot increase its weak
diameter or its $s$-hop degree, so
\Cref{lem:gg-small-loss-a14} remains applicable in every iteration.
Also, before proceeding to the next iteration we remove all vertices
adjacent to the newly extracted clusters.
Thus clusters extracted in different iterations are non-adjacent,
and $\mathcal C_i^{\mathrm{out}}$ remains $2$-separated.

Run the procedure for
\[
T=
\left\lceil
16\mathrm{DEG}\ln\frac{4}{\rho}
\right\rceil
\]
iterations.
Then
\[
|V(\mathcal C_T)|
\leq
\left(
1-\frac{1}{16\mathrm{DEG}}
\right)^T
|R|
\leq
\frac{\rho}{4}|R|.
\]
Across all iterations, the total number of vertices discarded because
they are adjacent to extracted clusters is at most
$\delta|V(\mathcal C_T^{\mathrm{out}})|$.
Consequently,
\[
|R|
\leq
(1+\delta)|V(\mathcal C_T^{\mathrm{out}})|
+
|V(\mathcal C_T)|,
\]
and therefore, using $\delta=\rho/4$,
\[
|V(\mathcal C_T^{\mathrm{out}})|
\geq
\frac{1-\rho/4}{1+\rho/4}|R|
\geq
\left(1-\frac{\rho}{2}\right)|R|.
\]
Thus $\mathcal C_T^{\mathrm{out}}$ is a $2$-separated clustering of
weak diameter $\widetilde O(\log N)$ that clusters at least a
$(1-\rho/2)$-fraction of $R$.

It remains only to obtain strong diameter.
We use the well-known fact that, for every
graph on at most $N$ vertices and every $\beta>0$, one can remove at
most a $\beta$-fraction of the vertices so that every remaining
connected component has strong diameter
$O\left(\frac{\log N}{\beta}\right)$; see, e.g.,
\cite{ChangLi,ElkinN16}.
Apply this fact to the subgraph induced by each cluster of
$\mathcal C_T^{\mathrm{out}}$, with $\beta=\rho/2$.
Since these clusters have weak diameter $\widetilde O(\log N)$, each
induced subgraph can be gathered and the decomposition computed
locally in $\widetilde O(\log N)$ rounds, with all clusters processed
in parallel.
The resulting clustering remains $2$-separated, has strong diameter
$O(\log N)$, and clusters at least
$\left(1-\frac{\rho}{2}\right)
|V(\mathcal C_T^{\mathrm{out}})|
\geq
\left(1-\frac{\rho}{2}\right)^2|R|
\geq
(1-\rho)|R|$
vertices.

Since $T=O(\mathrm{DEG})$ and
$\mathrm{DEG}=(\log\log N)^c$, the total round complexity remains
$\widetilde O(\log N)$.
\end{proof}

\end{document}